\newif\ifcomments
\newif\ifchanges
\begin{document}

\newcommand  {\myclass} [1]  {\ensuremath{\textsc{#1}}}

\newcommand{\StaClass}[1]{\myclass{#1}\xspace}

\newcommand{\DynClass}[1]{\myclass{Dyn#1}\xspace}
\newcommand{\dDynClass}[1]{\myclass{$\Delta$-Dyn#1}\xspace}

\newcommand  {\myproblem} [1] {\textsc{#1}}

\newcommand{\problemIndent}{\hspace{5mm}}
\newcommand  {\problemdescr} [3] {
    \vspace{3mm}
    \def\Name{#1}
    \def\Input{#2}
    \def\Question{#3}
     \problemIndent\begin{tabular}{r p{\columnWidth}r}      \textit{Problem:} & \myproblem{\Name} \\
      \textit{Input:} & \Input \\
      \textit{Question:} & \Question
     \end{tabular}
    \vspace{3mm}
    }

\newcommand  {\querydescr} [3] {
\vspace{3mm}
\def\Name{#1}
\def\Input{#2}
\def\Question{#3}
  \problemIndent\begin{tabular}{r p{\columnWidth}r}  \textit{Query:} & \myproblem{\Name} \\
  \textit{Input:} & \Input \\
  \textit{Question:} & \Question
  \end{tabular}
\vspace{3mm}
}

\newcommand  {\dynproblemdescr} [4] {
    \vspace{3mm}
    \def\Name{#1}
    \def\Input{#2}
    \def\Updates{#3}  
    \def\Question{#4}
    \problemIndent\begin{tabular}{r p{\columnWidth}r}      \textit{Query:} & \myproblem{\Name} \\
      \textit{Input:} & \Input \\
      \textit{Question:} & \Question
    \end{tabular}
    \vspace{3mm}
}
\newcommand{\dynProbDescr}[4]{\dynproblemdescr{#1}{#2}{#3}{#4}}

\newcommand  {\problem}[1] {\myproblem{#1}}

\newcommand{\dynProb}[1] {\myproblem{Dyn(#1)}}

\newcommand{\class}{\calC}

\newcommand  {\TIME}    {\myclass{TIME}}
\newcommand  {\DTIME}   {\myclass{DTIME}}
\newcommand  {\NTIME}   {\myclass{NTIME}}
\newcommand  {\ATIME}   {\myclass{ATIME}}
\newcommand  {\SPACE}   {\myclass{SPACE}}
\newcommand  {\DSPACE}   {\myclass{DSPACE}}
\newcommand  {\NSPACE}  {\myclass{NSPACE}}
\newcommand  {\coNSPACE}        {\myclass{coNSPACE}}

\newcommand     {\LOGCFL}     {\myclass{LOGCFL}}
\newcommand     {\LOGDCFL}     {\myclass{LOGDCFL}}
\newcommand     {\LOGSPACE}     {\myclass{LOGSPACE}}
\newcommand     {\NLOGSPACE}     {\myclass{NLOGSPACE}}
\newcommand     {\classL}   {\myclass{L}}
\newcommand     {\NL}   {\myclass{NL}}
\newcommand     {\coNL}   {\myclass{coNL}}
\renewcommand   {\P}    {\myclass{P}}
\newcommand     {\myP}    {\myclass{P}}
\newcommand     {\PTIME}    {\myclass{PTIME}}
\newcommand     {\NP}   {\myclass{NP}}
\newcommand     {\NPC}   {\myclass{NPC}}
\newcommand     {\PH}   {\myclass{PH}}
\newcommand     {\coNP} {\myclass{coNP}}
\newcommand     {\NPSPACE}      {\myclass{NPSPACE}}
\newcommand     {\PSPACE}       {\myclass{PSPACE}}
\newcommand     {\IP}   {\myclass{IP}}
\newcommand     {\POLYLOGSPACE} {\myclass{POLYLOGSPACE}}
\newcommand     {\DET}  {\myclass{DET}}
\newcommand     {\EXP}  {\myclass{EXP}}
\newcommand     {\NEXP}  {\myclass{NEXP}}
\newcommand     {\EXPTIME}  {\myclass{EXPTIME}}
\newcommand     {\TWOEXPTIME}  {\myclass{2-EXPTIME}}
\newcommand     {\TWOEXP}  {\myclass{2-EXP}}
\newcommand     {\NEXPTIME}  {\myclass{NEXPTIME}}
\newcommand     {\coNEXPTIME}  {\myclass{coNEXPTIME}}
\newcommand     {\EXPSPACE}  {\myclass{EXPSPACE}}
\newcommand     {\RP}   {\myclass{RP}}
\newcommand     {\RL}   {\myclass{RL}}
\newcommand     {\coRP} {\myclass{coRP}}
\newcommand     {\ZPP}  {\myclass{ZPP}}
\newcommand     {\BPP}  {\myclass{BPP}}
\newcommand     {\PP}   {\myclass{PP}}
\newcommand     {\NC}   {\myclass{NC}}
\newcommand     {\myAC}   {\myclass{AC}}
\newcommand     {\myACZ}   {\myAC^0}
\newcommand     {\myACZP}   {\myAC^0+\text{\myclass{Mod 2}}}
\newcommand     {\myTC}   {\myclass{TC}}
\newcommand     {\myTCZ}   {\myTC^0}

\newcommand     {\SAC}   {\myclass{SAC}}
\newcommand     {\ACC}   {\myclass{ACC}}
\newcommand     {\tc}   {\myclass{TC}}   \newcommand     {\PPoly}{\myclass{\mbox{P}/\mbox{Poly}}} 
\newcommand     {\FOarb}   {\myclass{FO(arb)}}

\newcommand     {\NLIN}   {\myclass{NLIN}}
\newcommand     {\DLIN}   {\myclass{DLIN}}

\newcommand  {\APTIME}   {\myclass{APTIME}}
\newcommand  {\ALOGSPACE}   {\myclass{ALOGSPACE}}

\newcommand{\FO}{\StaClass{FO}}
\newcommand{\MSO}[1][\quant]{\StaClass{MSO}}
\newcommand{\EMSO}{\StaClass{$\exists$MSO}}
\newcommand{\QFO}[1][\quant]{\StaClass{\ensuremath{#1}FO}}
\newcommand{\cQFO}[1][\quant]{\StaClass{\ensuremath{\overline{#1}}FO}}
\newcommand{\EFO}{\QFO[\exists^*]}
\newcommand{\AFO}{\QFO[\forall^*]}
\newcommand{\AEFO}{\StaClass{$\forall/\exists$FO}}
\newcommand{\CQ}[1][]{\StaClass{CQ}}
\newcommand{\UCQ}[1][]{\StaClass{UCQ}}
\newcommand{\CQneg}[1][]{\StaClass{CQ\ensuremath{^{\mneg}}}}
\newcommand{\UCQneg}[1][]{\StaClass{UCQ\ensuremath{^{\mneg}}}}
\newcommand{\Prop}{\StaClass{Prop}}
\newcommand{\QF}{\StaClass{QF}}
\newcommand{\PropCQ}{\StaClass{PropCQ}}
\newcommand{\PropUCQ}{\StaClass{PropUCQ}}
\newcommand{\PropCQneg}{\StaClass{PropCQ{\ensuremath{^{\mneg}}}}}
\newcommand{\PropUCQneg}{\StaClass{PropUCQ{\ensuremath{^{\mneg}}}}}

\newcommand{\mneg}{\neg} 
\newcommand{\DynTC}{\DynClass{TC}}

\newcommand{\DynProp}{\DynClass{Prop}}
\newcommand{\DynPropbi}{\DynClass{Prop}^*\xspace}
\newcommand{\DynProj}{\DynClass{Projections}}
\newcommand{\DynQF}{\DynClass{QF}}
\newcommand{\DynQFbi}{\DynClass{QF}^*\xspace}
\newcommand{\DynFO}{\DynClass{FO}}
\newcommand{\DynACZ}{\DynClass{AC$^0$}}
\newcommand{\DynACZP}{\DynClass{AC$^0 + $ Mod 2}}
\newcommand{\DynFOpos}{\DynClass{FO$^{\wedge, \vee}$}}
\newcommand{\DynFOand}{\DynClass{FO$^{\wedge}$}}

\newcommand{\DynC}{\DynClass{$\class$}}
\newcommand{\DynUCQ}{\DynClass{UCQ}}
\newcommand{\DynCQ}{\DynClass{CQ}}
\newcommand{\DynUCQneg}{\DynClass{UCQ$^\mneg$}}
\newcommand{\DynCQneg}{\DynClass{CQ$^\mneg$}}
\newcommand{\DynCQPM}{\DynCQneg}
\newcommand{\DyncQFO}{\DynClass{$\cquant$FO}}

\newcommand{\DynQFO}[1][\quant]{\DynClass{\QFO[#1]}}
\newcommand{\DynEFO}{\DynQFO[\exists^*]}
\newcommand{\DynAFO}{\DynQFO[\forall^*]}

\newcommand{\DynAEFO}{\DynClass{$\forall/\exists$FO}}
\newcommand{\DynAND}{\DynClass{PropCQ}}
\newcommand{\DynAnd}{\DynAND}
\newcommand{\DynPropCQ}{\DynAND}

\newcommand{\DynPropPos}{\DynClass{PropUCQ}}
\newcommand{\DynPropAO}{\DynPropPos}
\newcommand{\DynPropUCQ}{\DynPropPos}

\newcommand{\DynAndNeg}{\DynClass{PropCQ{\ensuremath{^{\mneg}}}}}
\newcommand{\DynPropCQneg}{\DynAndNeg}
\newcommand{\DynPropUCQneg}{\DynClass{PropUCQ{\ensuremath{^{\mneg}}}}}

\newcommand{\DynOrNeg}{\DynClass{Or{\ensuremath{^{\mneg}}}}}

\newcommand{\dDynProp}{\dDynClass{Prop}}
\newcommand{\dDynPropPos}{\dDynClass{PropUCQ}}
\newcommand{\dDynAndOr}{\dDynPropPos}
\newcommand{\dDynQF}{\dDynClass{QF}}
\newcommand{\dDynFO}{\dDynClass{FO}}
\newcommand{\dDynFOpos}{\dDynClass{FO$^{\wedge, \vee}$}}
\newcommand{\dDynFOand}{\dDynClass{FO$^{\wedge}$}}
\newcommand{\dDynC}{\dDynClass{$\class$}}
\newcommand{\dDynUCQ}{\dDynClass{UCQ}}
\newcommand{\dDynCQ}{\dDynClass{CQ}}
\newcommand{\dDynUCQneg}{\dDynClass{UCQ$^\mneg$}}
\newcommand{\dDynCQneg}{\dDynClass{CQ$^\mneg$}}
\newcommand{\dDynCQPM}{\dDynCQneg}

\newcommand{\dDynQFO}[1][\quant]{\dDynClass{\QFO[#1]}}
\newcommand{\dDynEFO}{\dDynQFO[\exists^*]}
\newcommand{\dDynAFO}{\dDynQFO[\forall^*]}

\newcommand{\dDynAEFO}{\dDynClass{$\forall/\exists$FO}}
\newcommand{\dDynAND}{\dDynPropCQ}
\newcommand{\dDynAnd}{\dDynAND}
\newcommand{\dDynConj}{\dDynClass{Conj}}
\newcommand{\dDynPropAO}{\dDynClass{Prop$^{\wedge, \vee}$}}
\newcommand{\dDyncQFO}{\dDynClass{$\cquant$FO}}
\newcommand{\dDynPropUCQneg}{\dDynClass{PropUCQ{\ensuremath{^{\mneg}}}}}
\newcommand{\dDynPropUCQ}{\dDynClass{PropUCQ}}
\newcommand{\dDynPropCQneg}{\dDynClass{PropCQ{\ensuremath{^{\mneg}}}}}
\newcommand{\dDynPropCQ}{\dDynClass{PropCQ}}

\newcommand{\equalcardinality}{\textsc{EqualCardinality}\xspace}\newcommand{\reach}{\textsc{Reach}\xspace}\newcommand{\altreach}{\textsc{Alt-Reach}\xspace}
\newcommand{\stgraph}{$s$-$t$-graph\xspace}
\newcommand{\stgraphs}{$s$-$t$-graphs\xspace}
\newcommand{\reachQ}{\textsc{Reach}\xspace}
\newcommand{\streachQ}{\textsc{$s$-$t$-Reach}\xspace}
\newcommand{\streachabilityquery}{$s$-$t$-reachability query\xspace}
\newcommand{\stTwoPath}{\problem{$s$-$t$-Two\-Path}\xspace}
\newcommand{\sTwoPath}{\problem{$s$-Two\-Path}\xspace}
\newcommand{\clique}[1]{\problem{$#1$-Clique}\xspace}
\newcommand{\colorability}[1]{\problem{$#1$-Col}\xspace}
\newcommand{\streach}{$s$-$t$-Reach}
\newcommand{\streachp}{\problem{\streach}\xspace}
\newcommand{\layeredstreach}[1]{#1-Layered-$s$-$t$-Reach}
\newcommand{\layeredstreachp}[1]{\problem{\layeredstreach{#1}}\xspace}

\newcommand{\dynClique}[1]{\clique{#1}}
\newcommand{\dynColorability}[1]{\colorability{#1}}

\newcommand{\probEqualCardinalityText}{EqualCardinality}
\newcommand{\EqualCardinality}{\problem{\probEqualCardinalityText}\xspace}
\newcommand{\EqualCardinalityDescr}{\problemdescr{\probEqualCardinalityText}{Unary relations $A$ and $B$}{Do $A$ and $B$ have the same cardinality?\xspace}}

\newcommand{\dynEqualCardinality}{\dynProb{\probEqualCardinalityText}\xspace}
\newcommand{\dynEqualCardinalityDescr}{\dynProbDescr{\probEqualCardinalityText}{Unary relations $A$ and $B$}{Element insertions and deletions}{Do $A$ and $B$ have the same cardinality?\xspace}}

\newcommand{\dynReachQ}{\dynProb{\textsc{Reach}}\xspace}
\newcommand{\dynstReachQ}{\dynProb{\textsc{$s$-$t$-Reach}}\xspace}

\newcommand{\dynstTwoPath}{\dynProb{\stTwoPath}\xspace}
\newcommand{\dynsTwoPath}{\dynProb{\sTwoPath}\xspace}

\newcommand{\dynlayeredstreach}[1]{Dyn-#1-Layered-$s$-$t$-Reach}
\newcommand{\dynlayeredstreachp}[1]{\problem{\dynlayeredstreach{#1}}\xspace}

\newcommand{\mtext}[1]{\textsc{#1}}

\providecommand {\calA}      {{\mathcal A}\xspace}
\providecommand {\calB}      {{\mathcal B}\xspace}
\providecommand {\calC}      {{\mathcal C}\xspace}
\providecommand {\calD}      {{\mathcal D}\xspace}
\providecommand {\calE}      {{\mathcal E}\xspace}
\providecommand {\calF}      {{\mathcal F}\xspace}
\providecommand {\calG}      {{\mathcal G}\xspace}
\providecommand {\calH}      {{\mathcal H}\xspace}
\providecommand {\calK}      {{\mathcal K}\xspace}
\providecommand {\calI}      {{\mathcal I}\xspace}
\providecommand {\calL}      {{\mathcal L}\xspace}
\providecommand {\calM}      {{\mathcal M}\xspace}
\providecommand {\calN}      {{\mathcal N}\xspace}
\providecommand {\calO}      {{\mathcal O}\xspace}
\providecommand {\calP}      {{\mathcal P}\xspace}
\providecommand {\calQ}      {{\mathcal Q}\xspace}
\providecommand {\calR}      {{\mathcal R}\xspace}
\providecommand {\calS}      {{\mathcal S}\xspace}
\providecommand {\calT}      {{\mathcal T}\xspace}
\providecommand {\calU}      {{\mathcal U}\xspace}
\providecommand {\calV}      {{\mathcal V}\xspace}
\providecommand {\calX}      {{\mathcal X}\xspace}
\providecommand {\calZ}      {{\mathcal Z}\xspace}

\newcommand{\mhat}[1]{\widehat{#1}}

\newcommand{\Ah}{\mhat{A}}
\newcommand{\Bh}{\mhat{B}}
\newcommand{\Ch}{\mhat{C}}
\newcommand{\Dh}{\mhat{D}}
\newcommand{\Eh}{\mhat{E}}
\newcommand{\Fh}{\mhat{F}}
\newcommand{\Gh}{\mhat{G}}
\newcommand{\Hh}{\mhat{H}}
\newcommand{\Ih}{\mhat{I}}
\newcommand{\Jh}{\mhat{J}}
\newcommand{\Kh}{\mhat{K}}
\newcommand{\Lh}{\mhat{L}}
\newcommand{\Mh}{\mhat{M}}
\newcommand{\Nh}{\mhat{N}}
\newcommand{\Oh}{\mhat{O}}
\newcommand{\Ph}{\mhat{P}}
\newcommand{\Qh}{\mhat{Q}}
\newcommand{\Rh}{\mhat{R}}
\newcommand{\Sh}{\mhat{S}}
\newcommand{\Th}{\mhat{T}}
\newcommand{\Uh}{\mhat{U}}
\newcommand{\Vh}{\mhat{V}}
\newcommand{\Wh}{\mhat{W}}
\newcommand{\Xh}{\mhat{X}}
\newcommand{\Yh}{\mhat{Y}}
\newcommand{\Zh}{\mhat{Z}}

\newcommand{\Psih}{\mhat{\Psi}}
\newcommand{\psih}{\mhat{\psi}}
\newcommand{\Phih}{\mhat{\Phi}}
\newcommand{\phih}{\mhat{\phi}}
\newcommand{\varphih}{\mhat{\varphi}}

\newcommand{\eqh}{\mhat{=}}

\newcommand{\N}{\ensuremath{\mathbb{N}}}

\newcommand{\Q}{\ensuremath{\mathbb{Q}}}

\newcommand{\R}{\ensuremath{\mathbb{R}}}

\newcommand{\perm}{\ensuremath{\pi}}

\newcommand{\allsubsets}[2]{[#1]^{#2}}

\newcommand{\pvec}[1]{\vec{#1}\mkern2mu\vphantom{#1}}

\newcommand{\kexp}[2]{\ensuremath{\exp^{#1}\hspace{-0.5mm}(#2)}}

\newcommand{\tower}[2]{\ensuremath{\text{tow}_{#1}\hspace{-0.5mm}(#2)}}

\newcommand{\klog}[2]{\ensuremath{\log^{(#1)}{\hspace{-0.5mm}(#2)}}}
\newcommand{\klogwb}[2]{\ensuremath{\log^{(#1)}{\hspace{-0.5mm}#2}}}

\newcommand{\disjointunion}{\uplus}

\providecommand{\power}[1]{\ensuremath{\calP(#1)}\xspace}

\newcommand{\restrict}[2]{#1\mspace{-3mu}\upharpoonright \mspace{-3mu}#2}

\newcommand{\isomorph}{\simeq}
\newcommand{\isomorphVia}[1]{\isomorph_{#1}}
\newcommand{\swap}[2]{id{[#1, #2]}}

\newcommand{\df}{\ensuremath{\mathrel{\smash{\stackrel{\scriptscriptstyle{
    \text{def}}}{=}}}} \;}

\newcommand{\refeq}[1]{\ensuremath{{\stackrel{\scriptstyle{
    \text{#1}}}{=}}}}

\newcommand{\longlongeq}{=\joinrel=\joinrel=\joinrel=}
\newcommand{\longeq}{=\joinrel=\joinrel=}
\newcommand{\reflongeq}[1]{\ensuremath{{\stackrel{\scriptstyle{
    \text{#1}}}{\longeq}}}}

\newcommand{\ramseyw}[1]{\ensuremath{R_{#1}}}

\makeatletter \newcommand{\auxramsey}[4]{
  \@ifmtarg{#1}{
    \@ifmtarg{#4}{
      \ensuremath{R(#2; #3)}
    }{
      \ensuremath{R^#4(#2; #3)}
    }
   }{
    \@ifmtarg{#4}{
      \ensuremath{R_{#1}(#2; #3)}
    }{
      \ensuremath{R^#4_{#1}(#2; #3)}
    }
  }
}

\newcommand{\ramsey}[3]{\auxramsey{#1}{#2}{#3}{}}
\newcommand{\homramsey}[2]{\auxramsey{}{#2}{#1}{\text{hom}}}
\newcommand{\mfoldramsey}[3]{\auxramsey{}{#2}{#1}{#3}}

\newcommand{\norder}{\prec}

\newcommand{\col}{col}

\newcommand{\property}{($\ast$)}

\newcommand{\subseq}{\sqsubseteq}

\newcommand{\derive}{\Rightarrow}
\newcommand{\rmapsto}{\rightarrow}

\newcommand{\lpath}[1][]{\ensuremath{\mathrel{\smash{\stackrel{\scriptscriptstyle{
    #1}}{\rightsquigarrow}}}}}

  \newtheorem{theorem}{Theorem}[section]
   \newtheorem{lemma}[theorem]{Lemma}
   \newtheorem{corollary}[theorem]{Corollary}
   \newtheorem{proposition}[theorem]{Proposition}
   \newtheorem{claim}{Claim}

   \newtheorem*{theorem*}{Theorem}

  \newtheorem*{maintheorem}{Main Theorem}

\newcommand{\reptheoremtitlefont}[1]{\textbf{#1}}
\newcommand{\reptheoremcontentfont}{\itshape}

\makeatletter
\newcommand{\theoremcont}[3]{
   \def\Type{#1}
   \def\Number{#2}
   \def\Label{#3}
  \@ifmtarg{#3}{
     \reptheoremtitlefont{\Type\ \Number.} \reptheoremcontentfont
   }{
    \reptheoremtitlefont{\Type\ \Number}\ \reptheoremcontentfont(\Label).
  }
}

\newenvironment{replemma}[2]{\vspace{2mm}\par\theoremcont{Lemma}{#1}{#2}}{\vspace{0mm}\par}
\newenvironment{reptheorem}[2]{\vspace{2mm}\par\theoremcont{Theorem}{#1}{#2}}{\vspace{2mm} \par }
\newenvironment{repcorollary}[2]{\vspace{2mm}\par\theoremcont{Corollary}{#1}{#2}}{\vspace{2mm} \par}

     \newtheorem{goal}{Goal}
    \theoremstyle{definition}
    \newtheorem{definition}{Definition}
    \newtheorem{example}{Example}
    
    \newtheorem {openquestion}{Open question}
    \newtheorem {question}{Question}
    \newtheorem {mainquestion}{Main question}

    \newenvironment{proofsketch}{\noindent\textsc{Proof sketch.}\enspace}{\qed \\}
    \newenvironment{proofidea}{\noindent\textsc{Proof idea.}\enspace}{\qed \\}

    \newenvironment{proofof}[1]{\vspace{2mm}\noindent\emph{Proof (of #1).}\enspace}{\qed\vspace{2mm}}
   \newenvironment{proofsketchof}[1]{\vspace{2mm}\noindent\emph{Proof sketch (of #1).}\enspace}{\qed\vspace{2mm}}

\newcommand{\eval}[3]{#1(#2/#3)}

\newcommand{\assignment}{\theta}

\newcommand{\arity}{\ensuremath{\text{Ar}}}
\newcommand{\arityFun}{\ensuremath{Ar_{\text{fun}}}}

\newcommand{\schema}{\tau}
\newcommand{\schemah}{\hat{\schema}}
\newcommand{\relSchema}{\schema_{\text{rel}}}
\newcommand{\relSchemah}{\schemah_{\text{rel}}}
\newcommand{\conSchema}{\schema_{\text{const}}}
\newcommand{\conSchemah}{\schemah_{\text{const}}}
\newcommand{\funSchema}{\schema_{\text{fun}}}
\newcommand{\funSchemah}{\schemah_{\text{fun}}}
\newcommand{\Terms}[2]{\textsc{Terms}^{#2}_{#1}} 

\newcommand{\struc}{\calS}
\newcommand{\struca}{\struc}
\newcommand{\strucb}{\calT}

\newcommand{\unaryTypes}[1]{\mathcal{UN}_{#1}}
\newcommand{\binaryTypes}[1]{\mathcal{BIN}_{#1}}
\newcommand{\naryTypes}[2]{\mathfrak{T}_{#1,#2}}

\newcommand{\nb}[3]{\calN_{#2}^{#3}(#1)}
\newcommand{\nbv}[3]{\vec \calN_{#2}^{#3}(#1)}

\newcommand{\mthen}{\rightarrow}
\newcommand{\mand}{\wedge}
\newcommand{\mor}{\vee}
\newcommand{\munion}{\cup}
\newcommand{\mintersect}{\cap}
\newcommand{\mdisjunion}{\biguplus}
\newcommand{\sem}[2]{\ensuremath{\llbracket #1\rrbracket_{#2}}} 
\newcommand{\arb}{\ensuremath{\star}}\newcommand{\generic}{\textsc{generic}}
\newcommand{\quant}{\mathbb{Q}}
\newcommand{\cquant}{\overline{\mathbb{Q}}}

\newcommand{\nd}{d}
\newcommand{\formulas}{\calC}
\newcommand{\symneg}[1]{\widehat{#1}}

\newcommand{\type}[2]{\ensuremath{\langle #1, #2 \rangle}}
\newcommand{\stype}[3]{\ensuremath{\langle #1, #2 \rangle_{#3}}}

\newcommand{\behaveEqual}[1]{\approx_{#1}}

\newcommand{\types}[2]{types_{#1}(#2)}
\newcommand{\numTypes}[2]{|\types{{#1}}{#2}|}

\newcommand{\eqtype}{\epsilon}

\newcommand{\db}{\calD}
\newcommand{\inp}{\calI}
\newcommand{\aux}{\calA}
\newcommand{\builtin}{\calB}
\newcommand{\domain}{D}

\newcommand{\query}{\calQ}
\newcommand{\cq}{\calC}

\newcommand{\querys}{Q}

\newcommand{\ans}[2]{\mtext{ans}(#1, #2)}

\newcommand{\updates}{\ensuremath{\Delta}}
\newcommand{\abstrDel}{\ensuremath{\updates_{Del}}}
\newcommand{\abstrIns}{\ensuremath{\updates_{Ins}}}
\newcommand{\abstrUpd}{\ensuremath{\updates}}

\newcommand{\init}{\mtext{Init}\xspace}

\newcommand{\ins}{\mtext{ins}}
\newcommand{\del}{\mtext{del}}

\newcommand{\insertdescr}[2]{\textbf{Insertion of \ensuremath{#2} into \ensuremath{#1}.}}
\newcommand{\deletedescr}[2]{\textbf{Deletion of \ensuremath{#2} from \ensuremath{#1}.}}

\newcommand{\state}{\struc}

\newcommand{\inpSchema}{\schema_{\text{in}}}
\newcommand{\auxSchema}{\schema_{\text{aux}}}
\newcommand{\eqSchema}{\schema_{=}}
\newcommand{\builtinSchema}{\schema_{\text{bi}}}

\newcommand{\auxInit}{\init_{\text{aux}}}
\newcommand{\builtinInit}{\init_{\text{bi}}}

\providecommand{\prog}{\ensuremath{\calP}\xspace}
\newcommand{\progb}{\ensuremath{Q}\xspace}

\newcommand{\updateDB}[2]{\ensuremath{#1(#2)}}
\newcommand{\updateState}[3]{\ensuremath{#1_{#2}(#3)}}
\newcommand{\updateRelation}[4]{\restrict{\ensuremath{{#1}_{#2}(#3)}}{#4}}

\newcommand{\transition}[3]{\ensuremath{{#1} \xrightarrow{#2}{#3}}}

\makeatletter \newcommand{\uf}[4]{
  \@ifmtarg{#4}{
    \ensuremath{\phi^{#1}_{#2}(#3)}
   }{
    \ensuremath{\phi^{#1}_{#2}(#3; #4)}
  }
}
\newcommand{\huf}[4]{
  \@ifmtarg{#4}{
    \ensuremath{\widehat{\phi}^{#1}_{#2}(#3)}
   }{
    \ensuremath{\widehat{\phi}^{#1}_{#2}(#3; #4)}
  }
}

\newcommand{\ufb}[4]{
  \@ifmtarg{#4}{
    \ensuremath{\psi^{#1}_{#2}(#3)}
   }{
    \ensuremath{\psi^{#1}_{#2}(#3; #4)}
  }
}

\newcommand{\ufbwa}[2]{
  \ensuremath{\psi^{#1}_{#2}}
}

\newcommand{\ufwa}[2]{
  \ensuremath{\phi^{#1}_{#2}}
}

    \makeatletter   \newcommand{\ufsubstitute}[5]{
    \@ifmtarg{#5}{
      \ensuremath{\phi^{#2}_{#3}[#1](#4)}
    }{
      \ensuremath{\phi^{#2}_{#3}[#1](#4; #5)}
    }
  }

    \makeatletter   \newcommand{\ufsubstitutewa}[3]{
      \ensuremath{\phi^{#2}_{#3}[#1]}
  }
  \makeatletter   \newcommand{\substitutewa}[2]{
      \ensuremath{#1[#2]}
  }

\newcommand{\ut}[4]{
  \@ifmtarg{#4}{
    \ensuremath{t^{#1}_{#2}(#3)}
   }{
    \ensuremath{t^{#1}_{#2}(#3; #4)}
  }
}

\newcommand{\utw}[3]{
  \ensuremath{t^{#1}_{#2}(#3)}
}

\newcommand{\utwa}[2]{\ensuremath{t^{#1}_{#2}}}
\newcommand{\ite}[3]{
  \@ifmtarg{#1}{
    \ensuremath{\mtext{ITE}}
   }{
    \mtext{ITE}(#1,#2,#3)  
  }
}

\newcommand{\itewa}{
    \ensuremath{\mtext{ite}}
}

\providecommand{\nc}{\newcommand}
\providecommand{\rnc}{\renewcommand}
\providecommand{\pc}{\providecommand}

\renewcommand{\labelenumi}{(\alph{enumi})}

\newcommand{\Erdos}{Erd\H{o}s}

\newcommand{\quotes}[1]{``#1''}

\newcommand{\mbold}[1]{\textbf{\hyperpage{#1}}}
\newcommand{\mitalic}[1]{{\it #1}}

\newcommand{\bfindex}[1]{\index{#1|mbold}}
\newcommand{\itindex}[1]{\index{#1|it}}

\newcommand{\defindex}[1]{\bfindex{#1}}

\ifcomments
\nc{\commentbox}[1]{\noindent\framebox{\parbox{\linewidth}{#1}}}
\nc{\todo}[1]{\ \\ {\color{red} \fbox{\parbox{\linewidth}{{\sc
          ToDo}:\\  #1}}}}

\setlength{\marginparwidth}{2.5cm}
\setlength{\marginparsep}{3pt}

\newcounter{CommentCounter}
\newcommand{\acomment}[2]{\ \\ \fbox{\parbox{\linewidth}{{\sc #1}: #2}}}
\newcommand{\mcomment}[2]{{\color{blue}(#1)}\footnote{#1: #2}}                                 \else
\nc{\commentbox}[1]{}
\newcommand{\mcomment}[2]{}
\newcommand{\acomment}[2]{}
\fi

\ifchanges

\newcommand{\loldnew}[3]{\commentbox{{\textcolor{blue}{\setlength{\fboxsep}{1pt}\fbox{\small
          #1}}} \textcolor{red}{\footnotesize #2}}
  \textcolor{blue}{#3}}
\setul{}{0.2mm}
\setstcolor{red}
\newcommand{\oldnew}[3]{{\textcolor{blue}{\setlength{\fboxsep}{1pt}\fbox{\small
        #1}}} \st{\footnotesize #2} \textcolor{blue}{#3}}

\else
\newcommand{\loldnew}[3]{#3}
\newcommand{\oldnew}[3]{#3}
\fi

\nc{\tzm}[1]{\mcomment{TZ}{#1}}
\nc{\tsm}[1]{\mcomment{TS}{#1}}
\nc{\tz}[1]{\acomment{TZ}{#1}}
\nc{\thz}[1]{\acomment{TZ}{#1}}
\nc{\ts}[1]{\acomment{TS}{#1}}

\nc{\tzon}[2][]{\oldnew{TZ}{#1}{#2}} 
\nc{\tson}[2][]{\oldnew{TS}{#1}{#2}}

\nc{\tzlon}[2][]{\loldnew{TZ}{#1}{#2}} 
\nc{\tslon}[2][]{\loldnew{TS}{#1}{#2}}

\newcommand{\columnWidth}{11cm}

\newcommand{\substruclemma}{Substructure Lemma\xspace}
\newcommand{\First}{\mtext{First}}
\newcommand{\List}{\mtext{List}}
\newcommand{\Last}{\mtext{Last}}
\newcommand{\In}{\mtext{In}}
\newcommand{\Out}{\mtext{Out}}
\newcommand{\Empty}{\mtext{Empty}}

\newcommand{\Odd}{\mtext{Odd}}
\newcommand{\odd}{\text{odd}}
\newcommand{\even}{\text{even}}

\newcommand{\Counter}{\mtext{Cnt}}
\newcommand{\isEmpty}{\mtext{Empty}}
\newcommand{\Zero}{\mtext{Zero}}

\newcommand{\congruent}[2]{\sim_{#1, #2}}

\newcommand{\Succ}{\mtext{Succ}}
\newcommand{\Pred}{\mtext{Pred}}
\newcommand{\Max}{\mtext{Max}}
\newcommand{\numEdges}{\#\mtext{edges}}
\newcommand{\numNodes}{\#\mtext{nodes}}

\newcommand{\shortVersion}[1]{#1}
\newcommand{\longVersion}[2]{}

\newcommand{\apptheoremtitlefont}[1]{\textbf{#1}}
\newcommand{\apptheoremcontentfont}{\itshape}

\newcommand{\apponlystartmarker}{ $\blacktriangleright\blacktriangleright\blacktriangleright$ }
\newcommand{\apponlyendmarker}{ $\blacktriangleleft\blacktriangleleft\blacktriangleleft$ }
\newcommand{\apprepetitionstartmarker}{ $\blacktriangleright\blacktriangleright\blacktriangleright$ }
\newcommand{\apprepetitionendmarker}{ $\blacktriangleleft\blacktriangleleft\blacktriangleleft$ }

\newcommand{\initialAppendix}{
  \section*{Appendix}

  \setcounter{section}{1}
   \renewcommand{\thesection}{\Alph{section}}
   \counterwithin{theorem}{section}
   \counterwithin{lemma}{section}
   \counterwithin{corollary}{section}
   \counterwithin{definition}{section}
   \counterwithin{example}{section}

  In the appendix we give the proofs that have been omitted in the main text. For proofs that are partially present in the main article, we repeat the full proof and its context. Parts that are only repeated are marked by \apprepetitionstartmarker and \apprepetitionendmarker.  
  
}
  
\newcommand{\writeAppendix}{\initialAppendix}
\newcommand{\toAppendix}[1]{
  \makeatletter
   \g@addto@macro\writeAppendix{#1}
  \makeatother
}

\newcommand{\toMainAndAppendix}[1]{
      #1  \toAppendix{      \apprepetition{#1} \par
  }
}

\newcommand{\atheorem}[2]{
  \begin{theorem}\label{#1}    #2  \end{theorem}  \toAppendix{    \begin{apptheorem}{\ref{#1}}{}
      #2    \end{apptheorem}  }
}

\newcommand{\alemma}[2]{
  \begin{lemma}\label{#1}    #2  \end{lemma}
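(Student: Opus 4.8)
The plan is to prove the \substruclemma by exploiting the fundamental locality of quantifier-free update formulas. I read the lemma as asserting that the induced substructure $\restrict{\struc}{D'}$ of a \DynProp-reachable state is itself reachable by running the same program directly on the sub-domain $D'$, provided every change in the update sequence modifies a tuple whose components all lie in $D'$. The single structural fact driving the whole argument is this: in a \DynProp program \prog, the update formula $\phi_R$ for an auxiliary relation $R$ is quantifier-free, so the new value of a tuple $R(\bar a)$ after a change to a tuple $\bar b$ is a Boolean combination of atomic facts over the elements occurring in $\bar a$ and $\bar b$ \emph{only}. Having no quantifiers means having no access to elements outside $\bar a\cup\bar b$, and everything else is bookkeeping around this observation.

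I would fix a subset $D'$ of the domain that contains all elements named by constants, take an update sequence \updates all of whose changes are internal to $D'$, and prove the claim by induction on the length of \updates. In the inductive step I append a single \ins or \del change at a tuple $\bar b$ with all components in $D'$ and evaluate each update formula $\phi_R(\bar a;\bar b)$ for tuples $\bar a$ over $D'$. Since $\phi_R$ is quantifier-free and both $\bar a$ and $\bar b$ live in $D'$, every atom it tests ranges over $D'$, so it receives the same truth value whether it is evaluated in the restriction $\restrict{\struc}{D'}$ of the full state (available by the induction hypothesis) or in the native computation on $D'$. Hence the restricted full state and the native state remain equal after the change, closing the induction.

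The base case and the handling of built-in relations are where I expect the genuine work. For the base case I must verify that the initialization \init of the auxiliary relations is itself local, i.e.\ that restricting the initialized state on the full domain to $D'$ yields the initialized state on $D'$; in the standard setting where \prog starts from empty input this is immediate, but a general \init would need its own locality check. For the built-ins I must ensure that the predicates seen by the native run on $D'$ agree with the full built-ins restricted to $D'$: for a bare linear order this is automatic, since $x<y$ is preserved under restriction, but predicates such as $\Succ$, $\mtext{Min}$, or $\mtext{Max}$ are not preserved by arbitrary restriction, so I would either confine the lemma to the order-only setting or require $D'$ to be an order-interval that contains the extremal elements.

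The main obstacle is precisely the hypothesis that \updates may only touch tuples internal to $D'$, and showing that this is both necessary and exactly what the downstream applications supply. It cannot be dropped: a quantifier-free update applied to an external tuple $\bar b$ not contained in $D'$ can still rewrite an internal tuple $R(\bar a)$ with $\bar a$ over $D'$, because $\phi_R(\bar a;\bar b)$ is free to test atoms that mix elements of $\bar a$ with elements of $\bar b$. Making the ``internal changes only'' condition precise and confirming that it matches how later proofs invoke the \substruclemma is the delicate part; once that is pinned down, the quantifier-free evaluation in the inductive step is routine.
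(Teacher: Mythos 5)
Your core engine---that a quantifier-free update formula evaluated for an auxiliary tuple $\vec c$ after a modification of a tuple $\vec d$ can only see atomic facts over the elements of $\vec c$, $\vec d$ and the constants---is exactly the observation behind Lemma~\ref{lemma:substruccor}, and your inductive step is the correct use of it (the paper itself does not reprove the lemma; it recalls this intuition and cites the proof from \cite{GeladeMS12, ZeumeS15reach}). The gap is in what you chose to prove around that engine. The Substructure Lemma quantifies over two \emph{arbitrary} program states $\calS$ and $\calT$ and takes as a \emph{hypothesis} that $\restrict{\calS}{A}$ and $\restrict{\calT}{B}$ are isomorphic via $\pi$; its conclusion is that this isomorphism survives $\pi$-respecting modification sequences on $A$ and $B$. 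Initialization never enters the statement. Your version instead compares a run on the full domain, started from the initialization mapping, with a ``native'' run of the same program on the subdomain $D'$, and you correctly flag that its base case needs $\restrict{\auxInit(\db)}{D'}=\auxInit(\restrict{\db}{D'})$. In this paper that check is not merely ``genuine work''---it is unavailable: Definition~\ref{definition:dync} allows \emph{arbitrary} initialization mappings, which need not be local or even isomorphism-invariant (e.g.\ $\auxInit$ may encode a linear order on the domain, or set a $0$-ary relation depending on $|\domain|$). So the statement you propose to prove, read with initialization, is false in this framework. (Your worry about built-in predicates such as $\Succ$ is moot here for the same reason: in the \DynProp{} setting of this lemma there are no built-ins, only arbitrarily initialized auxiliary relations, which is worse.)

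The reformulation also loses exactly what the downstream proofs need. In Theorems~\ref{theorem:DynProp_lower_bound} and~\ref{theorem:DynProp_lower_bound_EAFO} the lemma is applied with $\calS=\calT$ but $A\neq B$: the hyperedges $b$ and $b'$ induce substructures of the \emph{same} state that are isomorphic via a non-identity, order-preserving map $\pi$, and the two modification sequences $\alpha$, $\beta$ correspond under $\pi$. That hypothesis is supplied by the Ramsey argument (condition (S1) of Lemma~\ref{lemma:ramseyfact}: all $\prec$-ordered $k$-tuples over $A'$ have the same atomic type, auxiliary relations included), precisely because the arbitrarily initialized auxiliary data cannot be assumed homogeneous anywhere in advance. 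To recover this from your single-subdomain statement you would additionally need that runs commute with isomorphisms \emph{including initialization}, which again fails for arbitrary $\auxInit$. The repair is to keep your inductive step verbatim but change the induction statement: assume $\restrict{\calS}{A}$ and $\restrict{\calT}{B}$ are isomorphic via $\pi$, and show that for each pair of $\pi$-respecting modifications every quantifier-free update formula receives argument tuples of the same atomic type on both sides, hence redefines corresponding auxiliary tuples identically; the base case is then the hypothesis itself, and initialization never appears.
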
  \toAppendix{    \begin{applemma}{\ref{#1}}{}      #2    \end{applemma}  }
}

\newcommand{\aproof}[3]{  \@ifmtarg{#2}{}{    \begin{proof}      #1      #2    \end{proof}  }
  \toAppendix{    \begin{proof}      \@ifmtarg{#1}{}{\apprepetition{#1}} \par
      #3
    \end{proof}  }
}

\newcommand{\aproofsketch}[3]{  \@ifmtarg{#2}{}{    \begin{proofsketch}      #1      #2    \end{proofsketch}  }
  \toAppendix{    \begin{proof}      \@ifmtarg{#1}{}{\apprepetition{#1}} \par
      #3
    \end{proof}  }
}

\newcommand{\shortOrLong}[2]{
  \shortVersion{#1}
  \longVersion{#2}
}

\newcommand{\apponlystart}{
    \apponlystartmarker
    }
\newcommand{\apponlyend}{
    \apponlyendmarker
    }

\newcommand{\apprepetition}[1]{
  \apprepetitionstartmarker #1 \apprepetitionendmarker
}

\tikzstyle{mnode}=[
  circle,
  fill=\mnodefillcolor, 
  draw=\mnodedrawcolor,
  minimum size=6pt, 
  inner sep=0pt
]

\tikzstyle{mnodeinvisible}=[
  minimum size=6pt, 
  inner sep=0pt
]

\tikzstyle{invisible}=[
  minimum size=0pt, 
  inner sep=0pt
]

\tikzstyle{invisiblel}=[
  minimum size=10pt, 
  inner sep=0pt
]

\tikzstyle{invisibleEdge}=[
  transparent
]

\tikzstyle{nameNode}=[
  font=\scriptsize
]

\tikzstyle{namingNode}=[
  font=\normalsize
]

\tikzstyle{mEdge}=[
  -latex',   thick, 
  shorten >=3pt, 
  shorten <=3pt,
  draw=black!80,
]

\tikzstyle{dDashedEdge}=[
  -latex',   thick, 
  shorten >=3pt, 
  shorten <=3pt,
  draw=black!80,
  dashed
]

\tikzstyle{dEdge}=[
  -latex',   thick, 
  shorten >=3pt, 
  shorten <=3pt,
  draw=black!80,
]

\tikzstyle{dhEdge}=[
  -latex',   thick, 
  shorten >=3pt, 
  shorten <=3pt,
  draw=black!80,
]
\tikzstyle{uEdge}=[
  thick, 
  shorten >=3pt, 
  shorten <=3pt,
  draw=black!80,
]
\tikzstyle{uhEdge}=[
  thick, 
  shorten >=3pt, 
  shorten <=3pt,
  draw=black!80,
]

\tikzstyle{cEdge}=[
  ultra thick, 
  shorten >=3pt, 
  shorten <=3pt,
  draw=black!80,
]

\tikzstyle{dotsEdge}=[
  thick, 
  loosely dotted, 
  shorten >=7pt, 
  shorten <=7pt
]

\tikzstyle{snakeEdge}=[
  ->, 
  decorate, 
  decoration={snake,amplitude=.4mm,segment length=2.5mm,post length=0.5mm},
]

\tikzstyle{snakeEdgea}=[
  ->, 
  decorate, 
  decoration={snake,amplitude=.4mm,segment length=3mm,post length=0.5mm}
]

\newcommand{\redEdge}{\tikz{
  \node (tmpa) at (-0.4,0)[mnode]{};
  \node (tmpb) at (0.4,0)[mnode]{};
  \draw [cEdge, draw=red] (tmpa) to (tmpb);
}}

\newcommand{\yellowEdge}{\tikz{
  \node (tmpa) at (-0.4,0)[mnode]{};
  \node (tmpb) at (0.4,0)[mnode]{};
  \draw [cEdge, draw=yellow] (tmpa) to (tmpb);
}}

\newcommand{\blueEdge}{\tikz{
  \node (tmpa) at (-0.4,0)[mnode]{};
  \node (tmpb) at (0.4,0)[mnode]{};
  \draw [cEdge, draw=blue] (tmpa) to (tmpb);
}}

\newcommand{\violetEdge}{\tikz{
  \node (tmpa) at (-0.4,0)[mnode]{};
  \node (tmpb) at (0.4,0)[mnode]{};
  \draw [cEdge, draw=violet] (tmpa) to (tmpb);
}}

\newcommand{\textnode}[1]{      \begin{minipage}{100pt}    #1
      \end{minipage}}

\newcommand{\classNode}[2]{
    \begin{scope}[shift={#1}]
      \node[text depth=0.7cm, anchor=north] (tmp) at (0,0) {
  \begin{minipage}{170pt}    #2
  \end{minipage}      };
    \end{scope}
}

\tikzstyle{class rectangle}=[
  draw=black,
  inner sep=0.2cm,
  rounded corners=5pt,
  thick
]

\tikzstyle{mline}=[
  draw=black,
  inner sep=0.2cm,
  rounded corners=5pt,
  thick
]

\tikzstyle{mainclass rectangle}=[
  draw=blue,
  inner sep=0.2cm,
  rounded corners=5pt,
  very thick
]

\newcommand{\redpoint}{\tikz{\node (tmp) at (-1.5,1.6)[mnode, fill=red]{};}}
\newcommand{\yellowpoint}{\tikz{\node (tmp) at (-1.5,1.6)[mnode, fill=yellow]{};}}
\newcommand{\bluepoint}{\tikz{\node (tmp) at (-1.5,1.6)[mnode, fill=blue]{};}}
\newcommand{\violetpoint}{\tikz{\node (tmp) at (-1.5,1.6)[mnode, fill=violet]{};}}

\newcommand{\bgcolor}{black!5}
\newcommand{\substructurefillcolor}{blue!40}
\newcommand{\substructureufillcolor}{blue!60!red!40}
\newcommand{\substructuredrawcolor}{blue!80}
\newcommand{\substructureudrawcolor}{blue!60!red!80}

\newcommand{\structurefillcolor}{blue!20}
\newcommand{\structureufillcolor}{blue!60!red!20}
\newcommand{\structuredrawcolor}{blue!40}
\newcommand{\structureudrawcolor}{blue!60!red!40}

\newcommand{\mnodedrawcolor}{black!80}
\newcommand{\mnodefillcolor}{black!40}

\newcommand{\mnc}{\mnodefillcolor}
\newcommand{\mgrey}{\mnodefillcolor}
\newcommand{\mblue}{blue}
\newcommand{\mred}{red}
\newcommand{\myellow}{yellow}
\newcommand{\mviolet}{violet}

\newcommand{\mopacity}{1.0}
\newcommand{\opac}{0.2}

\pgfdeclarelayer{background}
\pgfdeclarelayer{substructure}
\pgfdeclarelayer{edges}
\pgfdeclarelayer{foreground}
\pgfsetlayers{background,substructure,edges,main,foreground}

\tikzstyle{background rectangle}=[
  fill=black!5,
  draw=black!20,
  inner sep=0.2cm,
  rounded corners=5pt
]

\tikzstyle{delEdge}=[
  -latex', 
  semithick, 
  shorten >=3pt, 
  shorten <=3pt,
  dotted,
  draw=black!80,
]

\newcommand{\deleteuedge}[2]
{    \draw [deluEdge] #1 to #2;
    \smallcross{($#1!0.5!#2$)}
}
\newcommand{\deleteedge}[2]
{    \draw [delEdge] #1 to #2;
    \smallcross{($#1!0.5!#2$)}
}

\tikzstyle{insEdge}=[
  -latex', 
  semithick, 
  shorten >=3pt, 
  shorten <=3pt,
  dotted,
  draw=black!80,
]

\tikzstyle{dbbackground rectangle}=[
  fill=blue!5,
  draw=blue!20,
  inner sep=0.2cm,
  rounded corners=5pt
]
\tikzstyle{auxbackground rectangle}=[
  fill=green!20,
  draw=green!60,
  inner sep=0.2cm,
  rounded corners=5pt
]

\newcommand{\drawyPoints}[1]{
   \foreach \n/\t/\p/\l in #1 {
      \node (\n) at \p [mnode, label={[labelbg]\l:\t}] {};
    }
}

\newcommand{\cross}[1]{
  \begin{scope}[shift={#1}]
    \draw [very thick, draw=red](0.5,0.5) -- (-0.5,-0.5);
    \draw [very thick, draw=red](-0.5,0.5) -- (0.5,-0.5);
  \end{scope}
}
\newcommand{\smallcross}[1]{
  \begin{scope}[shift={#1}]
    \draw [ultra thick, draw=red](0.25,0.25) -- (-0.25,-0.25);
    \draw [ultra thick, draw=red](-0.25,0.25) -- (0.25,-0.25);
  \end{scope}
}

\newcommand{\mytable}[1]{
  \begin{scope}[shift={#1}]
    \draw [fill=white](-1.8,-2) rectangle (1.8,2);
    \draw(-1.8,1.5)--(1.8,1.5);
    \draw(-0.9,-2)--(-0.9,2);
    \draw(0,-2)--(0,2);
    \draw(0.9,-2)--(0.9,2);
  \end{scope}
}

\newcommand{\database}[1]{
  \begin{scope}[shift={#1}]
    \mytable{(1,0.7)};
    \mytable{(-1,0)};
    \mytable{(0.2,-0.8)};
  \end{scope}
}

\newcommand{\databasewithbg}[1]{
  \database{#1}
  \begin{pgfonlayer}{substructure}
    \begin{scope}[shift={#1}]
  \draw [auxbackground rectangle](-4.5,-4) rectangle (4.5,4);
    \end{scope}
  \end{pgfonlayer}
}

\newcommand{\multidedge}[4]{
    \node (1) at #1[invisible] {};
    \node (2) at #2[invisible] {};
    \node (3) at #3[invisible] {};
    \node (4) at #4[invisible] {};

     \draw[thick] (1) to [bend left] (3);
     \draw[thick] (2) to [bend right] (3);
     \draw[->, thick] (3) to [thick] (4);
}

\newcommand{\auxUpdateEdge}[1]{
    \begin{scope}[shift={#1}]
      \multidedge{(-4,-5)}{(1,-2)}{(-1,-4)}{(0,-5.5)}
    \end{scope}
}

\tikzstyle{substructure}=[
  fill=\substructurefillcolor,
  draw=\substructuredrawcolor,
  inner sep=0.2cm,
  rounded corners=5pt
]
\tikzstyle{substructureu}=[
  fill=\substructureufillcolor,
  draw=\substructureudrawcolor,
  inner sep=0.2cm,
  rounded corners=5pt
]
\tikzstyle{substructurewoborder}=[
  fill=\substructurefillcolor,
  draw=\substructurefillcolor,
  inner sep=0.2cm,
  rounded corners=5pt
]

\tikzstyle{structure}=[
  fill=\structurefillcolor,
  draw=\structuredrawcolor,
  inner sep=0.2cm,
  rounded corners=5pt
]

\tikzstyle{structureu}=[
  fill=\structureufillcolor,
  draw=\structureudrawcolor,
  inner sep=0.2cm,
  rounded corners=5pt
]

\tikzstyle{labelsubstruc}=[
  inner sep=0pt,
  outer sep=1.5pt,
]

\tikzstyle{labelbg}=[
  inner sep=1.5pt,
]

\newcommand{\picDynPropLowerBoundClique}{
  \begin{tikzpicture}[
      xscale=1,
      yscale=1,
      font=\large,
      show background rectangle,
    ]

        \draw [mline] (-1.5, 8) rectangle (10.5, 5);
        \draw [mline] (-1, 7.5) rectangle (8.5, 5.5);
          \draw [mline] (-1.5, 2.5) rectangle (10.5, 4.5);

     \draw [mline] (4,2.5) -- (5,4.5);
    
    \node (A) at (11, 6.5) [nameNode,font=\Large] {$A$};
    \node (C) at (12, 3.5) [nameNode,font=\Large] {$C \df \allsubsets{A}{k+1}$};
     \node (Ap) at (8, 7) [nameNode,font=\Large] {$A'$};
     \node (B) at (4,4) [nameNode,font=\Large] {$B$};
     \node (Bp) at (5,3) [nameNode,font=\Large] {$B'$};    

    \node (b1) at (0, 7)[mnode, label={[labelbg]left:$b_1$}] {};
    \node (b2) at (1, 7) [mnode, label={[labelbg]left:$b_2$}] {};
    \node (bk) at (2.5, 7) [mnode, label={[labelbg]right:$b_{k+1}$}] {};
    \node (bp1) at (4,6) [mnode , label={[labelbg]left:$b'_1$}] {};
    \node (bp2) at (5,6) [mnode, label={[labelbg]left:$b'_2$}] {};
    \node (bpk) at (6.5,6) [mnode, label={[labelbg]right:$b'_{k+1}$}] {};
    \node (b12k) at (-0.25,3) [mnode, label={[labelbg]right:$b = \{b_1, b_2, \ldots, b_{k+1}\}$}] {};
    \node (bp12k) at (5.75,4) [mnode, label={[labelbg]right:$b' = \{b'_1, b'_2, \ldots, b'_{k+1}\}$}] {};

     \draw [dotsEdge, ultra thick] (b2) to (bk);
     \draw [dotsEdge, ultra thick] (bp2) to (bpk);

    \draw [mEdge] (b12k) to (b1);
    \draw [mEdge] (b12k) to (b2);
    \draw [mEdge] (b12k) to (bk);
  \end{tikzpicture}
}

\newcommand{\picDynPropLowerBoundEAFO}{
  \begin{tikzpicture}[
      xscale=1,
      yscale=1,
      font=\large,
      show background rectangle,
    ]

        \draw [mline] (-1.5, 8) rectangle (10.5, 5);
        \draw [mline] (-1, 7.5) rectangle (8.5, 5.5);
          \draw [mline] (-1.5, 2.5) rectangle (10.5, 4.5);

     \draw [mline] (4,2.5) -- (5,4.5);
    
    \node (A) at (11, 6.5) [nameNode,font=\Large] {$A$};
    \node (C) at (12, 3.5) [nameNode,font=\Large] {$C \df \allsubsets{A}{k+1}$};
     \node (Ap) at (8, 7) [nameNode,font=\Large] {$A'$};
     \node (B) at (4,4) [nameNode,font=\Large] {$B$};
     \node (Bp) at (5,3) [nameNode,font=\Large] {$B'$};    

    \node (b1) at (0, 7)[mnode, label={[labelbg]left:$b_1$}] {};
    \node (b2) at (1, 7) [mnode, label={[labelbg]left:$b_2$}] {};
    \node (bk) at (2.5, 7) [mnode, label={[labelbg]right:$b_{k+1}$}] {};
    \node (bp1) at (4,6) [mnode , label={[labelbg]left:$b'_1$}] {};
    \node (bp2) at (5,6) [mnode, label={[labelbg]left:$b'_2$}] {};
    \node (bpk) at (6.5,6) [mnode, label={[labelbg]right:$b'_{k+1}$}] {};
    \node (b12k) at (-0.25,3) [mnode, label={[labelbg]right:$b = \{b_1, b_2, \ldots, b_{k+1}\}$}] {};
    \node (bp12k) at (5.75,4) [mnode, label={[labelbg]right:$b' = \{b'_1, b'_2, \ldots, b'_{k+1}\}$}] {};

    \node (s) at (4, 1.5)[mnode, label={[labelbg]below:$s$}] {};
    \node (t) at (4, 9) [mnode, label={[labelbg]above:$t$}] {};

     \draw [dotsEdge, ultra thick] (b2) to (bk);
     \draw [dotsEdge, ultra thick] (bp2) to (bpk);

    \draw [mEdge] (b12k) to (b1);
    \draw [mEdge] (b12k) to (b2);
    \draw [mEdge] (b12k) to (bk);
    \draw [mEdge,  shorten >=10pt] (s) to (b12k);
    \draw [mEdge, dashed] (b1) to (t);
    \draw [mEdge, dashed] (b2) to (t);
    \draw [mEdge, dashed] (bk) to (t);
    
  \end{tikzpicture}
}

\newcommand{\picDynPropLowerBoundConstructiona}{
  \begin{tikzpicture}[
      xscale=0.8,
      yscale=0.6,
      font=\large,
      show background rectangle,
    ]

        \draw [mline] (-1, -0.5) rectangle (5, 7);
        \draw [mline] (0, 1) rectangle (4.5, 6.5);
        \draw [mline] (6.5, -0.5) rectangle (12.5, 7);

    \draw [mline] (6.5,2) -- (12.5,3);
    
    \node (A) at (1.5,-1.2) [nameNode,font=\Large] {$A$};
    \node (C) at (9.5,-1.2) [nameNode,font=\Large] {$C \df \allsubsets{A}{k+1}$};
    \node (Ap) at (0.5, 6.0) [nameNode,font=\Large] {$A'$};
    \node (B) at (7.5,2.8) [nameNode,font=\Large] {$B$};
    \node (Bp) at (11.5,2.2) [nameNode,font=\Large] {$B'$};    

    \node (b1) at (3.5, 6)[mnode, label={[labelbg]left:$b_1$}] {};
    \node (b2) at (3.5,5) [mnode, label={[labelbg]left:$b_2$}] {};
    \node (bk) at (3.5,3.0) [mnode, label={[labelbg]left:$b_{k+1}$}] {};
    \node (bp1) at (1.5,4.5) [mnode , label={[labelbg]left:$b'_1$}] {};
    \node (bp2) at (1.5,3.5) [mnode, label={[labelbg]left:$b'_2$}] {};
    \node (bpk) at (1.5,1.5) [mnode, label={[labelbg]left:$b'_{k+1}$}] {};
    \node (b12k) at (7.5,5) [mnode, label={[labelbg]right:$b = \{b_1, b_2, \ldots, b_{k+1}\}$}] {};
    \node (bp12k) at (7.5,0.5) [mnode, label={[labelbg]right:$b' = \{b'_1, b'_2, \ldots, b'_{k+1}\}$}] {};

     \draw [dotsEdge] (b2) to (bk);
     \draw [dotsEdge] (bp2) to (bpk);

    \draw [mEdge] (b12k) to (b1);
    \draw [mEdge] (b12k) to (b2);
    \draw [mEdge] (b12k) to (bk);
  \end{tikzpicture}
}

\newcommand{\picDynPropUpperBoundConstructionA}{
  \begin{tikzpicture}[
      xscale=0.8,
      yscale=0.8,
      font=\large,
      show background rectangle,
    ]

     \node (G) at (-0.5,3) [nameNode,font=\large] {$G$:};

    \node (a1) at (1, -0.5)[mnode, label={[labelbg]below:$a_1$}] {};
    \node (a2) at (2,1) [mnode, label={[labelbg]right:$a_2$}] {};
    \node (a3) at (0,1) [mnode, label={[labelbg]left:$a_{3}$}] {};
    \node (a4) at (1,2.5) [mnode , label={[labelbg]above:$a_4$}] {};

    \draw [dDashedEdge] (a3) to (a1);
    \draw [dEdge] (a3) to (a2);
    \draw [dEdge] (a2) to (a4);
  \end{tikzpicture}
}
\newcommand{\picDynPropUpperBoundConstructionB}{
  \begin{tikzpicture}[
      xscale=0.8,
      yscale=0.8,
      font=\large,
      show background rectangle,
    ]

     \node (G) at (-0.5,3) [nameNode,font=\large] {$H$:};

    \node (a1) at (1, -0.5)[mnode, label={[labelbg]below:$x_1$}] {};
    \node (a2) at (2,1) [mnode, label={[labelbg]right:$x_2$}] {};
    \node (a3) at (0,1) [mnode, label={[labelbg]left:$x_{3}$}] {};
    \node (a4) at (1,2.5) [mnode , label={[labelbg]above:$x_4$}] {};

    \draw [dEdge] (a3) to (a1);
    \draw [dEdge] (a3) to (a2);
    \draw [dEdge] (a2) to (a4);
    \draw [dEdge] (a1) to (a2);
  \end{tikzpicture}
}

  \newcommand{\substructpic}[2]{
      \draw[substructure] (0,0) ellipse (0.8 and 0.8);
      \node (tmp) at (0,0.3) {#1};
      {[shift={(0,-0.3)}] #2}
   }
  \newcommand{\structpica}[1]{
       \draw[structure] (-1.0, -1.0) rectangle (1.2, 1.9);
      \node (tmp) at (0,1.3) {#1};
   }
  \newcommand{\structpicb}[1]{
      \node[structure, isosceles triangle, shape border rotate=90, minimum height=2.8cm, minimum width=3.2cm, anchor=lower side, isosceles triangle stretches
] at (0,-1.0) {};
      \node (tmp) at (0,1.3) {#1};
   }

  \newcommand{\substructpicu}[2]{
      \draw[substructureu] (0,0) ellipse (0.8 and 0.8);
      \node (tmp) at (0,0.3) {#1};
      {[shift={(0,-0.3)}] #2}
   }

  \newcommand{\structpicua}[1]{
      \draw[structureu] (-1.0, -1.0) rectangle (1.2, 1.9);
      ] at (0,-1.0) {};
      \node (tmp) at (0,1.3) {#1};
   }
  \newcommand{\structpicub}[1]{
     \node[structureu, isosceles triangle, shape border rotate=90, minimum height=2.8cm, minimum width=3.2cm, anchor=lower side, isosceles triangle stretches
] at (0,-1.0) {};
      \node (tmp) at (0,1.3) {#1};
   }

\newcommand{\picsubstructure}{
    \begin{tikzpicture}[
       xscale=1.0,
       yscale=1.0,
    ]
    \node[invisible] (tmp) at (-2,0) {};

   \begin{scope}[shift={(0,4.5)}]
      \structpica{$S$}
      \node (tmp) at (-1.5,1.6){$\calS$};
      \substructpic{$A$}{\node (lo) at (0.2,-0.2)[mnode,label={left:$\vec a$}] {};}
   \end{scope}
   \begin{scope}[shift={(5,4.5)}] 
      \structpicb{$T$}
      \node (tmp) at (+1.2,1.6){$\calT$};
      \substructpic{$B$}{\node (ro) at (-0.4,-0.2)[mnode,label={right:$\pi(\vec a)$}] {};}
    \end{scope}
    \draw [dEdge, very thick] (lo) to node[above]{$\cong$}node[below]{$\pi$}(ro);

      \draw [dEdge](0, 3.5) -- node[right]{$\alpha = \delta(\vec a)$}(0, 2);
      \draw [dEdge](5, 3.5) -- node[right]{$\beta = \delta(\pi(\vec a))$}(5, 2); 

  \begin{scope}[shift={(0,0)}] 
    \structpicua{$S$}
    \node (tmp) at (-1.7,1.6){$\updateState{\prog}{\alpha}{\calS}$};
    \substructpicu{$A$}{\node (lu) at (0.2,0)[invisible]{};}
  \end{scope}

  \begin{scope}[shift={(5,0)}] 
    \structpicub{$T$}
    \node (tmp) at (+1.2,1.6){$\updateState{\prog}{\beta}{\calT}$};
    \substructpicu{$B$}{\node (ru) at (-0.3,0)[invisible]{};}
  \end{scope}
    
      \draw [dEdge, very thick] (lu) to node[above]{$\cong$}node[below]{$\pi$}(ru);

  \end{tikzpicture}
}    
     \title{The Dynamic Descriptive Complexity of $k$-Clique\footnote{An extended abstract of this work appeared in the proceedings of the conference Mathematical Foundations of Computer Science 2014 (MFCS 2014)\cite{Zeume14}. The author acknowledges the financial support by the German DFG under grant SCHW 678/6-1.}}

    \author[ ]{Thomas Zeume} 
    \affil[ ]{TU Dortmund University}
  \affil[ ]{\textit {thomas.zeume@tu-dortmund.de}}

    \maketitle

  \begin{abstract}
    In this work the dynamic descriptive complexity of the $k$-clique query is studied. It is shown that when edges may only be inserted then $k$-clique can be maintained by a quantifier-free update program of arity $k-1$, but it cannot be maintained by a quantifier-free update program of arity $k-2$ (even in the presence of unary auxiliary functions). This establishes an arity hierarchy for graph queries for quantifier-free update programs under insertions. The proof of the lower bound uses upper and lower bounds for Ramsey numbers.        
  \end{abstract}

  \sloppy

  \section{Introduction}\label{section:introduction}
    The $k$-clique query --- does a given graph contain a $k$-clique? --- can be expressed by an existential first-order formula with $k$ quantifiers. In this work we study the descriptive complexity of the $k$-clique query in a setting where edges may be inserted dynamically into a graph. In particular we are interested in lower bounds for the resources necessary to express this query dynamically.

The dynamic descriptive complexity framework (short: dynamic complexity), independently introduced by Dong, Su and Topor \cite{DongT92, DongS93} and Patnaik and Immerman \cite{PatnaikI94}, models the setting of dynamically changing graphs. For a graph subject to changes, auxiliary relations are maintained with the intention to help answering a query $\query$. When an insertion (or, in the general setting, a deletion) of an edge occurs, every auxiliary relation is updated through a first-order query that can refer to both the graph itself and the auxiliary relations. The query $\query$ is maintained by such a program, if one designated auxiliary relation always stores the current query result. The class of all queries maintainable by first-order update programs is called $\DynFO$\footnote{In this work we stick to the specific framework introduced by Patnaik and Immerman. The main result also holds in the framework of Dong, Su and Topor, even though both frameworks differ in details.}. 

Since $k$-clique can be expressed in existential first-order logic, it can be trivially maintained by a first-order update program. Therefore for characterizing the precise dynamic complexity of this query we need to look at fragments of $\DynFO$. It turns out that $k$-clique can still be maintained under insertions when the update formulas are not allowed to use quantifiers at all and auxiliary relations may only have restricted arity. We obtain the following characterization.

\begin{itemize}
 \item [\textbf{Main result:}] When only edge insertions are allowed then $k$-clique ($k \geq 3$) can be maintained by a quantifier-free update program of arity $k-1$, but it cannot be maintained by a quantifier-free update program of arity $k-2$.
\end{itemize}

Actually we prove that every property expressible by a positive existential first-order formula with $k$ quantifiers and, possibly, negated equality atoms can be maintained by a $(k-1)$-ary quantifier-free program under insertions.

In order to understand why the lower bound contained in the above result is interesting, we shortly discuss the status quo of lower bound methods for the dynamic complexity framework. Up to now very few lower bounds are known; all of them for fragments of $\DynFO$ obtained by either bounding the arity of the auxiliary relations or by restricting the usage of quantifiers (or by restricting both). Usually those bounds have been stated only for the setting where both insertions and deletions are allowed. We emphasize that our lower bound for the insertion-only setting immediately transfers to this more general setting.  

The study of bounded arity auxiliary relations was started by Dong and Su~\cite{DongS98}. They exhibited concrete graph queries that cannot be maintained in unary $\DynFO$, and they showed that \DynFO has an arity hierarchy for general (that is non-graph) queries. Both results rely on previously obtained static lower bounds.

Hesse started the study of the quantifier-free fragment of $\DynFO$ (short: $\DynProp$) in~\cite{Hesse03}. Although this fragment appears to be  rather weak at first glance, deterministic reachability \cite{Hesse03} and regular languages \cite{GeladeMS12} can be maintained in $\DynProp$. In  \cite{GeladeMS12},  Gelade et al. also provided first lower bounds. They proved that non-regular languages as well as the alternating reachability problem cannot be maintained in this fragment. The use of very restricted graphs in the proof of the latter result implies that there is a $\QFO[\exists^*\forall^*\exists^*]$-definable query that cannot be maintained in $\DynProp$. In \cite{ZeumeS15reach} it was shown that reachability and $3$-clique cannot be maintained in the binary quantifier-free fragment of $\DynFO$.

In general, it is a difficult task to prove lower bounds in the dynamic complexity setting; even when update formulas are restricted to the quantifier-free fragment of first-order logic. We are not at the point where we can, when given a query, apply a set of tools in order to prove that the query cannot be maintained in \DynProp. Finding more queries that cannot be maintained in \DynProp seems to be a reasonable approach towards finding more generic proof methods.

The lower bound provided by the main result follows this approach and is interesting in two ways. First, it exhibits, for every $k$, a query in $\QFO[\exists^{k}]$ that cannot be maintained in $(k-2)$-ary $\DynProp$, even when only insertions are allowed. We believe that finding simple queries that cannot be maintained will advance the understanding of dynamic complexity. Using the same proof technique as is used for the main result we also exhibit a $\QFO[\exists^*\forall^*]$-definable query that cannot be maintained in $\DynProp$; this improves the result \mbox{from \cite{GeladeMS12}}. Second, the main result establishes the first arity hierarchy for graph queries, although for a weak fragment of $\DynFO$ and for insertions only.  

The proof of the lower bound uses upper and lower bounds for Ramsey numbers. This has been quite curious for us.

A natural question is how far this method to prove lower bounds can be pushed. As an intermediate step between the quantifier-free fragment and \DynFO itself, Hesse suggested the study of quantifier-free update programs with auxiliary functions  \cite{Hesse03}. The main result can be extended as follows.

\begin{itemize}
 \item [\textbf{Extension of the main result:}] $k$-clique ($k\geq 3$) cannot be maintained by a quantifier-free update program of arity $k-2$ with unary auxiliary functions.
\end{itemize}

So far there have been only two lower bounds for dynamic classes with auxiliary functions. Alternating reachability was actually shown to be not maintainable in the quantifier-free fragment of $\DynFO$ even in the presence of a successor and a predecessor function \cite{GeladeMS12}. Further, in \cite{ZeumeS15reach}, it was shown that reachability cannot be maintained in unary \DynProp with unary auxiliary functions. Thus our extension is a first lower bound for arbitrary unary auxiliary functions and $k$-ary auxiliary relations, for every fixed $k$. We also explain why the lower bound technique does not extend to binary auxiliary functions.  To this end we show that binary $\DynQF$ can maintain every boolean graph property when the domain is large with respect to the actually used domain.

A preliminary version of this work appeared in \cite{Zeume14}. It was without many of the proofs and did not contain the lower bound for a $\QFO[\exists^*\forall^*]$-definable query.

\paragraph{Related work} Up to now we mentioned only work immediately relevant for this work. For the interested reader we give a short list of further related work.

Further lower bounds have been shown in \cite{DongLW95, DongLW03, GraedelS12}. Further upper bounds have been shown in \cite{Etessami98, Hesse03reach, WeberS07, GraedelS12}. Many other aspects such as whether the auxiliary relations are determined by the current structure (see e.g.\ \cite{PatnaikI97, DongS97, GraedelS12})  and the presence of an order (see e.g.\ \cite{GraedelS12})  have been studied.

\paragraph{Outline} In Section \ref{section:preliminaries} we fix some of our notations and in Section \ref{section:setting} we recapitulate the formal dynamic complexity framework. In Sections \ref{section:positiveefo} and \ref{section:aritylowerbound} we prove the upper and lower bound of the main result, respectively. In Section \ref{section:auxfunctions} we study the extension of $\DynProp$ by auxiliary functions. 

\paragraph{Acknowledgement} I am grateful to Thomas Schwentick for encouraging discussions and many suggestions for improving a draft of this work. Further I thank Samir Datta for fruitful discussions while he was visiting Dortmund. I thank Nils Vortmeier for proofreading.

  \section{Preliminaries}\label{section:preliminaries}
    We fix some of our notations. Most notations are reused from \cite{ZeumeS15reach}. The reader can feel free to skip this section and return when encountering unknown notations. 

A \textit{domain} $\domain$ is a finite set. A (relational) \emph{schema} $\schema$ consists of a set $\relSchema$ of relation symbols  and a set $\conSchema$ of constant symbols together with an arity function $\arity: \relSchema \rightarrow \N$. A \emph{database} $\db$ of schema $\schema$ with domain $\domain$ is a mapping that assigns to every relation symbol $R \in \relSchema$ a relation of arity $\arity(R)$ over $\domain$  and to every constant symbol $c \in \conSchema$ an element (called \textit{constant}) \mbox{from $\domain$}.

A  $\schema$-\emph{structure} $\struc$ is a pair $(\domain, \db)$ where $\db$ is a database with schema $\schema$ and domain $\domain$. If $\struc$ is a structure over domain $\domain$ and $\domain'$ is a subset of $\domain$ that contains all constants of $\struc$, then the substructure of $\struc$ induced by $\domain'$ is \mbox{denoted by $\restrict{\struc}{D'}$.}

A tuple $\vec a = (a_1, \ldots, a_k)$ is \emph{$\prec$-ordered} with respect to a linear order\footnote{All linear orders in this work are strict.}  $\prec$ of the domain, if $a_1 \norder \ldots \norder a_k$. The \emph{$k$-ary atomic type}  \type{\struc}{\vec a} of $\vec a$ over $\domain$ with respect to $\struc$ is the set of all atomic formulas $\varphi(\vec x)$ with \mbox{$\vec x = (x_1, \ldots, x_k)$} for which $\varphi(\vec a)$ holds in $\struc$, where $\varphi(\vec a)$ is short for the substitution of $\vec x$ by $\vec a$ in $\varphi$. As we only consider atomic types here, we will often simply say type instead of atomic type.

For a set $A$, denote by $A^k$ the set of all $k$-tuples over $A$ and, following \cite{GrahamRS1990}, by $\allsubsets{A}{k}$ the set of all $k$-element subsets of $A$. 
A \textit{$k$-hypergraph} $G$ is a pair $(V, E)$ where $V$ is a set and $E$ is a subset \mbox{of $\allsubsets{V}{k}$}. If $E = \allsubsets{V}{k}$  then $G$ is called \textit{complete}. 
An \textit{$r$-coloring} $\col$ of $G$ is a mapping that assigns to every edge in $E$ a color from $\{1, \ldots, r\}$. A \textit{$r$-colored $k$-hypergraph} is a pair $(G, \col)$ where $G$ is a $k$-hypergraph and $col$ is a $r$-coloring of $G$. If the name of the $r$-coloring is not important we also say \textit{$G$ is $r$-colored}.

A (directed) graph $G = (V, E)$ is in $\clique{k}$ if $V$ contains $k$ nodes $v_1,\ldots, v_k$ such that $(v_i, v_j) \in E$ or $(v_j, v_i) \in E$ for all $1 \leq i, j \leq k$.
 
  \section{Dynamic Setting}\label{section:setting}
    The following introduction to dynamic descriptive complexity is borrowed from previous \mbox{work \cite{ZeumeS15reach, ZeumeS14dyncqwithremark}}. Although the focus of this work is on maintaining the $k$-clique query under insertions, we introduce the general dynamic complexity framework in order to be able to give a broader discussion of concrete results.

A \emph{dynamic instance}\defindex{dynamic instance} of a query $\query$ is a pair $(\db, \alpha)$, where $\db$ is a database over some finite domain $\domain$ and $\alpha$ is a sequence of modifications to~$\db$. Here, a \emph{modification}\defindex{modification} is either an insertion of a tuple over $\domain$ into a relation of~$\db$ or a deletion of a tuple from a relation of~$\db$. The result of $\query$ for $(\db, \alpha)$ is the relation that is obtained by first applying the modifications \mbox{from $\alpha$} to $\db$ and then evaluating $\query$ on the resulting  database. We use the Greek letters $\alpha$ and $\beta$ to denote modifications as well as modification sequences.
The database resulting from applying a modification $\alpha$ to a database $\db$ is denoted by $\alpha(\db)$. The result $\updateDB{\alpha}{\db}$ of applying a sequence of modifications $\alpha \df \alpha_1 \ldots \alpha_m$ to a database $\db$ is defined by~$\updateDB{\alpha}{\db} \df \updateDB{\alpha_m}{\ldots (\updateDB{\alpha_1}{\db})\ldots}$.

Dynamic programs, to be defined next, consist of an initialization mechanism and an update program.  The former  yields, for every (input) database $\db$,  an initial state with initial auxiliary  data. The latter defines the new state of the dynamic program for each possible modification.

A \emph{dynamic schema} is a tuple  $(\inpSchema, \auxSchema)$ where $\inpSchema$ and $\auxSchema$ are the schemas of the input database and the auxiliary database, respectively. For the moment schema $\auxSchema$ may not contain constant symbols. This will be adapted in Section~\ref{section:auxfunctions}.  We always let $\tau\df\inpSchema\cup\auxSchema$. 

\begin{definition}(Update program)\label{def:updateprog}
  An \emph{update program} \prog over a dynamic schema \mbox{$(\inpSchema, \auxSchema)$} 
  is a set of first-order formulas (called \textit{update formulas} in the following) that contains,  for every relation symbol $R$ in $\auxSchema$ and every
  $\delta \in \{\ins_S, \del_S\}$ where $S$ is a relation symbol from $\inpSchema$, an update formula  $\uf{R}{\delta}{\vec x}{\vec y}$ over the schema $\schema$  where $\vec x$ and $\vec y$ have the same arity as $S$ and $R$, respectively.
\end{definition}

A \emph{program state} $\state$ over dynamic schema \mbox{$(\inpSchema, \auxSchema)$} is a structure $(\domain, \inp,  \aux)$ where $D$ is a finite domain, $\inp$ is a database over the input schema (the \emph{current database}) and $\aux$ is a database over the auxiliary schema (the \emph{auxiliary database}).

The \emph{semantics of update programs} is as follows\defindex{absolute semantics}. Let $P$ be an update program, $\state=(\domain, \inp,\aux)$ be a program state and $\alpha = \delta(\vec a)$ a modification where $\vec a$ is a tuple over $\domain$ and \mbox{$\delta \in \{\ins_S, \del_S\}$} for some $S \in \inpSchema$. If $P$ is in state $\state$ then the application of $\alpha$ yields the new state $\updateState{\prog}{\alpha}{\state} \df (\domain, \alpha(\inp), \aux')$  where, in $\aux'$, a relation symbol $R \in \auxSchema$ is interpreted by $\{\vec b \mid \state \models \uf{R}{\delta}{\vec a}{\vec b}\}$. The effect $P_\alpha(\state)$\defindex{P@$P_\alpha(\state)$} of applying a modification sequence $\alpha \df \alpha_1 \ldots \alpha_m$ to a state $\state$ is the \mbox{state $\updateState{P}{\alpha_m}{\ldots (\updateState{P}{\alpha_1}{\state})\ldots}$}.

\begin{definition}(Dynamic program) \label{definition:dynprog}
  A \emph{dynamic program} is a triple $(P,\init,Q)$, where
  \begin{compactitem}
   \item  $P$ is an update program over some dynamic schema
  \mbox{$(\inpSchema, \auxSchema)$}, 
    \item \init is a mapping that maps $\inpSchema$-databases to $\auxSchema$-databases, and 
    \item $Q\in\auxSchema$ is a designated \emph{query symbol}.
  \end{compactitem}
\end{definition}

A dynamic program $\calP=(P,\init,Q)$ \emph{maintains}  a dynamic query  $\dynProb{$\query$}$ if, for every dynamic instance $(\db,\alpha)$, the relation $\query(\alpha(\db))$ coincides with the query relation $Q^\state$ in the state \mbox{$\state=P_\alpha(\state_\init(\db))$}, where $\state_\init(\db)$ is the initial state for $\db$, i.e.\ \mbox{$\state_\init(\db) \df (\domain, \db,  \auxInit(\db))$}.

\begin{definition}(\DynFO and \DynProp) \label{definition:dync}
  \DynFO is the class of all dynamic queries that  can be maintained by dynamic programs with first-order update formulas and arbitrary initialization mappings. $\DynProp$ is the subclass of $\DynFO$, where update formulas are not allowed to use quantifiers.
  A dynamic program is \emph{$k$-ary} if the arity
  of its auxiliary relation symbols is at most $k$. By $k$-ary
  $\DynProp$ (resp. $\DynFO$) we refer to dynamic queries that can be
  maintained with $k$-ary dynamic programs. 
\end{definition}

In the literature, classes with restricted initialization mappings have been studied as well, \mbox{see \cite{ZeumeS15reach}} for a discussion. The choice made here is not a real restriction as lower bounds proved for arbitrary initialization hold for restricted initialization as well. On the other hand, our upper bounds also hold for other settings of initialization; with the single exception of Theorem \ref{theorem:padding}, which requires arbitrary initialization. Furthermore our results also hold in the related setting where domains can be infinite.

  \section{$k$-Clique Can Be Maintained under Insertions with \mbox{Arity $k-1$}}\label{section:positiveefo}
    In this section we prove that the $k$-clique query can be maintained in $(k-1)$-ary $\DynProp$ when only edge insertions are allowed. Instead of proving this result directly, we show that the class of all semi-positive existential first-order queries can be maintained in $\DynProp$ under insertions.

A \emph{positive existential first-order query} over schema $\schema$ is a query that can be expressed by a first-order formula of the form $\varphi(\vec y) = \exists \vec x \psi(\vec x, \vec y)$ where $\psi$ is a quantifier-free formula that contains no negations. \emph{Semi-positive existential first-order queries} may contain literals of the form $z_i \neq z_j$.

We will prove that every semi-positive existential first-order query can be maintained in $\DynProp$ when only insertions are allowed. More precisely, it will be shown that $(k-1)$-ary \DynProp is sufficient for boolean queries with $k$ existential quantifiers. In particular \clique{k} can be maintained in $(k-1)$-ary $\DynProp$. Before turning to the proof we give some intuition.
\begin{example}\label{example:cliqueupperbound}
  We show how to maintain \clique{3} in binary $\DynProp$ under insertions. The very simple idea is to use an additional binary auxiliary \mbox{relation $R$} that stores all edges whose insertion would complete a 3-clique. Hence a tuple $(a_1,a_2)$ is inserted into $R$ as soon as deciding whether there is a $3$-clique containing the nodes $a_1$ and $a_2$ only depends on those two nodes. We refer to Figure \ref{figure:cliqueupperbound} for an illustration.

  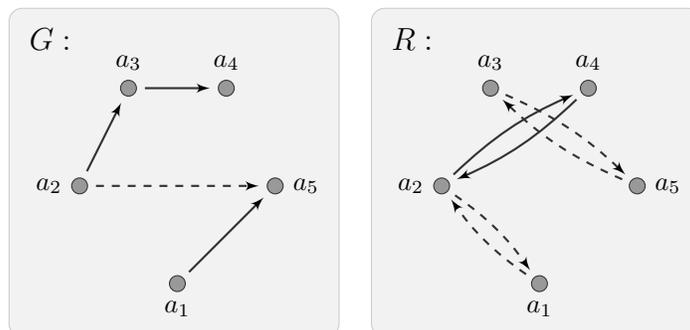
\begin{figure}[t] 
    \begin{center}
     \scalebox{1.0}{
	      \begin{tikzpicture}[
		xscale=1.3,
		yscale=1.3,
		show background rectangle
			      ]
	      
	    \node (tmp) at (-0.3,2) [nameNode,font=\large] {$G:$};
	      
	    \node (1) at (0, 0.5)[mnode, label={left:$a_2$}] {};
	    \node (2) at (0.5, 1.5)[mnode, label={above:$a_3$}] {};
	    \node (3) at (1.5, 1.5)[mnode, label={above:$a_4$}] {};
	    \node (4) at (2, 0.5)[mnode, label={right:$a_5$}] {};
	    \node (5) at (1, -0.5)[mnode, label={below:$a_1$}] {};
	    	    	    
	    \draw [dEdge] (1) to (2);
	    \draw [dEdge] (2) to (3);
	    \draw [dEdge] (5) to (4);
	    \draw [dDashedEdge] (1) to (4);

      \end{tikzpicture}\hspace{3mm}
	      \begin{tikzpicture}[
		xscale=1.3,
		yscale=1.3,
		show background rectangle
			      ]

		\node (tmp) at (-0.3,2) [nameNode,font=\large] {$R:$};  
		\node (1a) at (0, 0.5)[mnode, label={left:$a_2$}] {};
		\node (2a) at (0.5, 1.5)[mnode, label={above:$a_3$}] {};
		\node (3a) at (1.5, 1.5)[mnode, label={above:$a_4$}] {};
		\node (4a) at (2, 0.5)[mnode, label={right:$a_5$}] {};
		\node (5a) at (1, -0.5)[mnode, label={below:$a_1$}] {};

	    \draw  [dEdge, bend left=10](1a) to (3a);
	    \draw  [dEdge, bend left=10](3a) to (1a);
	    \draw  [dDashedEdge, bend left=10](1a) to (5a);
	    \draw  [dDashedEdge, bend left=10](5a) to (1a);
	    \draw  [dDashedEdge, bend left=10](2a) to (4a);
	    \draw  [dDashedEdge, bend left=10](4a) to (2a);
                                                            	    
	    \end{tikzpicture}
	  }
      \caption{Illustration of the construction from Example \ref{example:cliqueupperbound}. Inserting the edge $(a_2, a_5)$ into $G$ leads, e.g., to the insertion of $(a_1, a_2)$ into $R$ since inserting $(a_1, a_2)$ into $G$ would now complete a 3-clique. The tuple $(a_1, a_2)$ is inserted into $R$ by the dynamic program since chosing $(u, v, x, y)$ as $(a_2, a_5, a_1, a_2)$ satisfies the update formula $\uf{R}{\ins E}{u,v}{x,y}$.}
      \label{figure:cliqueupperbound}
    \end{center}\vspace{-7mm}
  \end{figure}

  Thus the update formula for $R$ is
    \begin{multline*}
     \uf{R}{\ins E}{u,v}{x,y} \df u \neq v \wedge x \neq y \wedge \Big(\big(E\{u,y\} \wedge v = x\big) \vee \big(E\{u,x\} \wedge v = y\big)\\ \vee \big(E\{v,y\} \wedge u = x\big) \vee \big(E\{v,x\} \wedge u = y\big)\Big)
    \end{multline*}
  where $E\{x,y\}$ is an abbreviation for $E(x,y) \vee E(y,x)$.
  
  The update formula for the query symbol $Q$ is 
    $\uf{Q}{\ins E}{u,v}{x,y} = Q \vee R(u,v)$.    \qed
\end{example}

The general proof for arbitrary semi-positive existential first-order properties extends the approach from the previous example. 

\begin{theorem}\label{theorem:positiveefo}
  An $\ell$-ary query expressible by a semi-positive existential first-order formula with $k$ quantifiers can be maintained under insertions in $(\ell+k-1)$-ary $\DynProp$.
\end{theorem}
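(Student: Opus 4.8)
The plan is to prove the more general statement by induction on the number $k$ of quantifiers, generalizing Example~\ref{example:cliqueupperbound}: for each relational atom of the formula I maintain an auxiliary relation storing the ``partial witnesses'' that inserting a tuple matching that atom would complete, exploiting the fact that the inserted tuple itself supplies the values of the quantified variables occurring in that atom, so that no fresh quantification is needed.

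First I would bring the formula into a convenient shape. Writing the quantifier-free part $\psi$ in disjunctive normal form gives $\exists \vec x\, \psi = \bigvee_i \exists \vec x\, \chi_i$ with each $\chi_i$ a conjunction of positive relational atoms, equalities, and negated equalities; since disjuncts can be maintained by independent sub-programs whose query bits are OR-ed into one $\ell$-ary query relation (and a $\chi_i$ using $k_i\le k$ quantifiers fits into $(\ell+k_i-1)$-ary \DynProp), it suffices to treat one conjunctive query. Positive equalities are eliminated by identifying variables, which only decreases $k$, and the remaining negated equalities are kept as distinctness constraints, checked directly wherever variables get pinned. Finally I factor the conjunction as $\varphi_0(\vec y)\wedge \exists \vec x\, \psi'$, where $\varphi_0$ collects the atoms mentioning only $\vec y$ and every relational atom of $\psi'$ mentions at least one quantified variable; $\varphi_0$ is maintained trivially, as after an insertion its value at $\vec b$ is a quantifier-free condition on $\vec b$ and the inserted tuple.

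The core construction maintains $\exists \vec x\, \psi'$. For each relational atom $S_m(\vec z_m)$ of $\psi'$, let $q_m\ge 1$ be the number of quantified variables in $\vec z_m$. If $q_m=k$, inserting a tuple matching the atom pins every quantified variable, so completion is a direct quantifier-free check and no auxiliary relation is needed. If $1\le q_m\le k-1$, I introduce $R_m$ of arity $\ell+q_m\le \ell+k-1$ storing those $(\vec b,\vec c)$ for which, with $\vec y=\vec b$ and the variables of $\vec z_m$ set to $\vec c$, all atoms of $\psi'$ other than $S_m(\vec z_m)$ together with the relevant distinctness constraints are simultaneously satisfiable by the remaining $k-q_m$ quantified variables. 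This $R_m$ is exactly the query relation of a semi-positive existential query with $k-q_m<k$ quantifiers and $\ell+q_m$ free variables, hence maintainable in $\big((\ell+q_m)+(k-q_m)-1\big)=(\ell+k-1)$-ary \DynProp by the induction hypothesis; the base case $k=1$ needs no $R_m$, since then every atom has $q_m=1=k$. The query relation is updated, on insertion of $\vec a$ into $S$, by keeping its old tuples and, for each atom $S_m=S$ whose $\vec z_m$ can be consistently matched to $\vec a$, testing the appropriate $R_m(\vec b,\vec c)$ (or the direct check when $q_m=k$), conjoined with $\varphi_0(\vec b)$.

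Correctness rests on positivity: each sub-query is monotone under insertions, so the query at $\vec b$ can newly become true only through a witness that uses the just-inserted tuple $\vec a$ in some $S$-atom $S_m$; that atom pins the variables $\vec z_m$ to components of $\vec a$, and the rest of the witness is precisely what $R_m$ records. The step needing the most care is that this bookkeeping must reflect the post-insertion state even though all \DynProp update formulas are evaluated simultaneously in the old state: the honest argument uses the value of $R_m$ \emph{after} the insertion, which I obtain by inlining $R_m$'s own quantifier-free update formula into the query update --- legitimate because composing quantifier-free formulas stays quantifier-free. I expect this ``use the new value of the sub-relation'' issue, rather than the arity counting, to be the main obstacle; in the single-relation clique case it conveniently vanishes (as in Example~\ref{example:cliqueupperbound}, where the \emph{old} $R$ already suffices), because pinning a second atom to $\vec a$ would force a repeated element forbidden by distinctness, but in general it must be argued explicitly.
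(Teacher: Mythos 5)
Your proposal is correct, but its architecture differs enough from the paper's proof that a comparison is worthwhile. The paper does not induct on $k$: after essentially the same normalization you use (union of conjunctive queries, then expansion over equality types so that all variables are pairwise distinct and each disjunct becomes a subgraph pattern $H$), it builds a single, recursion-free program whose auxiliary relations are indexed by \emph{partitions} $(\vec y,\vec z)$ of the variable set of $H$ with $|\vec z|\ge 1$; the relation $R_{(\vec y,\vec z)}$, of arity $|\vec y|\le k-1$, holds $\vec a$ iff some assignment to $\vec z$ realizes every atom of $H$ \emph{not lying entirely inside} $\vec y$, while atoms inside $\vec y$ are deliberately ignored. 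With that invariant, the update of $R_{(\vec y,\vec z)}$ after inserting an edge $(u,v)$ only consults the \emph{old} values of relations for coarser partitions (where one or both endpoints of the edge have been moved into $\vec y$) together with explicit quantifier-free ``saturation'' checks on edges incident to $u$ and $v$; the simultaneous-evaluation issue you identify never arises, because membership in $R_{(\vec y,\vec z)}$ is insensitive to edges among the pinned elements. Your atom-indexed relations $R_m$, by contrast, require the remaining atoms among pinned variables to hold \emph{currently}, which is exactly why you need post-insertion values and hence the inlining of sub-relations' update formulas --- a legitimate step, since composing quantifier-free formulas stays quantifier-free, and you are right that this is the crux and that it degenerates in the clique case. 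What each route buys: yours is more uniform (it handles $\ell$-ary queries over arbitrary relational schemas and partial distinctness directly, where the paper restricts to boolean graph queries ``for simplicity'' and asserts the general case carries over), and it isolates the one genuinely delicate point; the paper's direct construction avoids recursion and inlining entirely, all updates read only old auxiliary data, and both the correctness invariant and the arity bound (via $|\vec z|\ge 1$) are immediately visible from the definition of the relations.
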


\begin{proof} 
  For simplicity we restrict the proof to boolean graph queries. The proof easily carries over to arbitrary semi-positive existential queries.

  We give the intuition first. Basically a semi-positive existential sentence with $k$ quantifiers can state which (not necessarily induced) subgraphs with $k$ nodes shall occur in a graph. Therefore it is sufficient to construct a dynamic quantifier-free program that maintains whether the input graph contains a subgraph $H$. Such a program can work as follows. For every induced, proper subgraph $H' = \{u_1, \ldots, u_m\}$ of $H$, the program maintains an auxiliary relation that stores all tuples $\vec a = (a_1, \ldots, a_m)$ such that inserting $H'$ into $\{a_1, \ldots, a_m\}$ (with $a_i$ corresponding to $u_i$) yields a graph that contains $H$. 
  
  In particular, auxiliary relations have arity at most $k-1$ (as only proper subgraphs of $H$ have a corresponding auxiliary relation). Furthermore the graph $H$ is contained in the input graph whenever the value of the $0$-ary relation corresponding to the empty subgraph of $H$ is true. In the example above, the relation $R$ is the relation for the subgraph of the 3-clique graph that consists of a single edge, and the designated query relation is the $0$-ary relation for the empty subgraph.
  
  Those auxiliary relations can be updated as follows. Assume that a tuple $\vec a = (a_1, \ldots, a_m)$ is contained in the relation corresponding to $H'$. If, after the insertion of an edge with end point $a_m$, every edge from $u_m$ in $H'$ has a corresponding edge from $a_m$ in the graph induced by $\{a_1, \ldots, a_m\}$, then the tuple $\pvec a' = (a_1, \ldots, a_{m-1})$ has to be inserted into the auxiliary relation for the induced subgraph $\restrict{H'}{\{u_1, \ldots, u_{m-1}\}}$. This is because inserting the graph  $\restrict{H'}{\{u_1, \ldots, u_{m-1}\}}$ into $\{a_1, \ldots, a_{m-1}\}$ will now yield a graph that \mbox{contains $H$}. Observe that for those updates no quantifiers are needed.

  In the following we make the intuitive idea outlined above more precise. We first show how a quantifier-free dynamic program can maintain whether the input graph contains a certain (not necessarily induced) subgraph. Afterwards we show how to combine the programs for several subgraphs in order to maintain an arbitrary semi-positive existential formula. 

  For the first step it will be technically easier not to speak about subgraphs $H'$ of $H$ (as in the intuition above) but to work with partitions \mbox{of $H$}. We introduce this notion as well as other useful notions next. Let $H$ be a graph. A tuple $(\vec y, \vec z)$ is called a \emph{partition} of $H$ if it contains every node of $H$ exactly once. The subgraph of $H$ induced by $\vec y$  is denoted by $\restrict{H}{\vec y}$; the graph obtained from $H$ by removing the edges of $\restrict{H}{\vec y}$ is denoted by $H_{(\vec y, \vec z)}$.

  Now let $G = (V,E)$ and $H = (V', E')$ be graphs, and let $(\vec y, \vec z)$ be an arbitrary partition of $H$ with $|\vec y| = \ell$. We say that an $\ell$-ary tuple $\vec a$ can be extended to $H_{(\vec y, \vec z)}$, if there is a $|\vec z|$-tuple $\vec b$ such that the mapping $\pi$ defined by \mbox{$\pi(\vec y, \vec z) \df (\vec a, \vec b)$} maps edges in $H_{(\vec y, \vec z)}$ to edges in $G$. Intuitively $\vec a$ can be extended to $H_{(\vec y, \vec z)}$ when deciding whether $H$ is a subgraph of $G$, where $\vec y$ corresponds to $\vec a$, depends only on $\vec a$ and not on nodes of $G$ not contained \mbox{in $\vec a$}. See Figure \ref{figure:positive_efo} for an illustration.

    \begin{figure}[t]
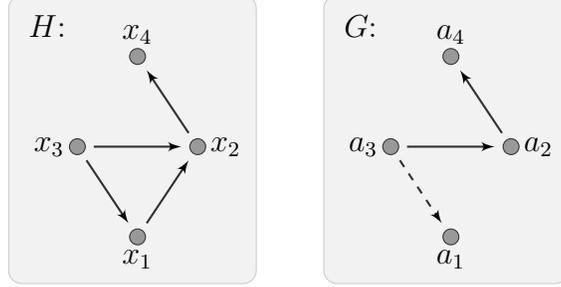
 
    \begin{center}
     \scalebox{1.0}{
        \picDynPropUpperBoundConstructionB
            }
            \hspace{3mm}
     \scalebox{1.0}{
        \picDynPropUpperBoundConstructionA
      }
      \caption{Illustration of the notions used in Theorem \ref{theorem:positiveefo}. The graph $H$ is the graph defined by the existential semi-positive formula $\exists x_1 \exists x_2 \exists x_3 \exists x_4 \big( \bigwedge_{i \neq j} x_i \neq x_j \wedge E(x_3, x_1) \wedge E(x_1, x_2) \wedge E(x_3, x_2) \wedge E(x_2, x_4)\big)$.  Before inserting the edge $(a_3, a_1)$ into $G$, the tuple $(a_1, a_2, a_3)$ can be extended to $H_{((x_1, x_2, x_3), x_4)}$, but $(a_1, a_2)$ does not extend to $H_{((x_1, x_2), (x_3, x_4))}$. After inserting the edge  $(a_3, a_1)$, the tuple $(a_1, a_2)$ can be extended to $H_{((x_1, x_2), (x_3, x_4))}$ as well.}
      \label{figure:positive_efo}
    \end{center}\vspace{-7mm}
  \end{figure}

  Let $\vec a = (a_1, \ldots, a_\ell)$ be a tuple that can be extended to $H_{(\vec y, \vec z)}$. Then a node $a_i$ is called \emph{saturated} with respect to a partition $(\vec y, \vec z)$ and $\vec a$ if $(a_i, a_j)$ (respectively $(a_j, a_i)$) is an edge in $G$ whenever $(y_i, y_j)$ (respectively $(y_j, y_i)$) is an edge in $H$. A tuple $(c,d)$ is \emph{critical} for $a_i$ with respect to a partition $(\vec y, \vec z)$ and $\vec a$ if $a_i$ is not saturated in $G$ but it is saturated in $G + (c, d)$. In Figure \ref{figure:positive_efo}, the tuple $(a_3, a_1)$ is critical for $a_3$ with respect to the partition $((x_1, x_2, x_3), x_4)$ and the tuple $(a_1, a_2)$. Observe that therefore the insertion of the edge $(a_3, a_1)$ yields a graph where $(a_1, a_2)$ can be extended \mbox{to $H_{((x_1, x_2), (x_3, x_4))}$.}
  
  We are now ready to construct a $\DynProp$-program $\prog$ that maintains whether the input graph contains a graph $H$ as (not necessarily induced) subgraph. The program $\prog$ has an auxiliary relation $R_{(\vec y, \vec z)}$ of arity $|\vec y|$ for every partition $(\vec y, \vec z)$ of $H$ with $|\vec z| \geq 1$.  The intention is that, for a state $\calS$ with input graph $G$, a tuple $\vec a$ is in $R^\state_{(\vec y, \vec z)}$ whenever $\vec a$ extends to $H_{(\vec y, \vec z)}$ in $G$. Thus $R_{(\vec y, \vec z)}$ corresponds to the auxiliary relation for $\restrict{H}{\vec y}$ in the intuitive explanation above. The condition $|\vec z| \geq 1$ ensures that the auxiliary relations are of arity at most $|H|-1$.
   
  Before sketching the construction of the update formulas it is illustrative to see what happens when inserting the edge $(a_3, a_1)$ in Figure \ref{figure:positive_efo}.  We observed above that this yields a graph where $(a_1, a_2)$ can be extended to $H_{((x_1, x_2), (x_3, x_4))}$. Therefore $(a_1, a_2)$ should be inserted into the auxiliary relation $R_{((x_1, x_2), (x_3, x_4))}$. However, this update of  $R_{((x_1, x_2), (x_3, x_4))}$ can be made without quantifiers since it is sufficient to verify that $(a_1, a_2, a_3)$ is already in $R_{((x_1, x_2, x_3), x_4)}$ and that $(a_1, a_3)$ was critical. This involves the nodes $a_1$, $a_2$ and $a_3$ only.
  
  In general, when an edge $e$ is inserted, the update formulas of $\prog$ check for which nodes and partitions the edge is critical; and adapt the auxiliary relations accordingly.
  
  For updating a relation $R_{(\vec y, \vec z)}$ with $\vec y = (y_1, \ldots, y_\ell)$ and $\vec z = (z_1, \ldots, z_{k-\ell})$ the update formula $\uf{R}{\ins\;E}{u,v}{\vec y}$ has to check whether there is some $R_{(\pvec y', \pvec z')}$ with $\pvec y' = (y_1, \ldots, y_i, z_j, y_{i+1} \ldots, y_\ell)$ and $\pvec z' = (z_1, \ldots, z_{j-1}, z_{j+1}, \ldots, z_{k-\ell})$ such that the insertion of $(u,v)$ saturates $z_j$. It is also possible that the insertion of a single edge saturates two nodes, this case is very similar and will not be treated in detail here.
   
  The formula $\uf{R}{\ins\;E}{u,v}{\vec y}$ is a conjunction of formulas $\varphi_u$, $\varphi_v$ and  $\varphi_{u,v}$ responsible for dealing with the cases where $u$, $v$ and both $u$ and $v$ are being saturated. We only exhibit $\varphi_u$:
   
   \begin{align*}
             \varphi_u \df & \bigvee_{\substack{\text{For all }(\pvec y', \pvec z') \text{ with}\\ \pvec y' = (y_1, \ldots, y_i, z_j, y_{i+1}, \ldots, y_\ell) \\  \pvec z' = (z_1, \ldots, z_{j-1}, z_{j+1}, \ldots, z_{k-\ell})}} \Big( & R_{(\pvec y', \pvec z')}(y_1, \ldots, y_i, u, y_{i+1}, \ldots, y_\ell) \wedge \bigwedge_{i'} u \neq y_{i'} \\
      & & \wedge \bigwedge_{(z_j, y_{i'}) \in \restrict{H}{\pvec y'}} E(u, y_{i'}) \wedge \bigwedge_{(y_{i'}, z_j) \in \restrict{H}{\pvec y'}} E(y_{i'}, u) \Big)
   \end{align*}
   The other formulas are very similar. This completes the construction of $\prog$.

   It remains to construct a quantifier-free dynamic program for an arbitrary semi-positive existential formula using quantifier-free programs for subgraphs. To this end let $\varphi = \exists \vec x \psi(\vec x)$ be an arbitrary semi-positive existential first-order formula. We show how to translate $\varphi$ into an equivalent disjunction of formulas $\varphi_i$ of the form
    $$\varphi_i = \exists \vec x_i \bigwedge_{y, y' \in \vec x_i} \Big( y \neq y' \wedge \psi_i(\vec x_i) \Big)$$
   where each $\psi_i$ is a conjunction of atoms over $\{E\}$ and $|\vec x_i| \leq |\vec x|$. 

   Observe that the quantifier-free part of each $\varphi_i$ encodes a subgraph $H_i$. Hence a graph $G$ satisfies $\varphi$ if and only if one of the graphs $H_i$ is a subgraph of~$G$. Thus a program maintaining the query defined by $\varphi$ can be constructed by combining the dynamic programs for all $H_i$ in a straightforward way.

   We now sketch how to translate $\varphi$ into the form stated above. First $\varphi$ is rewritten as disjunction of conjunctive queries, that is as $\bigvee_i \exists \vec y_i \gamma_i(\vec y_i)$ where each $\gamma_i$ is a conjunction of positive literals and literals of the form $x \neq x'$. Afterwards each $\exists \vec y_i \gamma_i(\vec y_i)$ is rewritten into an equivalent disjunction over all equality types on the variables in $\vec y_i$, that is as
      $$\bigvee_\varepsilon \exists \vec y_{i, \varepsilon} \Big(\bigwedge_{y,y' \in \vec y_{i, \varepsilon}} y \neq y' \wedge \varphi_{i, \varepsilon}(\vec y_{i, \varepsilon})\Big)$$
    where $\varepsilon$ is over all equality types and  $\varphi_{i, \varepsilon}$ is a conjunction of atoms \mbox{over $\{E\}$}.
\end{proof}
 
    \section{\hspace{-2mm}$k$-Clique \hspace{-0.2mm}Cannot \hspace{-0.2mm}Be \hspace{-0.2mm}Maintained \hspace{-0.2mm}with \hspace{-0.2mm}Arity \hspace{-0.2mm}$k-2$}\label{section:aritylowerbound}
    In this section we prove that the $k$-clique query cannot be maintained by a $(k-2)$-ary quantifier-free update program when $k \geq 3$. The proof uses two main ingredients; the Substructure Lemma from \cite{GeladeMS12, ZeumeS15reach} and a new Ramsey-like lemma. We state those lemmas next. Towards the end of this section we apply the lower bound technique presented in this section to show that there is a first-order property expressible by a formula with only one quantifier alternation which cannot be maintained in $\DynProp$ (with arbitrary arity).

For the convenience of the reader we recall the intuition for the Substructure Lemma as presented in \cite{ZeumeS15reach}. When updating an auxiliary tuple $\vec c$ after an insertion or deletion of a tuple $\vec d$, a quantifier-free update formula has access to $\vec c$, $\vec d$, and the constants only. Thus, if a sequence of modifications changes only tuples from a substructure $\calA$ of $\calS$, then the auxiliary data of $\calA$ is not affected by information outside $\calA$. In particular, two isomorphic substructures $\calA$ and $\calB$ remain isomorphic, when corresponding modifications are applied to them.

For stating the Substructure Lemma we need the following notion of corresponding modifications in isomorphic structures. Let $\pi$ be an isomorphism from a structure $\calA$ to a structure $\calB$. Two modifications $\alpha=\delta(\vec a)$ on $\calA$ and $\alpha=\delta'(\vec b)$ on $\calB$ where $\delta, \delta' \in \{\ins_R, \del_R\}$ for some $R \in \inpSchema$ are said to be \textit{$\pi$-respecting} if $\delta = \delta'$ and $\vec b = \pi(\vec a)$. Two sequences $\alpha_1\cdots\alpha_m$ and $\beta'_1\cdots\beta'_m$ of modifications respect $\pi$ if $\alpha_i$ and $\alpha'_i$ are $\pi$-respecting for every $i\le m$.  Recall that $\updateState{P}{\alpha}{\calS}$ denotes the state obtained by executing the dynamic program $\prog$ for the modification sequence $\alpha$ from state $\calS$. 

The Substructure Lemma stated next is illustrated in Figure \ref{figure:lemma:substructure}.
\begin{figure}[t]
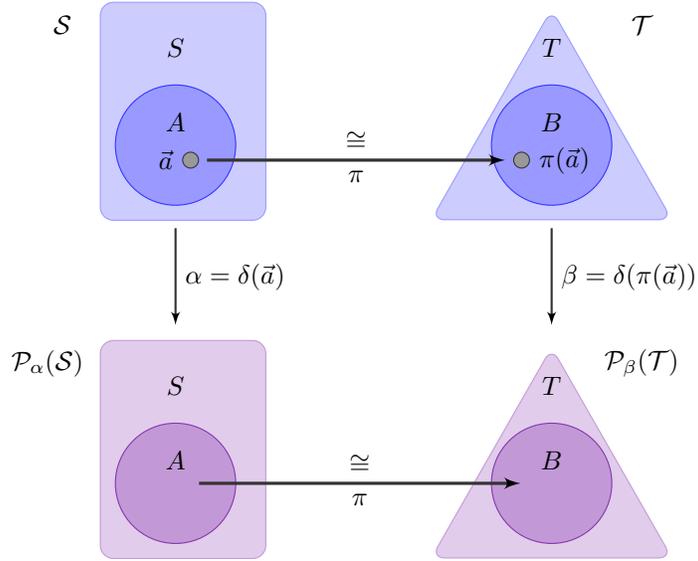
 
    \begin{center}
    \scalebox{1.0}{
     \picsubstructure
    }
    \caption{The statement of the Substructure Lemma. \label{figure:lemma:substructure}}
    \end{center}  \end{figure}
\begin{lemma}[Substructure Lemma \cite{GeladeMS12, ZeumeS15reach}]\label{lemma:substruccor}
  Let $\prog$ be a  \DynProp-program and let $\calS$ and $\calT$ be states of $\prog$ with domains $S$ and $T$. Further let $A \subseteq S$ and $B \subseteq T$ such that $\restrict{\calS}{A}$ and $\restrict{\calT}{B}$ are isomorphic via $\pi$. Then $\restrict{\updateState{P}{\alpha}{\calS}}{A}$ and $\restrict{\updateState{P}{\beta}{\calT}}{B}$ are isomorphic via  $\pi$ for all $\pi$-respecting modification sequences $\alpha$, $\beta$ on $A$ and $B$. In particular, if the query relation of $\prog$ is boolean, then it has the same value in $\updateState{P}{\alpha}{\calS}$ and $\updateState{P}{\beta}{\calT}$
\end{lemma}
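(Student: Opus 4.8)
The plan is to prove the statement by induction on the length $m$ of the modification sequences, the single-modification case being the crux. That single step rests on one structural fact about quantifier-free logic: the truth value of a quantifier-free formula under an assignment is a Boolean combination of atomic facts about the elements named by the assignment together with the interpreted constants, and hence depends only on their atomic type. Since an update program in $\DynProp$ is quantifier-free and an update formula $\uf{R}{\delta}{\vec x}{\vec y}$ inspects only the modified tuple (substituted for $\vec x$), the argument tuple (substituted for $\vec y$) and the constants, no element outside $A$ is ever consulted once $\vec x$ and $\vec y$ range over $A$.

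First I would settle the case $m=1$. Let $\alpha=\delta(\vec a)$ on $A$ and $\beta=\delta(\pi(\vec a))$ on $B$ be $\pi$-respecting, with $\delta\in\{\ins_S,\del_S\}$ for some $S\in\inpSchema$. Fix an auxiliary symbol $R$ of arity $r$ and a tuple $\vec c\in A^r$. By the semantics of update programs, $\vec c\in R^{\updateState{P}{\alpha}{\calS}}$ exactly when $\calS\models\uf{R}{\delta}{\vec a}{\vec c}$. As $\uf{R}{\delta}{\vec x}{\vec y}$ is quantifier-free, its truth under $\vec x\mapsto\vec a$, $\vec y\mapsto\vec c$ is determined by which atomic formulas hold of the elements of $\vec a$, $\vec c$ and the constants. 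All of these elements lie in $A$ (the constants because a substructure contains all constants by the convention of Section~\ref{section:preliminaries}, the rest by hypothesis), so each such atomic fact is read off from $\restrict{\calS}{A}$. Because $\pi$ is an isomorphism from $\restrict{\calS}{A}$ to $\restrict{\calT}{B}$ mapping each constant to the corresponding constant, every atomic fact holds in $\restrict{\calS}{A}$ iff its $\pi$-image holds in $\restrict{\calT}{B}$; hence $\calS\models\uf{R}{\delta}{\vec a}{\vec c}$ iff $\calT\models\uf{R}{\delta}{\pi(\vec a)}{\pi(\vec c)}$, i.e. $\vec c\in R^{\updateState{P}{\alpha}{\calS}}$ iff $\pi(\vec c)\in R^{\updateState{P}{\beta}{\calT}}$. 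The input relations restricted to $A$ and $B$ stay $\pi$-isomorphic because $\alpha$ and $\beta$ change the corresponding tuples $\vec a$ and $\pi(\vec a)$. Together these give $\restrict{\updateState{P}{\alpha}{\calS}}{A}\isomorphVia{\pi}\restrict{\updateState{P}{\beta}{\calT}}{B}$.

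For the inductive step write $\alpha=\alpha_1\cdots\alpha_m$ and $\beta=\beta_1\cdots\beta_m$. Applying the single-step case to $\alpha_1,\beta_1$ yields states $\calS_1\df\updateState{P}{\alpha_1}{\calS}$ and $\calT_1\df\updateState{P}{\beta_1}{\calT}$ whose restrictions to $A$ and $B$ are again isomorphic via the \emph{same} $\pi$. The truncated sequences $\alpha_2\cdots\alpha_m$ and $\beta_2\cdots\beta_m$ are still $\pi$-respecting on $A$ and $B$, so the induction hypothesis applied to $\calS_1$, $\calT_1$ delivers the claim for the full sequences. The ``in particular'' statement is then immediate, since a Boolean (nullary) query symbol $Q$ is preserved by every isomorphism of the restricted structures and therefore takes the same value in $\updateState{P}{\alpha}{\calS}$ and $\updateState{P}{\beta}{\calT}$.

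The main obstacle, and the only place where quantifier-freeness is essential, is the assertion in the single-step case that the update formula never consults an element of $S\setminus A$. In the presence of quantifiers this fails, as a quantified variable could range over $S\setminus A$, and the transfer from $\restrict{\calS}{A}$ to $\restrict{\calT}{B}$ would no longer be justified. Some care is also needed to confirm that the constants genuinely lie in $A$ and $B$ and are matched by $\pi$, which is exactly what the convention that substructures retain all constants guarantees; everything else is routine bookkeeping.
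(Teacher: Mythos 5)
Your proof is correct, and it is exactly the argument the paper has in mind: the paper itself does not prove Lemma~\ref{lemma:substruccor} (it cites \cite{GeladeMS12, ZeumeS15reach} and only recalls the intuition that a quantifier-free update formula can access just the modified tuple, the updated tuple, and the constants), and your induction on the length of the modification sequence with the atomic-type transfer argument in the single-step case is the standard formalization of precisely that intuition.
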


The second ingredient exhibits a disparity between upper bounds for Ramsey numbers in $k$-ary structures and lower bounds for Ramsey numbers in \mbox{$(k+1)$}-dimensional hypergraphs. While the first condition in the following lemma guarantees the existence of a Ramsey clique of size $f(|A|)$ in $k$-ary structures \mbox{over $A$}, the second condition states that there is a 2-coloring of the complete $(k+1)$-hypergraph over $A$ that does not contain a Ramsey clique of size $f(|A|)$. This disparity is the key to the lower bound proof. 
\newcommand{\lemmaramseyfact}{
   Let $k \in \N$ be arbitrary and $\schema$ a $k$-ary schema. Then there is a function $f: \N \rightarrow \N$ and an $n \in \N$ such that for every domain $A$ larger than $n$ the following conditions are satisfied:
  \begin{enumerate}
  \item[(S1)] For every $\schema$-structure $\state$ over $A$ and every linear order $\prec$ on $A$ there is a subset $A'$ of $A$ of size $|A'| \geq f(|A|)$ such that all $\prec$-ordered $k$-tuples over $A'$ have the same type in $\state$.
  \item[(S2)] The set $\allsubsets{A}{k+1}$ of all $(k+1)$-hyperedges over $A$ can be partitioned into two sets $B$ and $B'$ such that for every set $A' \subseteq A$ of size $|A'| \geq f(|A|)$ there are $(k+1)$-hyperedges $b, b' \subseteq A'$ with $b \in B$ and $b' \in B'$.
  \end{enumerate}
}

\begin{lemma}\label{lemma:ramseyfact}
  \lemmaramseyfact
\end{lemma}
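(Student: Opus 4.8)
The plan is to read both conditions as statements about hypergraph Ramsey numbers and to exploit the growth gap between the uniformities $k$ and $k+1$. Throughout, let $R_k(m;r)$ denote the $k$-uniform hypergraph Ramsey number for $r$ colours, i.e.\ the least $N$ such that every $r$-colouring of $\allsubsets{[N]}{k}$ has a monochromatic clique of size $m$.

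First I would settle \textbf{(S1)} by reducing it to a single Ramsey bound. Fix $\schema$ and let $r$ be the number of atomic $k$-types, i.e.\ the number of possible sets of atomic $\schema$-formulas $\varphi(x_1,\dots,x_k)$ that a $\prec$-ordered $k$-tuple can realize. Since $\schema$ is $k$-ary and has finitely many relation and constant symbols, only finitely many atomic formulas on $x_1,\dots,x_k$ exist, so $r$ is a constant depending on $\schema$ alone. Because a $\prec$-ordered $k$-tuple over $A$ is determined by its underlying $k$-element set (listed in $\prec$-order), mapping each set in $\allsubsets{A}{k}$ to the atomic type of the corresponding ordered tuple is an $r$-colouring of the complete $k$-uniform hypergraph on $A$, uniformly for every structure and every order. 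A monochromatic clique is exactly a set $A'$ all of whose $\prec$-ordered $k$-tuples share one type. Hence (S1) is guaranteed for a subset of size $m$ whenever $|A| \ge R_k(m;r)$, so I define
\[
  f(|A|) \df \max\{\, m : R_k(m;r) \le |A| \,\},
\]
which makes (S1) hold by construction; by the \Erdos--Rado bound $f$ grows slowly (roughly like $\klog{k-1}{|A|}$ for $k \ge 2$).

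Next I would reduce \textbf{(S2)} to the complementary Ramsey statement. A partition of $\allsubsets{A}{k+1}$ into $B,B'$ with the stated property is precisely a $2$-colouring of the complete $(k+1)$-uniform hypergraph on $A$ with \emph{no} monochromatic clique of size $f(|A|)$: the requirement that every $A'$ of size $\ge f(|A|)$ meets both colour classes is exactly the negation of ``some $A'$ of size $f(|A|)$ is monochromatic''. Such a colouring exists if and only if $|A| < R_{k+1}\bigl(f(|A|);2\bigr)$. Thus everything reduces to verifying this one inequality for all sufficiently large $A$, and this is where the disparity between uniformity $k$ and $k+1$ enters. By maximality in the definition of $f$ together with the \Erdos--Rado upper bound, $|A| < R_k(f(|A|)+1;r) \le \tower{k-1}{c_1\,f(|A|)}$ for a constant $c_1=c_1(\schema)$: the domain is bounded by a height-$(k-1)$ tower that is \emph{linear} in $f(|A|)$. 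For the matching lower bound on $R_{k+1}(\cdot;2)$ I would invoke the \Erdos--Hajnal stepping-up construction, which from the base bound $R_3(m;2)\ge 2^{c_2 m^2}$ yields $R_{k+1}(m;2) \ge \tower{k-1}{c_2\,m^2}$ for $k\ge 2$, a tower of the \emph{same} height $k-1$ but \emph{quadratic} in $m$. Since $c_2 m^2 > c_1 m$ once $m$ is large, for all large enough $|A|$ (hence large $f(|A|)$) we get $|A| < \tower{k-1}{c_1 f(|A|)} < \tower{k-1}{c_2 f(|A|)^2} \le R_{k+1}(f(|A|);2)$, which is (S2). Choosing $n$ past the threshold where this comparison becomes valid finishes the argument.

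The main obstacle I anticipate is precisely this tower bookkeeping, and in particular the degenerate small-uniformity cases. For $k\ge 2$ the gap is a quadratic boost at the top level of a height-$(k-1)$ tower, but for $k=1$ stepping-up does not apply and one must instead argue directly: here $R_1(m;r)=r(m-1)+1$ gives $f(|A|)=\Theta(|A|/r)$, while the classical graph bound $R_2(m;2)\ge 2^{m/2}$ yields $R_2(f(|A|);2)\ge 2^{\Omega(|A|/r)} > |A|$, so the required inequality holds for a different structural reason (an extra exponential level rather than a quadratic argument). I would therefore separate the cases $k=1$, $k=2$, and $k\ge 3$, using the pigeonhole and graph Ramsey bounds for the first two and stepping-up for the rest, taking care that the height-$(k-1)$ towers coming from (S1) and (S2) line up so that the growth gained by raising the uniformity from $k$ to $k+1$ strictly dominates the linear dependence inherited from the type count in (S1).
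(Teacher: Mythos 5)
Your proposal is correct and takes essentially the same route as the paper: condition (S1) is obtained by colouring $k$-element sets with the atomic type of the corresponding $\prec$-ordered tuple and applying the \Erdos--Rado upper bound, and condition (S2) is obtained from the \Erdos--Hajnal--Rado lower bound for $2$-colourings of $(k+1)$-uniform hypergraphs, with the conclusion resting on exactly the disparity you identify (equal-height towers, linear versus quadratic at the top). The differences are purely presentational: you define $f$ as an inverse Ramsey function and verify the inequality $|A| < \ramsey{k+1}{f(|A|)}{2}$ directly (with a case split on small $k$), whereas the paper packages the same two bounds as functions $g \in \Omega(\klog{k-1}{n})$ and $h \in O(\sqrt{\klog{k-1}{n}})$ and concludes uniformly from $h \in o(g)$.
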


The two lemmas above can be used to obtain the lower bound for the $k$-clique query as follows. The proof of Lemma \ref{lemma:ramseyfact} will be presented afterwards.
\begin{theorem} \label{theorem:DynProp_lower_bound}
  \clique{(k+2)} ($k \geq 1$) cannot be maintained under insertions by a $k$-ary $\DynProp$-program.
\end{theorem}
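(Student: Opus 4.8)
The plan is to derive a contradiction from the Substructure Lemma (Lemma \ref{lemma:substruccor}) and Lemma \ref{lemma:ramseyfact}, using a gadget that reduces the detection of a \clique{(k+2)} to the detection of a $(k+1)$-clique whose completing ``witness'' lives \emph{outside} the edges that are being inserted. Suppose, for contradiction, that a $k$-ary \DynProp-program $\prog$ maintains \clique{(k+2)}, and let $\schema$ be its schema, so that all auxiliary relations have arity at most $k$. Fix a large set $A$ with $|A| > n$, where $n$ and $f$ are obtained from Lemma \ref{lemma:ramseyfact} for arity $k$, and let $B, B'$ be the partition of $\allsubsets{A}{k+1}$ guaranteed by condition (S2). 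I would work over the domain $D \df A \disjointunion C$, where $C$ contains one fresh vertex $v_c$ for each $(k+1)$-subset $c \in \allsubsets{A}{k+1}$ (cf. the constructions in the figures). The initial input graph $G$ has no edges inside $A$ and none inside $C$; its only edges are the ``star'' edges joining $v_c$ to each of the $k+1$ elements of $c$, and these star edges are present precisely when $c \in B$. Let $\calS \df \state_\init(G)$ be the initial state of $\prog$ on $G$.

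The point of this gadget is that for $c = \{a_1, \ldots, a_{k+1}\}$ with its star present, the set $\{v_c\} \cup c$ becomes a $(k+2)$-clique exactly when $c$ is turned into a clique by inserting its internal edges; and because $G$ is otherwise edgeless, one checks that these are the \emph{only} possible $(k+2)$-cliques, since any clique through a vertex $v_c \in C$ is confined to $\{v_c\} \cup c$, and any clique inside $A$ can use the edges of at most one completed $(k+1)$-subset. Next I would apply condition (S1) to $\restrict{\calS}{A}$ (on which $E$ is empty, so only the $k$-ary auxiliary relations contribute) together with a fixed linear order $\prec$ of $A$, obtaining a subset $A' \subseteq A$ with $|A'| \geq f(|A|)$ on which all $\prec$-ordered $k$-tuples, and hence all $\prec$-ordered $j$-tuples for $j \leq k$, have the same atomic type. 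Consequently any two $(k+1)$-subsets of $A'$ have induced substructures that are isomorphic via the unique $\prec$-preserving bijection between them.

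By condition (S2) there are $(k+1)$-subsets $b \in B$ and $b' \in B'$ with $b, b' \subseteq A'$; let $\pi$ be the $\prec$-preserving bijection from $b$ to $b'$, so that $\restrict{\calS}{b} \isomorph \restrict{\calS}{b'}$ via $\pi$. Crucially, the star edges that distinguish $b$ from $b'$ run to the external vertices $v_b, v_{b'} \in C$, and therefore do \emph{not} occur in these induced substructures, so the isomorphism holds even though $b$ and $b'$ lie in different colour classes. Let $\alpha$ be the sequence inserting all internal edges of $b$, and let $\beta$ be the $\pi$-respecting sequence inserting the corresponding internal edges of $b'$. The Substructure Lemma then forces the boolean query symbol $Q$ to have the same value in $\updateState{\prog}{\alpha}{\calS}$ and $\updateState{\prog}{\beta}{\calS}$. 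But after $\alpha$ the graph contains the $(k+2)$-clique $\{v_b\} \cup b$, since $b \in B$ means its star is present; whereas after $\beta$ the subset $b'$ is a $(k+1)$-clique whose only candidate witness $v_{b'}$ has no star edges (as $b' \in B'$), so by the clique analysis above no $(k+2)$-clique exists at all. Since $\prog$ maintains the query, $Q$ must be true in the first state and false in the second, contradicting the Substructure Lemma.

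The heart of the argument, and the step I expect to be most delicate, is precisely the interplay that makes the two colour classes indistinguishable: the initialization may try to record, in the auxiliary relations over $A$, which $(k+1)$-subsets carry a star, but since those relations are only $k$-ary, condition (S1) homogenizes them on $A'$ and erases this per-subset information --- this is exactly where the disparity between (S1) and (S2) in Lemma \ref{lemma:ramseyfact} is exploited. Some care is also needed to make the no-spurious-clique claim rigorous and to treat the edge relation correctly in the Ramsey step: $E$ is binary, but it is empty on $A$ in $G$, so it contributes no nontrivial type and does not interfere with the $k$-ary homogenization; the boundary case $k = 1$ is handled in the same way, since $E$ plays no role on the edgeless set $A$.
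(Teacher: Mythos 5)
Your proposal is correct and follows essentially the same route as the paper's own proof: the same domain $A \disjointunion C$ with star gadgets for the $B$-hyperedges, the same application of (S1) to get a homogeneous $A' \subseteq A$, of (S2) to find $b \in B$ and $b' \in B'$ inside $A'$, and of the Substructure Lemma to the $\pi$-respecting insertion sequences on $b$ and $b'$. Your two refinements --- applying (S1) to $\restrict{\calS}{A}$ (legitimate since atomic types are local) and spelling out the no-spurious-clique analysis after $\beta$ --- only make explicit what the paper leaves implicit.
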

\begin{proof}
  Towards a contradiction assume that there is a $k$-ary $\DynProp$-pro\mbox{gram $\prog$} over schema $\schema$ that maintains $\clique{(k+2)}$. Let $n$ and $f$ be as in Lemma \ref{lemma:ramseyfact}. For a set $A$ larger than $n$ let $\prec$ be an arbitrary order on $A$ and let $\domain \df A \disjointunion C$ be a domain with  $C \df \allsubsets{A}{k+1}$. Further let $B, B'$ be the partition of $\allsubsets{A}{k+1}$ guaranteed to exist by (S2) in Lemma \ref{lemma:ramseyfact}. 
  
  We consider a state $\state$ over domain $\domain$ where the input graph $G$ contains the following edges: 
  $$\{(b, {b_1}), (b, {b_2}), \ldots, (b, {b_{k+1}}) \mid  b = \{b_1, b_2,\ldots, b_{k+1}\} \in B\}$$
  See Figure \ref{figure:DynProp_lower_bound} for an illustration.
  \begin{figure}[t]
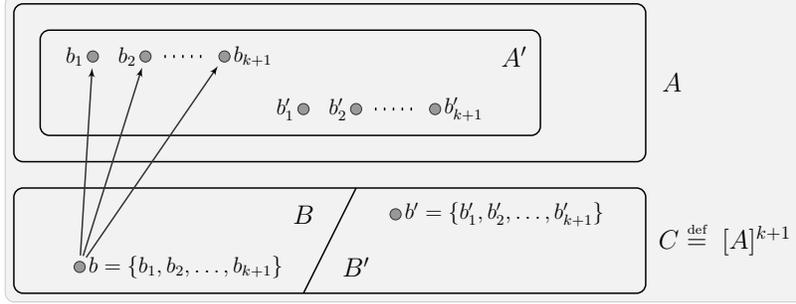
 
    \begin{center}
     \scalebox{0.7}{
        \picDynPropLowerBoundClique
      }
      \caption{The construction from the proof that $(k+2)$-clique cannot be maintained in $k$-ary $\DynProp$.}
      \label{figure:DynProp_lower_bound}
    \end{center}\vspace{-7mm}
  \end{figure}
  
  By Condition (S1) there is a subset $A'\subseteq A$  of size $|A'| \geq f(|\domain|)$ such that all ordered $k$-tuples over $A'$ have the same $\schema$-type in $\state$. Then by (S2) there are $(k+1)$-hyperedges \mbox{$b, b' \subseteq A'$} with $b \in B$ and $b' \in B'$. Without loss of generality $b = \{b_1, b_2, \ldots,  b_{k+1}\}$ with $b_1 \prec \ldots \prec b_{k+1}$ and $b' = \{b'_1, b'_2, \ldots, b'_{k+1}\}$ with $b'_1 \prec \ldots \prec b'_{k+1}$. By construction of the graph $G$, all elements in $b$ are connected to the node ${b} \in C$ while there is no node in $C$ connected to all elements of $b'$. Thus applying the modification sequences
  \begin{itemize}
    \item [($\alpha$)] Insert the edges $(b_i, b_j)$ in lexicographic order with respect to $\prec$.
    \item [($\beta$)] Insert the edges $(b'_i, b'_j)$ in lexicographic order with respect to $\prec$.
  \end{itemize}
  yields one graph with a $(k+2)$-clique and one graph without a $(k+2)$-clique, respectively. However, by the Substructure Lemma, the program $\prog$ yields the same result since the substructures induced by $\vec b = (b_1, \ldots, b_{k+1})$ and \mbox{$\pvec b' = (b'_1, \ldots, b'_{k+1})$} are isomorphic. This is the desired contradiction.
\end{proof}

In the following we prove Lemma \ref{lemma:ramseyfact}. The \emph{$k$-dimensional Ramsey} number for $r$ colors and clique-size $l$, denoted by $\ramsey{k}{l}{r}$, is the smallest \mbox{number $n$} such that every $r$-coloring of a complete $k$-hypergraph with $n$ nodes has a monochromatic clique of size $l$. The \emph{tower function} $\tower{k}{n}$ is defined by
$$\tower{k}{n} \df 2^{2^{.^{.^{.^{2^n}}}}}$$
with $(k-1)$ many $2$'s. The following classical result for asymptotic bounds on Ramsey numbers due to \Erdos, Hajnal and Rado is the key to prove Lemma~\ref{lemma:ramseyfact}. The concrete formulation is from \cite{DuffusLR95}.

 \begin{theorem}\label{theorem:ramseybounds}
    \cite{ErdosR52, ErdosHR65}
    Let $k$, $\ell$ and $r$ be positive integers. Then there are positive constants $c_k$, $c_{k,r}$ and $\ell_k$ such that
      \begin{enumerate}
        \item $\ramsey{k}{\ell}{r} \leq \tower{k}{c_{k,r}\ell}$
        \item $\ramsey{k}{\ell}{2} \geq \tower{k-1}{c_k\ell^2}$ for all $\ell \geq \ell_k$
      \end{enumerate}
  \end{theorem}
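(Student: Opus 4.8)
The two inequalities are classical and essentially independent, so the plan is to treat them in turn: part~(1), the upper bound, by the \Erdos--Rado vertex‑selection argument run as an induction on $k$; part~(2), the lower bound, by iterating the \Erdos--Hajnal \emph{stepping‑up} construction, starting from a direct probabilistic bound in low uniformity. Throughout I use that $\tower{1}{n}=n$ and $\tower{j+1}{n}=2^{\tower{j}{n}}$, so that each extra exponentiation raises the tower height by one.

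\textbf{Part (1): the upper bound via \Erdos--Rado.} The induction base is $k=2$, where I would invoke the classical graph Ramsey bound $\ramsey{2}{\ell}{r}\le r^{r\ell}=\tower{2}{c_{2,r}\ell}$; note the naive selection recursion is too weak here (it only gives $2^{c\ell^2}$), which is why the genuine graph bound is needed at the base. For the step from $k-1$ to $k$ I would establish the recursion $\ramsey{k}{\ell}{r}\le r^{\binom{\ramsey{k-1}{\ell}{r}}{k}}+k$ by the standard argument: given an $r$-coloring $c$ of the $k$-subsets of a vertex set $V$, pick vertices $x_1,x_2,\dots$ and a nested family $V=Y_0\supseteq Y_1\supseteq\cdots$ where, after choosing $x_i\in Y_{i-1}$, two candidates $y,y'\in Y_{i-1}\setminus\{x_i\}$ are kept together iff $c(S\cup\{y\})=c(S\cup\{y'\})$ for every $(k-1)$-subset $S$ of $\{x_1,\dots,x_i\}$, and $Y_i$ is a largest class. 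There are at most $r^{\binom{i}{k-1}}$ classes, so if $|V|\ge r^{\binom{M}{k}}$ (hockey‑stick summation) the sequence reaches length $M=\ramsey{k-1}{\ell}{r}+1$. By construction the color of $\{x_{i_1},\dots,x_{i_k}\}$ with $i_1<\cdots<i_k$ depends only on $\{x_{i_1},\dots,x_{i_{k-1}}\}$, which is an induced $r$-coloring of the $(k-1)$-subsets on $\{x_1,\dots,x_{M-1}\}$; a monochromatic $(k-1)$-clique of size $\ell$ there is automatically a monochromatic $k$-clique of size $\ell$. Solving the recursion, each step contributes one exponential while the power $k$ inside the binomial is absorbed into the tower argument, so starting from the single‑exponential base at $k=2$ one obtains $\ramsey{k}{\ell}{r}\le\tower{k}{c_{k,r}\ell}$.

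\textbf{Part (2): the lower bound via stepping up.} The case $k=2$ is immediate, since the probabilistic method already gives $\ramsey{2}{\ell}{2}\ge 2^{\ell/2}\ge c_2\ell^2=\tower{1}{c_2\ell^2}$ for $\ell\ge\ell_2$. The genuine base is $k=3$: a uniformly random $2$-coloring of the triples of an $n$-set has expected number of monochromatic $\ell$-cliques equal to $\binom{n}{\ell}2^{1-\binom{\ell}{3}}$, which is below $1$ for $n\le 2^{c_3\ell^2}$ because $\binom{\ell}{3}/\ell=\Theta(\ell^2)$; this is precisely where the $\ell^2$ in the exponent originates. I would then iterate, for $k\ge 3$, the stepping‑up lemma: from a $2$-coloring of the $k$-subsets of an $n$-set with no monochromatic clique of size $\ell$, build a $2$-coloring of the $(k+1)$-subsets of a $2^n$-set with no monochromatic clique of size $2\ell+O(1)$. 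Identifying the $2^n$ vertices with length-$n$ binary strings ordered lexicographically, for a $(k+1)$-set $a_1>\cdots>a_{k+1}$ set $\delta_i:=\delta(a_i,a_{i+1})$, the position of the most significant bit in which $a_i$ and $a_{i+1}$ differ; color the $(k+1)$-set by the original color of the $k$-set $\{\delta_1,\dots,\delta_k\}$ when the sequence $\delta_1,\dots,\delta_k$ is monotone, and otherwise by the location/parity of its first local extremum. This gives $\ramsey{k+1}{2\ell+O(1)}{2}>2^{\ramsey{k}{\ell}{2}-1}$; iterating from the $k=3$ base adds one tower level per step while halving the clique size, so feeding $c_3\ell^2$ through the halvings yields $\ramsey{k}{\ell}{2}\ge\tower{k-1}{c_k\ell^2}$ for $\ell\ge\ell_k$.

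\textbf{Main obstacle.} The delicate point is the verification inside the stepping‑up lemma, not the bookkeeping of constants in either induction. One must first prove the combinatorial properties of the difference sequence — consecutive entries are distinct and $\delta(a_i,a_j)=\max_{i\le t<j}\delta_t$ is attained at a \emph{unique} index — and then show that a monochromatic $(k+1)$-clique of size $m$ under the new coloring forces, through the positions of the local extrema of the induced $\delta$-sequences, a monochromatic $k$-clique of size at least $(m-O(1))/2$ under the old coloring. Making this reduction airtight (in particular handling the non‑monotone case and ensuring the extremum‑based color is well defined on every sub‑clique) is the technical heart of the argument.
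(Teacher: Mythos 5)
There is nothing in the paper to compare your argument against: Theorem~\ref{theorem:ramseybounds} is not proved in the paper at all, but imported as a classical result of \Erdos{}, Hajnal and Rado \cite{ErdosR52, ErdosHR65}, with the concrete formulation taken from \cite{DuffusLR95}; the paper only uses it as a black box (in Corollary~\ref{corollary:ramseybounds} and Theorem~\ref{theorem:orderedramseybound}). Your sketch is, in outline, the standard literature proof --- the \Erdos--Rado vertex-selection induction for part (1), and the probabilistic base at $k=3$ followed by the \Erdos--Hajnal stepping-up construction for part (2) --- and it is correct as an outline. You also get right the two points that are most often fumbled: the induction in (1) must be seeded with the genuine graph bound $\ramsey{2}{\ell}{r} \le r^{r\ell}$, since the generic selection recursion would place $\ell^2$ rather than $\ell$ in the topmost exponent, and a tower never absorbs that discrepancy because the argument sits at the top of the tower (this linear-versus-quadratic distinction between parts (1) and (2) is precisely the disparity that the paper's Lemma~\ref{lemma:ramseyfact} exploits, where it surfaces as $\klog{k-1}{n}$ versus $\sqrt{\klog{k-1}{n}}$); and the stepping-up iteration in (2) must start at $k=3$, with the random $3$-uniform coloring supplying the $\ell^2$, because stepping up from graphs to $3$-hypergraphs fails. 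What remains unproven in your write-up is the stepping-up lemma itself --- unique attainment of the maximal $\delta_t$, well-definedness of the extremum-based color on every sub-clique, and the halving reduction in the non-monotone case --- which you explicitly flag as the technical heart rather than verify. For a self-contained proof that verification would have to be carried out, but given that the paper itself cites the theorem without proof, your proposal is at an appropriate, indeed greater, level of detail.
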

  The theorem immediately implies that (T1) Ramsey cliques in $r$-colored $k$-dimensional complete hypergraphs are of size at least  $\Omega(\klog{k-1}{n})$; and that (T2) there are 2-colorings of the $(k+1)$-dimensional complete hypergraphs such that monochromatic cliques are of size $O((\klog{k-1}{n})^{\frac{1}{2}})$. Here \mbox{$\klog{k}{n}$ denotes $\log(\log(\ldots (\log n) \ldots))$} with $k$ many $\log$'s.

  The conditions (T1) and (T2) are formalized and proved in the following corollary.
  
  \begin{corollary}\label{corollary:ramseybounds}
    Let $k$ and $r$ be integers. There are functions $g \in \Omega(\klog{k-1}{n})$ and $h \in O(\sqrt{\klog{k-2}{n}})$ such that:
    \begin{enumerate}
      \item Every $r$-colored complete $k$-hypergraph with $n$ nodes contains a monochromatic clique of size $g(n)$.
      \item The complete $k$-hypergraph with $n$ nodes can be 2-colored such that every monochromatic clique is of size at most $h(n)$.
    \end{enumerate}
  \end{corollary}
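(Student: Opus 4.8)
The plan is to obtain both functions by inverting the tower bounds of Theorem~\ref{theorem:ramseybounds}, using the fact that the functional inverse of $\tower{k}{\cdot}$ is the $(k-1)$-fold iterated logarithm, i.e.\ $\klog{k-1}{\tower{k}{x}} = x$, and that $\tower{k}{\cdot}$ is strictly increasing. Throughout I would argue for $n$ above a suitable threshold (so that the iterated logarithms are defined and positive), and the monotonicity of Ramsey numbers in the number of nodes; for the remaining finitely many small $n$ the asymptotic statements hold by adjusting the hidden constants.

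For part~1 I would read off a clique from the upper bound $\ramsey{k}{\ell}{r} \leq \tower{k}{c_{k,r}\ell}$. Set
\[
  g(n) \df \Big\lfloor \tfrac{1}{c_{k,r}}\klog{k-1}{n} \Big\rfloor .
\]
Then $c_{k,r}\, g(n) \leq \klog{k-1}{n}$, and applying the monotone map $\tower{k}{\cdot}$ gives $\tower{k}{c_{k,r} g(n)} \leq \tower{k}{\klog{k-1}{n}} = n$. Hence $\ramsey{k}{g(n)}{r} \leq \tower{k}{c_{k,r} g(n)} \leq n$, and by the very definition of the Ramsey number every $r$-coloring of the complete $k$-hypergraph on $n$ nodes contains a monochromatic clique of size $g(n)$. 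Since $g(n) \in \Omega(\klog{k-1}{n})$, this is the desired function.

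For part~2 I would read off a good colouring from the lower bound $\ramsey{k}{\ell}{2} \geq \tower{k-1}{c_k\ell^2}$, whose inverse now involves $\klog{k-2}{\cdot}$. Put
\[
  \ell(n) \df \Big\lceil \sqrt{\tfrac{1}{c_k}\klog{k-2}{n}} \Big\rceil + 1,
  \qquad
  h(n) \df \ell(n) - 1 .
\]
Then $c_k\, \ell(n)^2 > \klog{k-2}{n}$, so $\tower{k-1}{c_k \ell(n)^2} > \tower{k-1}{\klog{k-2}{n}} = n$, and therefore $\ramsey{k}{\ell(n)}{2} \geq \tower{k-1}{c_k \ell(n)^2} > n$. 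As $n$ is strictly below the Ramsey number, minimality yields a $2$-colouring of the complete $k$-hypergraph on $n$ nodes with no monochromatic clique of size $\ell(n)$; every monochromatic clique then has size at most $\ell(n)-1 = h(n) \in O(\sqrt{\klog{k-2}{n}})$.

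There is no deep obstacle here; the work is entirely bookkeeping. The one thing to watch is the height offset between the tower and the iterated logarithm --- the $\klog{k-1}{\cdot}$ in part~1 versus $\klog{k-2}{\cdot}$ in part~2 comes precisely from the drop in tower height from $\tower{k}{\cdot}$ to $\tower{k-1}{\cdot}$ between the two bounds --- together with the side condition $\ell \geq \ell_k$ in the lower bound, which holds once $n$ is large enough that $\ell(n) \geq \ell_k$, the remaining small cases again being absorbed into the $O(\cdot)$-constant.
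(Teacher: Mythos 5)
Your proposal is correct and takes essentially the same route as the paper: both proofs invert the tower bounds of Theorem~\ref{theorem:ramseybounds}, with your $g$ being literally the paper's choice and your $h$ differing from the paper's only by an additive constant, and both read off part~1 from the definition of the Ramsey number and part~2 from its minimality. If anything, your write-up is slightly more careful than the paper's, since you explicitly address the side condition $\ell \geq \ell_k$ and the finitely many small values of $n$, which the paper's proof passes over in silence.
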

  \begin{proof}
    The corollary follows immediately from Theorem \ref{theorem:ramseybounds}. Define $g$ and $h$ by
    
    $$g(n) \df \Big\lfloor \tfrac{1}{c_{k,r}} \klog{k-1}{n}\Big\rfloor \text{ and } h(n) \df \Big\lceil \sqrt{\tfrac{1}{c_k}\klog{k-2}{n}} \Big\rceil+1$$
    
    where the constants $c_{k,r}$ and $c_k$ are as in Theorem \ref{theorem:ramseybounds}.
    
    For proving a), consider an arbitrary hypergraph $G$ with $n$ nodes, and an arbitrary $r$-coloring of $G$. Then, by Theorem \ref{theorem:ramseybounds}a), there is a monochromatic clique of size $\ell$ where $\ell$ is the maximal number such that \mbox{$\tower{k}{\ell c_{k,r}} \leq n$}. The number $\ell$ is exactly $g(n)$.
    
    For proving b), consider again an arbitrary hypergraph $G$ with $n$ nodes. By Theorem \ref{theorem:ramseybounds}b), there is a $2$-coloring without a  monochromatic clique of size $\ell$ where $\ell$ is the minimal number such that $n <  \tower{k-1}{\ell^2c_k}$.  Thus the largest monochromatic clique of $G$ is of size at most $h(n)$.
  \end{proof}

   The conditions (T1) and (T2) are already quite similar to the conditions (S1) and (S2). The major difference is that (T1) is about hypergraphs and not about structures with a $k$-ary schema. 
   
   Fortunately the upper bound from Theorem \ref{theorem:ramseybounds} can be generalized to Ramsey numbers for structures. To this end some notions need to be transferred from hypergraphs to structures. Let $\schema$ be a $k$-ary schema, let $\state$ be a $\schema$-structure over domain $\domain$ and let $\prec$ be a linear order on $\domain$. A subset $\domain' \subseteq \domain$ of the domain of $\state$ is called an \emph{$\prec$-ordered $\schema$-clique} if all $\prec$-ordered $k$-tuples $\vec a \in \domain'^{k}$ have the same $\schema$-type.  Recall that the type of a tuple $\vec a$ includes information of how $\vec a$ relates to the constants of the structure, and therefore all tuples over a $\schema$-clique relate in the same way to constants as well. Denote by $\ramsey{}{\ell}{\schema}$ the smallest number $n$ such that every $\schema$-structure with $n$ elements contains an $\prec$-ordered $\schema$-clique of size $\ell$, for every order $\prec$ of the domain.

\begin{theorem} \label{theorem:orderedramseybound}
  Let $\schema$ be a schema with maximal arity $k$ and let $\ell$ be a positive integer. Then there is a constant $c$ such that $\ramsey{}{\ell}{\schema} \leq \tower{k}{\ell c}$.
\end{theorem}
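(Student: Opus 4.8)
The plan is to reduce the problem of finding a $\prec$-ordered $\schema$-clique to the ordinary hypergraph Ramsey problem and then to invoke the upper bound of Theorem~\ref{theorem:ramseybounds}a). Fix a schema $\schema$ of maximal arity $k$ and an integer $\ell$. The key observation is that there are only finitely many $k$-ary atomic $\schema$-types: an atomic formula in variables $x_1, \ldots, x_k$ is built from the finitely many relation symbols (each of arity at most $k$), the finitely many constant symbols, and the equality symbol, so the number $r$ of possible types $\type{\state}{\vec a}$ of a $\prec$-ordered $k$-tuple $\vec a$ is a constant depending only on $\schema$ and $k$, and not on the size of the domain.

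Given any $\schema$-structure $\state$ over a domain $\domain$ of size $n$ and any linear order $\prec$ on $\domain$, I would color the complete $k$-hypergraph on $\domain$ with these $r$ colors as follows. Each $k$-element subset $e = \{a_1, \ldots, a_k\} \in \allsubsets{\domain}{k}$ determines a unique $\prec$-ordered tuple $\vec a = (a_1, \ldots, a_k)$ with $a_1 \prec \cdots \prec a_k$, and I color $e$ by the type $\type{\state}{\vec a}$. Since $\prec$-ordering is a bijection between $k$-element subsets and $\prec$-ordered $k$-tuples, a set $\domain' \subseteq \domain$ is monochromatic under this coloring if and only if all $\prec$-ordered $k$-tuples over $\domain'$ share the same $\schema$-type, that is, if and only if $\domain'$ is a $\prec$-ordered $\schema$-clique.

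It then remains to apply the hypergraph Ramsey bound. By Theorem~\ref{theorem:ramseybounds}a) (equivalently Corollary~\ref{corollary:ramseybounds}a)) there is a constant $c_{k,r}$ such that every $r$-coloring of the complete $k$-hypergraph on $n \geq \ramsey{k}{\ell}{r}$ nodes contains a monochromatic clique of size $\ell$, and moreover $\ramsey{k}{\ell}{r} \leq \tower{k}{c_{k,r}\ell}$. Hence, as soon as $n \geq \tower{k}{c_{k,r}\ell}$, the coloring above yields a monochromatic clique $\domain'$ of size $\ell$, which by the previous paragraph is exactly the desired $\prec$-ordered $\schema$-clique. Setting $c \df c_{k,r}$ gives $\ramsey{}{\ell}{\schema} \leq \tower{k}{c\ell}$, as claimed.

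The argument is essentially a bookkeeping reduction, so I do not expect a genuine obstacle; the only points that require care are verifying that the number of atomic types is bounded independently of $n$ (so that $r$, and hence $c$, does not depend on the domain size) and that the type of the $\prec$-ordered tuple --- which records the full atomic behaviour of the $k$ elements, including how they relate to the constants --- is precisely the datum captured by the color, so that monochromaticity faithfully encodes the $\schema$-clique condition.
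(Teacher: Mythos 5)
Your proof is correct and follows essentially the same route as the paper: color each $k$-element subset of the domain by the atomic type of its unique $\prec$-ordered tuple, invoke the hypergraph Ramsey upper bound of Theorem~\ref{theorem:ramseybounds}a) with the (schema-dependent, domain-independent) number of types as the number of colors, and observe that a monochromatic clique is exactly a $\prec$-ordered $\schema$-clique. The only cosmetic difference is in how the constant $c$ is named (you take $c = c_{k,r}$ directly, while the paper phrases it via $|\Gamma|$), which does not affect the argument.
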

\begin{proof}
  The proof of Observation 1' in \cite[p. 11]{GeladeMS12} yields this bound. For the sake of completeness we repeat the full construction.
  
  Consider the schema $\schema$ and let $\Gamma$ be the set of all $k$-ary types for $\schema$. Let $\state$ be a $\schema$-structure over domain $\domain$ of size $\tower{k}{\ell c}$ where $c = |\Gamma|$. Further let $\prec$ be an arbitrary order on $\domain$. Define a coloring $\col$ of the complete $k$-dimensional hypergraph over domain $\domain$ with colors $\Gamma$ as follows.  An edge $\{e_1, \ldots, e_{k}\}$ with $e_1 \prec \ldots \prec e_{k}$ is colored by the type $\type{S}{e_1, \ldots, e_{k}}$. By Theorem \ref{theorem:ramseybounds} there is an induced monochromatic sub-$k$-hypergraph with domain $D' \subseteq D$ with $|D'| \geq \ell$. By the definition of the coloring $\col$, two $\prec$-ordered $k$-tuples over $D'$ have the same type and therefore $D'$ is a $\prec$-ordered $\schema$-clique in $\state$ as well.
\end{proof}

The previous theorem implies that Ramsey cliques in $k$-ary structures are of size at least  $\Omega(\klog{k-1}{n})$. The proof is analogous to the proof of \mbox{Corollary \ref{corollary:ramseybounds}}.

\begin{corollary}\label{corollary:orderedramseyboundb}
  Let $\schema$ be a schema with maximal arity $k$. There is a function \mbox{$g \in \Omega(\klog{k-1}{n})$} such that every $\schema$-structure with $n$ elements contains an ordered $\schema$-clique of size $g(n)$.
\end{corollary}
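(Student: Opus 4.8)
The plan is to obtain this corollary as a direct consequence of the ordered Ramsey bound for structures (Theorem \ref{theorem:orderedramseybound}), mirroring exactly the derivation of Corollary \ref{corollary:ramseybounds}(a) from Theorem \ref{theorem:ramseybounds}. First I would invoke Theorem \ref{theorem:orderedramseybound} to fix, for the given $k$-ary schema $\schema$, a constant $c$ with $\ramsey{}{\ell}{\schema} \leq \tower{k}{\ell c}$ for every positive integer $\ell$. This single inequality already contains all the Ramsey content; what remains is to turn it into a lower bound on the guaranteed clique size as a function of $n$.

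Next I would define $g$ by letting $g(n)$ be the largest integer $\ell$ with $\tower{k}{\ell c} \leq n$; concretely $g(n) = \lfloor \tfrac{1}{c}\klog{k-1}{n}\rfloor$, using that $\tower{k}{m} \leq n$ is equivalent to $m \leq \klog{k-1}{n}$ (apply $\log$ a total of $k-1$ times to unwind the tower of $(k-1)$ twos). For this choice we have $\ramsey{}{g(n)}{\schema} \leq \tower{k}{g(n) c} \leq n$ by the definition of $g(n)$. Hence, by the meaning of $\ramsey{}{g(n)}{\schema}$, any $\schema$-structure with $n$ elements contains, for every order $\prec$ of its domain, a $\prec$-ordered $\schema$-clique of size $g(n)$, which is the assertion. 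The asymptotic claim $g \in \Omega(\klog{k-1}{n})$ is then immediate from the closed form, since $c$ depends only on $\schema$.

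The proof has essentially no hard step: the only point requiring a little care is the bookkeeping that translates $\tower{k}{\ell c} \leq n$ into $\ell \leq \tfrac{1}{c}\klog{k-1}{n}$, i.e., making sure the number of iterated logarithms matches the number of twos in the tower. Since $\tower{k}{\cdot}$ has $k-1$ twos, inverting it costs $k-1$ logarithms, which explains why the bound is $\klog{k-1}{n}$ rather than $\klog{k}{n}$. All the genuine combinatorial work has already been carried out in Theorem \ref{theorem:orderedramseybound} via the reduction from structures to colored hypergraphs, so here I am only repackaging that bound in the form of a size-versus-$n$ guarantee.
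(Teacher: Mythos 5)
Your proposal is correct and matches the paper's intended argument: the paper proves this corollary by declaring it ``analogous to the proof of Corollary~\ref{corollary:ramseybounds}'', and your derivation is precisely that analogue, inverting the tower bound of Theorem~\ref{theorem:orderedramseybound} to get $g(n) = \lfloor \tfrac{1}{c}\klog{k-1}{n}\rfloor$ just as Corollary~\ref{corollary:ramseybounds}(a) inverts Theorem~\ref{theorem:ramseybounds}(a). Your bookkeeping on the number of iterated logarithms (the tower $\tower{k}{\cdot}$ has $k-1$ twos, hence $k-1$ logarithms to invert) is also exactly right.
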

  
It remains to prove Lemma \ref{lemma:ramseyfact}. 
  
  \begin{proofof}{Lemma \ref{lemma:ramseyfact}}
    Let $k \in \N$ be arbitrary and let $\schema$ be a $k$-ary schema $\schema$. Choose $f \df g$ where $g \in \Omega(\klog{k-1}{n})$ is the function from Corollary \ref{corollary:orderedramseyboundb}. We show that there is an $n$ such that $f$ satisfies the conditions (S1) and (S2) for all domains larger than $n$.
    
    Let $h \in O(\sqrt{\klog{k-1}{n}})$ be the function guaranteed to exist for $k+1$ by Corollary \ref{corollary:ramseybounds}b).     Then $h \in o(f)$, and therefore there is an $n$ such that $f(n') > h(n')$ for all $n' > n$. Hence for every domain larger than $n$ condition (S1) is satisfied for $f$ due to Corollary \ref{corollary:orderedramseyboundb} and condition (S2) is satisfied due to \mbox{Corollary \ref{corollary:ramseybounds}}.
  \end{proofof}

Next we  apply the lower bound proof technique presented above in order to improve upon a result by Gelade et al. \cite{GeladeMS12}. They provided a lower bound for the alternating reachability problem. The use of very restricted graphs in the proof implies that there is a $\QFO[\exists^*\forall^*\exists^*]$-definable query that cannot be maintained in $\DynProp$.  We show that there is a first-order property expressible by a formula with only one quantifier alternation which cannot be maintained in $\DynProp$. It remains open whether there is a $\EFO$- or $\AFO$-property that is not maintainable in $\DynProp$.

\begin{theorem} \label{theorem:DynProp_lower_bound_EAFO}
  There is a $\QFO[\exists^*\forall^*]$-definable query which cannot be maintained by a $\DynProp$-program.
\end{theorem}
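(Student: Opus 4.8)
The plan is to pin down a \emph{single} $\QFO[\exists^*\forall^*]$-query whose hard instances scale with the arity of the program, so that one fixed query defeats quantifier-free programs of every arity; the universal block is exactly what lets one query do the work that the whole family $\clique{(k+2)}$ did in Theorem~\ref{theorem:DynProp_lower_bound}. Over the input schema consisting of the binary symbol $E$ and two constants $s,t$, I would take
$$Q \df \exists x\, \forall y\, \big(E(s,x) \wedge (E(x,y) \rightarrow E(y,t))\big),$$
which holds iff some out-neighbour $x$ of $s$ has all of its own out-neighbours joined to $t$. This prenex $\exists\forall$-formula is $\QFO[\exists^*\forall^*]$-definable.

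For the lower bound I would argue by contradiction. Suppose a $\DynProp$-program $\prog$ of arity $k$ maintains $Q$, and let $K \df \max(2,k)$ be the maximal arity of its schema $\schema$. Invoking Lemma~\ref{lemma:ramseyfact} for $K$ provides a function $f$ and a threshold $n$. For a set $A$ larger than $n$ I would work over the domain $\domain \df A \disjointunion C$ with $C \df \allsubsets{A}{K+1}$ (plus the two constant elements $s,t$), using the partition $B,B'$ of $C$ from (S2). The initial input graph extends Figure~\ref{figure:DynProp_lower_bound} by the source and sink: for each $b = \{b_1,\dots,b_{K+1}\} \in B$ it contains the edge from $s$ to the node $b \in C$ and the edges from $b$ to each element $b_i$; the nodes of $C$ lying in $B'$ are left isolated, and no element of $A$ is joined to $t$ yet.

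Exactly as in the proof of Theorem~\ref{theorem:DynProp_lower_bound}, (S1) yields $A' \subseteq A$ with $|A'| \ge f(|\domain|)$ on which all $\prec$-ordered $K$-tuples share one $\schema$-type, and (S2) then supplies $(K+1)$-element hyperedges $b \subseteq A'$ in $B$ and $b' \subseteq A'$ in $B'$. I would compare the substructures induced by $\{s,t\}\cup b$ and $\{s,t\}\cup b'$. Since $s,t$ are constants, every node of $C$ — in particular the two hyperedge nodes $b$ and $b'$ together with the edges from $s$ to them and from them to their elements — lies \emph{outside} these substructures; the only relations left are those among the elements $b_1,\dots,b_{K+1}$ (resp. $b'_1,\dots,b'_{K+1}$) and with the constants, and every such relation, having arity at most $K$, touches at most $K$ of the $K+1$ elements (possibly alongside $s,t$) and is therefore fixed by the common $K$-type. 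Hence the order-preserving bijection $\pi$ with $\pi(b_i)=b'_i$ and $\pi(s)=s$, $\pi(t)=t$ is an isomorphism. The $\pi$-respecting sequences $\alpha$ (insert each $(b_i,t)$) and $\beta$ (insert each $(b'_i,t)$) give opposite answers: after $\alpha$ the node $x \df b$ witnesses $Q$, since $E(s,b)$ holds and its out-neighbours are exactly $b_1,\dots,b_{K+1}$, all now joined to $t$; after $\beta$ the only nodes $x$ with $E(s,x)$ are the nodes of $B$, and each such node has $K+1 \ge 2$ out-neighbours forming a $(K+1)$-set different from $b'$, hence not contained in $b'$, so at least one of them misses $t$ and $Q$ fails. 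The Substructure Lemma (Lemma~\ref{lemma:substruccor}) forces the boolean query relation to agree in $\updateState{\prog}{\alpha}{\state}$ and $\updateState{\prog}{\beta}{\state}$, a contradiction; since $k$ was arbitrary, $Q$ is maintainable by no $\DynProp$-program.

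The step I expect to be most delicate is certifying that $Q$ is genuinely \emph{false} after $\beta$: one must exclude every spurious witness, both the vacuous ones (an $x$ with $E(s,x)$ but no successor, ruled out because each $B$-node has $K+1$ successors) and the nested ones (a node of $B$ all of whose elements happen to reach $t$, ruled out because distinct $(K+1)$-sets are never contained in one another). Arranging that $s$ is joined only to $B$-nodes and that $t$ stays isolated until the insertions is precisely what makes these exclusions hold while keeping the two substructures isomorphic, and it is the reason the same query works simultaneously for all arities $k$.
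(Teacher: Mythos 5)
Your proposal is correct and follows essentially the same route as the paper's proof: the paper uses the mirror-image query $\exists x \forall y\,\big(E(s,x) \wedge (E(y,t) \rightarrow E(x,y))\big)$ over the very same construction (the graph of Theorem~\ref{theorem:DynProp_lower_bound} augmented with edges from $s$ to the $B$-nodes), the same modification sequences inserting $(b_i,t)$ resp.\ $(b'_i,t)$, and the same combination of Lemma~\ref{lemma:ramseyfact} with the Substructure Lemma. Your explicit treatment of vacuous witnesses and of the schema arity via $K=\max(2,k)$ only spells out details the paper leaves implicit.
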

\begin{proof}
  Consider the graph schema $\{E\}$ extended by two constants $s$ and $t$. We show that the query $\query$ defined by $\varphi \df \exists x \forall y \big(E(s,x) \wedge (E(y,t) \rightarrow E(x,y))\big)$ cannot be maintained by any $\DynProp$-program. We remark that it is possible to remove the constants from the following construction by using more existential quantifiers.
  
  The proof is an adaption of the proof of Theorem \ref{theorem:DynProp_lower_bound}. Towards a contradiction assume that there is a $k$-ary $\DynProp$-pro\mbox{gram $\prog$} over $k$-ary schema $\schema$ that maintains $\query$. Let $n$, $f$, $A$, $C$, $B$, $B'$, $\prec$ be as in the proof of Theorem \ref{theorem:DynProp_lower_bound}.
  
  We consider a state $\state$ over domain $\domain$ where the input graph $G$ contains, as before, the edges  $$\{(b, {b_1}), (b, {b_2}), \ldots, (b, {b_{k+1}}) \mid  b = \{b_1, b_2,\ldots, b_{k+1}\} \in B\}$$
  and, additionally, the edges 
  $$\{(s,b) \mid b = \{b_1, b_2,\ldots, b_{k+1}\} \in B\}$$
  
  See Figure \ref{figure:DynProp_lower_bound_EAFO} for an illustration.
  \begin{figure}[t] 
    \begin{center}
     \scalebox{0.7}{
        \picDynPropLowerBoundEAFO
      }
      \caption{The construction from the proof that the $\QFO[\exists^*\forall^*]$-definable query \mbox{$\varphi \df \exists x \forall y \big(E(s,x) \wedge (E(y,t) \rightarrow E(x,y))\big)$} cannot be maintained in $\DynProp$. The edges inserted by the modification sequence $\alpha$ are dashed.}
      \label{figure:DynProp_lower_bound_EAFO}
    \end{center}\vspace{-7mm}
  \end{figure}
    
  By Lemma \ref{lemma:ramseyfact}, we can find tuples $b = \{b_1, b_2, \ldots,  b_{k+1}\}$ with $b_1 \prec \ldots \prec b_{k+1}$ and $b' = \{b'_1, b'_2, \ldots, b'_{k+1}\}$ with $b'_1 \prec \ldots \prec b'_{k+1}$ such that $(s, t, b)$ and $(s, t, b')$  have the same  $\schema$-type in $\state$. 
  
  However, applying the modification sequences
  \begin{itemize}
    \item [($\alpha$)] Insert the edges $(b_i, t)$ in lexicographic order with respect to $\prec$.
    \item [($\beta$)] Insert the edges $(b'_i, t)$ in lexicographic order with respect to $\prec$.
  \end{itemize}
  yields one graph that satisfies $\varphi$ and one graph that does not. However, by the Substructure Lemma, the program $\prog$ yields the same result. This is the desired contradiction.
\end{proof}
     
  \section{Adding Auxiliary Functions}\label{section:auxfunctions}

  In quantifier-free update programs, as considered up to here, only the modified and updated tuple as well as the constants can be accessed while updating an auxiliary tuple. Since lower bounds for first-order update programs where arbitrary elements can be accessed in updates seem to be out of reach for the moment, it seems natural to look for extensions of quantifier-free update programs that allow for accessing more elements in some restricted way.

With $\DynQF$, one such extension was proposed by Hesse. In addition to auxiliary relations, a $\DynQF$-program may maintain auxiliary functions. Those functions are updated by update terms that may use function symbols as well as an if-then-else construct.

While Hesse obtained upper bounds for $\DynQF$ only, in subsequent work some first lower bounds have been obtained. In \cite{GeladeMS12} it was shown that the alternating reachability problem cannot be maintained in $\DynProp$ extended by a fixed successor function and a fixed predecessor function. Later, in \cite{ZeumeS15reach}, the reachability problem was shown to be not maintainable in unary $\DynProp$ with additional (updatable) unary auxiliary functions.

In this section we continue the study of \DynProp extended by auxiliary functions. In the first part we prove that \clique{(k+2)} cannot be maintained  by a $k$-ary $\DynProp$-program with unary auxiliary functions, even if only insertions are considered. In the second part we discuss the expressiveness of  binary $\DynQF$ in large domains and argue why the lower bound technique for the $k$-clique query does not immediately translate. 

Before continuing, we repeat a toy example from \cite{ZeumeS14dyncqwithremark} which is designed to give an impression of $\DynQF$. For a more formal treatment we refer the reader to \cite{GeladeMS12} and \cite{ZeumeS15reach}.

\begin{example}
  Consider the unary graph query $\query(x)$ that returns all nodes $a$ of a given graph $G$ with maximal outdegree .

  We construct a unary $\DynQF$-program $\prog$ that maintains $\query$ in a unary relation denoted by the designated symbol $Q$.  The program uses two unary functions $\Succ$ and $\Pred$ that shall encode a successor and its corresponding predecessor relation on the domain. For simplicity, but without loss of generality, we therefore assume that the domain is of the form $\domain = \{0, \ldots, n-1\}$. For every state $\state$, the function $\Succ^\state$ is then the standard successor function \mbox{on $\domain$} (with $\Succ^\state(n-1) = n-1$), and $\Pred^\state$ is the standard predecessor function (with $\Pred^\state(0) = 0$). Both functions are initialized accordingly. 
  In the following, when we talk about a \emph{number}, we mean the element whose position in $\Succ$ is that number.
  The program has constants that represent the numbers $0$ and $1$.
  
  The program $\prog$ maintains two unary functions $\numEdges$ and $\numNodes$. The function $\numEdges$ counts, for every node $a$, the number of outgoing edges of $a$; more precisely $\numEdges(a) = b$ if and only if $b$ is the number of outgoing edges \mbox{of $a$}. The function $\numNodes$ counts, for every number $a$, the number of nodes with $a$ outgoing edges; more precisely $\numNodes(a) = b$ if and only if $b$ is the number of nodes with $a$ outgoing edges. A constant $\Max$ shall always point to the number $i$ such that $i$ is the maximal number of outgoing edges from some node in the current graph.
  
  When inserting an outgoing edge $(u,v)$ for a node $u$ that already has $a$ outgoing edges, the counter $\numEdges$ of $u$ is incremented from $a$ to $a+1$ and all other edge-counters remain unchanged. The counter $\numNodes$ of $a$ is decremented, the counter of $a+1$ is incremented, and all other node-counters remain unchanged. The number $\Max$ increases if, before the insertion, $u$ was a node with maximal number of outgoing edges. This yields the following update terms:
  \begin{align*}
    \ut{\numEdges}{\ins\;E}{u,v}{x} \df & \itewa\Big(\neg E(u,v) \wedge x=u,\Succ(\numEdges(x)),\numEdges(x)\Big) \\
    \ut{\numNodes}{\ins\;E}{u,v}{x} \df & \itewa\Big(\neg E(u,v) \wedge x = \numEdges(u), \Pred(\numNodes(x)), \\
      & \quad \itewa\big(\neg E(u,v) \wedge x = \Succ(\numEdges(u)), \Succ(\numNodes(x)), \\
      & \quad \quad \numNodes(x)  \big)\Big) \\
    \utw{\Max}{\ins\;E}{u,v} \df &  \itewa\Big(\Max = \numEdges(u) \wedge \neg E(u,v), \Succ(u),\Max \Big) 
  \end{align*}
  The $\ite{}{}{}$-construct chooses, depending on the predicate in its first argument, either the second or the third argument as result term.

  The update formula for the designated query symbol $Q$ is as follows:
   \begin{align*}
    \uf{Q}{\ins\;E}{u,v}{x} \df  \ut{\numEdges}{\ins\;E}{u,v}{x} = \utw{\Max}{\ins\;E}{u,v}
   \end{align*}
   
  The update terms and and the update formula for deletions are very similar.
    \qed
\end{example}

\subsection{Lower Bounds for Unary Functions}

In this section we generalize the lower bounds obtained so far as follows.

\begin{theorem}\label{theorem:unaryDynQF_lower_bound}
  \clique{(k+2)} ($k \geq 1$) cannot be maintained under insertions by a $k$-ary $\DynProp$-program with unary auxiliary functions.
\end{theorem}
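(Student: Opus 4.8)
The plan is to mimic the proof of Theorem~\ref{theorem:DynProp_lower_bound}, replacing the Substructure Lemma (Lemma~\ref{lemma:substruccor}) and the Ramsey-like Lemma~\ref{lemma:ramseyfact} by variants that cope with unary auxiliary functions. First I would establish a \emph{Substructure Lemma for unary functions}: the only change in its proof is that an update \emph{term} evaluated at a tuple $\vec c$ reads not just $\vec c$, the modified tuple $\vec d$ and the constants, but also their images under the (current) auxiliary functions; hence the relevant substructure must now be \emph{closed} under these functions and contain the constants (closedness is preserved by the updates, so it suffices to require it initially). Since the functions are updated by terms of some fixed depth $t$, and since the modification sequences used in Theorem~\ref{theorem:DynProp_lower_bound} insert only the $\binom{k+1}{2}$ clique edges among $b_1,\dots,b_{k+1}$ (a number depending on $k$ alone), the value of the $0$-ary query relation after such a sequence depends only on the initial data inside the \emph{bounded-depth closure} (``cone'') $Y \df \mathrm{cl}_{d}(\{b_1,\dots,b_{k+1}\}\cup \mathrm{const})$ for $d \df t\binom{k+1}{2}$. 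Thus it suffices to find tuples $\vec b \in B$ and $\vec b'\in B'$ whose cones $Y,Y'$ are isomorphic as $\schema$-structures via $b_i\mapsto b'_i$ fixing the constants; then the program returns the same answer for $\alpha$ and $\beta$ although one graph has a $(k+2)$-clique and the other does not.

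The heart of the argument is therefore an analogue of Lemma~\ref{lemma:ramseyfact} producing such $\vec b,\vec b'$. The decisive point is that the tower level of the Ramsey bound must not increase: for the disparity in Lemma~\ref{lemma:ramseyfact} to survive I must still homogenise at \emph{arity $k$}, obtaining a set of size $\Omega(\klog{k-1}{n})$ that beats the $O(\sqrt{\klog{k-1}{n}})$ bound of Corollary~\ref{corollary:ramseybounds}(b); naively encoding a function fact $f(x)=y$ as a binary relation would push the homogenisation to arity $\max(k,2)$ and destroy the gap precisely in the base case $k=1$ (that is, for \clique{3}). The unary functions are instead tamed by preprocessing steps, each costing only a polynomial factor of the domain and hence absorbed by the iterated logarithm: (i) a functional-graph Ramsey step selecting a large $A_1\subseteq A$ on which every auxiliary function either acts as the identity or maps out of $A_1$; (ii) an \Erdos--Rado sunflower step applied to the bounded-size cones $\mathrm{cl}_d(\{a\})$, $a\in A_1$, yielding a large subfamily whose cones share a common core $Z$ and are otherwise disjoint (here I use that for unary functions $\mathrm{cl}_d$ distributes over unions, so the cone of a tuple is the union of the single-element cones together with $\mathrm{cl}_d(\mathrm{const})$); and (iii) promoting the bounded set $Z\cup \mathrm{cl}_d(\mathrm{const})$ to named constants, which does \emph{not} raise the arity.

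On the resulting domain I would run the structure-Ramsey theorem (Theorem~\ref{theorem:orderedramseybound}, via Corollary~\ref{corollary:orderedramseyboundb}) at arity $k$ over an enriched schema whose relations additionally record, for each tuple of function-words of length $\le d$ applied to the arguments, the original $k$-ary relations; each such relation stays $k$-ary, and there are boundedly many. By construction all ordered $k$-tuples of the homogeneous set $A'$ receive the same enriched type, so—exactly as in the relational case—the cones of any two ordered $(k+1)$-subsets of $A'$ are isomorphic: the isomorphism fixes the promoted core and the constant-cone and maps the disjoint petal of $b_i$ onto that of $b'_i$, while the enriched types guarantee that all relations among the bounded function-iterates match. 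Applying the disparity (S2) inside $A'$ then yields $\vec b\in B$ and $\vec b'\in B'$ with isomorphic cones, and the function Substructure Lemma gives the contradiction as in Theorem~\ref{theorem:DynProp_lower_bound}.

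The step I expect to be the main obstacle is exactly the tension in the previous two paragraphs: controlling the \emph{adversarially initialised} unary functions well enough that two selected tuples have genuinely isomorphic closures, while keeping every Ramsey argument at arity $k$. Getting the petals to attach to the shared core in a uniform way—so that the petal-to-core and petal-to-petal relations match under $b_i\mapsto b'_i$—is what forces both the sunflower step and the promotion of $Z$ to constants; verifying that these together make the cone-isomorphism go through, and that all incurred losses are only polynomial (so the iterated-logarithm gap of Lemma~\ref{lemma:ramseyfact} is preserved, in particular for \clique{3}), is the technically delicate part.
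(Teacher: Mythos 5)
Your central idea coincides with the paper's: because the auxiliary functions are unary, the entire bounded-depth neighborhood of a $k$-tuple is determined componentwise, so its complete quantifier-free type (including all equalities and relation atoms among the function-iterates of the components) can be recorded by finitely many fresh \emph{$k$-ary} relations, and the structure-Ramsey bound (Corollary \ref{corollary:orderedramseyboundb}) can then be run at arity $k$, preserving the square-root disparity against $(k+1)$-hypergraph colorings. This is precisely the paper's Lemma \ref{lemma:unaryDynQF_large_similar} (phrased there via $m$-similarity), which is combined with the Substructure Lemma for \DynQF (Lemma \ref{lemma:substruclemmafun}, quoted from earlier work rather than re-proved) exactly as you describe. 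However, your preprocessing steps (i)--(iii) do not occur in the paper, and for $k \ge 2$ they are superfluous: $m$-similarity asks only for a \emph{bijection} between neighborhoods, not disjointness, so intersections between the cones of different components are simply recorded in the type rather than forbidden; and every atom over the cones of a $(k+1)$-set touches the cones of at most $k$ of its elements (equalities touch $2 \le k$ cones, relations have arity at most $k$), so homogeneity of the $k$-types already forces the two selected $(k+1)$-sets to be similar. (A small bookkeeping slip: the required depth is multiplicative, roughly $t^{\binom{k+1}{2}}$ rather than $t\binom{k+1}{2}$; harmless, since it is a constant.)

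The genuine gap is in the case you yourself single out as the crux, $k=1$, i.e.\ \clique{3}, and your machinery does not close it. Sunflower-disjoint petals plus a promoted core decide all cross-cone \emph{equalities} from unary data, but they do not decide cross-cone atoms of the binary input relation $E$. Concretely, the initialization mapping (which belongs to the hypothetical program, so the proof must defeat every choice of it) may set $f(a)$ to be a node of $C = \allsubsets{A}{2}$, namely the hyperedge $\{a, z(a)\}$ for some injection $z$ on $A$. Then $E(f(b_1), b_2)$ holds iff $z(b_1) = b_2$ and $\{b_1, b_2\} \in B$; here both arguments lie in distinct petals, the atom is not an equality, and no enriched \emph{unary} type (even relative to the promoted constants) records it --- yet precisely these atoms must agree for the two selected pairs before the Substructure Lemma can be applied, so the cone isomorphism you need at $k=1$ does not follow from steps (i)--(iii). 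Repairing this needs construction-specific work, e.g.\ running the sunflower argument on cones enlarged by the elements of the hyperedges they contain. Your loss accounting also fails exactly at $k=1$: there the arity-$1$ homogenization bound is the pigeonhole bound $\Omega(n)$, not an iterated logarithm, so the sunflower's polynomial loss $n^{1/T}$ (with $T$ the cone-size bound) is not absorbed; measured against Corollary \ref{corollary:ramseybounds}(b) for graphs, which only guarantees monochromatic cliques of size $O(\sqrt{n})$, the disparity vanishes as soon as $T > 2$, and you would additionally need the standard $O(\log n)$ probabilistic coloring of graphs, which the paper's argument never has to invoke.
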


\begin{theorem} \label{theorem:unaryDynQF_lower_bound_EAFO}
  There is a $\QFO[\exists^*\forall^*]$-definable query which cannot be maintained by a $\DynProp$-program with unary auxiliary functions.
\end{theorem}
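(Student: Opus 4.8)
My plan is to obtain the result from the proof of Theorem~\ref{theorem:DynProp_lower_bound_EAFO} by applying exactly the upgrade that turns Theorem~\ref{theorem:DynProp_lower_bound} into Theorem~\ref{theorem:unaryDynQF_lower_bound}: namely, replacing the \substruclemma and the purely relational Ramsey argument of Lemma~\ref{lemma:ramseyfact} by their function-aware counterparts. Concretely, I would keep the same query $\varphi \df \exists x \forall y \big(E(s,x) \wedge (E(y,t) \rightarrow E(x,y))\big)$ over the graph schema with the two constants $s,t$, and the same initial input graph over $\domain \df A \disjointunion C$ with $C \df \allsubsets{A}{k+1}$: the edges $(b,b_i)$ for every hyperedge $b = \{b_1,\dots,b_{k+1}\} \in B$, together with the edges $(s,b)$ for $b \in B$. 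The two candidate modification sequences remain $\alpha$ (insert the edges $(b_i,t)$ for some $b \in B$) and $\beta$ (insert the edges $(b'_i,t)$ for some $b' \in B'$). On the underlying graph, $\alpha$ makes $\varphi$ true --- the node $b \in C$ is an out-neighbour of $s$ and is adjacent to all in-neighbours $b_1,\dots,b_{k+1}$ of $t$ --- while $\beta$ leaves $\varphi$ false, since the only out-neighbours of $s$ lie in $B$ and no member of $B$ is adjacent to all of $b'_1,\dots,b'_{k+1}$ (that would force it to equal $b' \notin B$). A program maintaining $\varphi$ must thus report different Boolean answers, and the aim is to contradict this.

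The first ingredient is a \substruclemma valid for unary auxiliary functions. The obstruction functions create is that a quantifier-free update term may follow function pointers and so read auxiliary data of elements outside a substructure, so substructures are no longer closed under the update computation. The remedy, exactly as in the proof of Theorem~\ref{theorem:unaryDynQF_lower_bound}, is to work only with substructures that are closed under all auxiliary functions and contain the constants; then every function term evaluated inside such a substructure stays inside it, and for two closed isomorphic substructures $\restrict{\state}{A}$ and $\restrict{\state}{B}$ the usual argument survives: any $\pi$-respecting modification sequences keep $\restrict{\updateState{P}{\alpha}{\state}}{A}$ and $\restrict{\updateState{P}{\beta}{\state}}{B}$ isomorphic, so the $0$-ary query relation takes the same value in $\updateState{P}{\alpha}{\state}$ and $\updateState{P}{\beta}{\state}$.

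The second ingredient is the Ramsey step producing two isomorphic function-closed substructures realizing the two configurations. In place of Lemma~\ref{lemma:ramseyfact} I would invoke its function-aware version from Theorem~\ref{theorem:unaryDynQF_lower_bound}, where the type of a tuple additionally records how the tuple and the constants behave under the unary functions. The decisive point, which I would cite from that proof, is that incorporating the functions does not collapse the disparity between the tower upper bound (S1) and the tower lower bound (S2): one still obtains a function $f$ such that, for large $A$, condition (S2) supplies two hyperedges $b \in B$ and $b' \in B'$ contained in a common set $A'$ on which the function-enriched type is uniform, whence $\{s,t\} \cup b$ and $\{s,t\} \cup b'$ generate isomorphic function-closed substructures under a map $\pi$ with $\pi(s)=s$, $\pi(t)=t$ and $\pi(b_i)=b'_i$.

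The main obstacle, and the only work beyond faithfully transcribing the proof of Theorem~\ref{theorem:DynProp_lower_bound_EAFO}, is this control of the unary functions in the Ramsey step: one must ensure that $b$ and $b'$ agree not merely on their relational type but close up, together with $s$ and $t$, to isomorphic substructures, and that the insertions of $(b_i,t)$ and $(b'_i,t)$ really are $\pi$-respecting modifications on these closed substructures (in particular that $t$, as a constant, lies in both closures and is fixed by $\pi$, and that the function orbits of $b$ and of $b'$ do not interfere). Granting the function-aware \substruclemma and such $b,b'$ from the Ramsey step, the contradiction is immediate: the program \prog must return the same answer on $\updateState{P}{\alpha}{\state}$ and $\updateState{P}{\beta}{\state}$, whereas $\varphi$ is true on the first graph and false on the second.
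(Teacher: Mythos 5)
Your outer shell --- the same query $\varphi$, the same initial graph over $\domain = A \disjointunion C$, the same modification sequences $(\alpha)$ and $(\beta)$, and the truth-value analysis of $\varphi$ after each --- is correct and matches the paper. The genuine gap is in your first ingredient. The paper's proofs of Theorem~\ref{theorem:unaryDynQF_lower_bound} and Theorem~\ref{theorem:unaryDynQF_lower_bound_EAFO} do \emph{not} work with substructures that are closed under the auxiliary functions, so the machinery you propose to ``cite'' from that proof is not what it contains; and no closure-based argument can work. Your closed-substructure lemma is sound in itself, but it can never be applied: the initialization mapping is arbitrary, so the program may initialize a unary function to be the successor function of a linear order on the whole domain. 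Then the closure of $\{s,t\}\cup b$ is a chain ending in the unique fixed point of $\Succ$, any isomorphism of two such closures must commute with $\Succ$ and is therefore the identity, and so a map with $\pi(b_i)=b'_i$ for $b \neq b'$ never exists. Put differently, ``isomorphic function-closed substructures'' amounts to the two tuples agreeing on terms of \emph{every} nesting depth, which partitions tuples into infinitely many classes --- no Ramsey argument with finitely many colors can deliver that. The interference of orbits that you flag as the ``main obstacle'' is thus not something to be checked; it is the reason this framework has to be abandoned.

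What the paper actually does is weaken the requirement from isomorphism of closures to the bounded-depth notion of $m$-similarity: $A\approx_m^{\pi,\calS,\calT} B$ only asks that $\pi$ respect relations and function terms of nesting depth at most $m$ on the $m$-neighborhoods. The Substructure Lemma for \DynQF (Lemma~\ref{lemma:substruclemmafun}) states that for each program and each bound $\ell$ on the length of the modification sequences there is a \emph{finite} $m$ such that $m$-similarity before the updates implies $0$-similarity --- hence isomorphism of the restricted substructures and equal query bits --- after any $\pi$-respecting sequences of length at most $\ell$. Since $(\alpha)$ and $(\beta)$ have fixed length, $m$ is a constant, so there are only finitely many $m$-similarity types; Lemma~\ref{lemma:unaryDynQF_large_similar} encodes the $m$-similarity type of a $k$-tuple as a relational type over an enriched schema and applies Corollary~\ref{corollary:orderedramseyboundb} to produce a set $A'$ of size $\Omega(\klog{k-1}{n})$ all of whose $\prec$-ordered $k$-tuples are $m$-similar. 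The disparity with the hypergraph-coloring lower bound (condition (S2)) then yields $b \in B$ and $b' \in B'$ inside $A'$ exactly as in Theorem~\ref{theorem:DynProp_lower_bound_EAFO}, and the \DynQF Substructure Lemma gives the contradiction. So the theorem and your overall plan of ``upgrade both lemmas'' are fine, but the upgrade you describe is not the one that works.
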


The proofs are along the same lines as the proofs of \mbox{Theorem \ref{theorem:DynProp_lower_bound}} and Theorem \ref{theorem:DynProp_lower_bound_EAFO}. Instead of the Substructure Lemma for \DynProp a corresponding lemma for $\DynQF$ from \cite{GeladeMS12,ZeumeS15reach} is used. This Substructure Lemma is slightly more involved as it requires to exhibit isomorphic substructures that, additionally, have similar neighbourhoods. Before stating the Substructure Lemma for \DynQF and  proceeding with the proof we repeat some useful notions \mbox{from \cite{GeladeMS12,ZeumeS15reach}}.  

For the following definitions we fix two structures $\calS$ and $\calT$ with domains $S$ and $T$ over schema $\schema$. Here, and in the rest of this section, all auxiliary schemas are the disjoint union of a set $\relSchema$ of relation symbols and a set $\funSchema$ of function symbols. Denote by $\Terms{\schema}{m}$ the set of terms of nesting depth at most $m$ with function symbols from $\funSchema$. 

The \emph{$m$-neighborhood}  $\nb{A}{\state}{m}$ of a set $A \subseteq S$ is the set of all elements of $S$ that can be obtained by applying a term of nesting depth at most $m$ to a vector of elements from $A$. More precisely $\nb{A}{\state}{m}$ is the set 
\[
\{\sem{t}{(\state,\beta)}\mid t\in \Terms{\schema}{m}\text{ and } \beta(x)\in A, \text{for every variable $x$ in $t$}\}.
\]

While for the Substructure Lemma for $\DynProp$ it is sufficient to consider two isomorphic substructures, the Substructure Lemma for $\DynQF$ also takes their neighborhoods into account. The neighborhoods need to be similar in the following sense. Two subsets $A \subseteq S$, $B \subseteq T$ are \textit{$m$-similar}, if there is a  bijection \mbox{$\pi: \nb{A}{\calS}{m} \rightarrow \nb{B}{\calT}{m}$} such that
\begin{itemize}
 \item the restriction of $\pi$ to $A$ is a bijection of $A$ and $B$,
  \item $\pi$ satisfies the equation $\pi(t^\state(\vec a)) = t^\calT(\pi(\vec a))$ for all $t \in \Terms{\funSchema}{m}$ and all $\vec a$ over $A$, and
  \item $\pi$ preserves $\relSchema$ on $\nb{A}{\calS}{m}$.
\end{itemize}

We write $A\approx_m^{\pi,\calS,\calT} B$ to indicate that $A$ and $B$ are $m$-similar via $\pi$ in $\calS$ \mbox{and $\calT$}. 
Two tuples $(a_1,\ldots,a_p)$ and $(b_1,\ldots,b_p)$ are $m$-similar if $\{a_1,\ldots,a_p\}\approx_m^{\pi,\calS,\calT} \{b_1,\ldots,b_p\}$ via the isomorphism $\pi$ that maps $a_i$ to $b_i$, for every $i\in\{1,\ldots,p\}$. Note that if $A\approx_0^{\pi,\calS,\calT} B$, then $\restrict{\calS}{A}$ and $\restrict{\calT}{B}$ are $\relSchema$-isomorphic by the first and third property.

\begin{lemma}[Substructure lemma for \DynQF\cite{GeladeMS12, ZeumeS15reach}]\label{lemma:substruclemmafun}
  Let $\prog$ be a \DynQF program and let $\ell$ be some number. There is a number $m \in \N$ such that for all states $\calS$ and $\calT$ of $\prog$  with domains $S$ \mbox{and $T$}; and all subsets  $A$ and $B$ of $S$ and $T$, respectively, the following holds. If $A\approx_{m}^{\pi,\calS,\calT} B$, then $A\approx_0^{\pi,\updateState{P}{\alpha}{\calS},\updateState{P}{\beta}{\calT}} B$, for all $\pi$-respecting modification sequences $\alpha$ and $\beta$  on $A$ and $B$ of length at \mbox{most $\ell$}.
\end{lemma}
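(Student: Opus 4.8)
The plan is to prove the statement by induction on the length of the modification sequences, reducing everything to a single update step in which the required neighborhood depth shrinks by a controlled amount. Let $c$ be the maximal nesting depth of any update term occurring in $\prog$, together with the maximal term-depth occurring in the atoms of the quantifier-free update formulas for the relation symbols; this $c$ depends only on $\prog$. The number $m$ will be chosen large enough as a function of $c$ and $\ell$: since, as computed below, one step requires the old depth to dominate $c$ times the new depth, it suffices to iterate the map $j \mapsto c(j+1)$ exactly $\ell$ times from $0$ and stay below $m$, for instance $m \df (c+1)^{\ell+1}$. The guiding intuition, as sketched before the lemma, is that a quantifier-free update term computing the new value of a function at a tuple only inspects the modified tuple $\vec d$, the constants, and elements reachable from its arguments and $\vec d$ by terms of depth at most $c$; since all modifications act on $A$ (resp.\ $B$), the tuple $\vec d$ lies over $A$ and everything stays inside the neighborhoods of $A$ and $B$.

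\emph{Single-step claim.} I would first show: if $A \approx_{m'}^{\pi, \calS, \calT} B$ and $\alpha_1, \beta_1$ are $\pi$-respecting single modifications on $A$ and $B$, then $A \approx_{m''}^{\pi, \calS', \calT'} B$ for $\calS' \df \updateState{P}{\alpha_1}{\calS}$, $\calT' \df \updateState{P}{\beta_1}{\calT}$ and any $m''$ with $c(m''+1) \le m'$. The heart is that the updated functions are interpreted by update terms over the \emph{old} schema. Hence evaluating a term $t \in \Terms{\funSchema}{m''}$ in $\calS'$ at a tuple $\vec a$ over $A$ unfolds, by composing each function layer with its depth-$\le c$ update term, into the evaluation of a single old term $s$ of nesting depth at most $cm''$ applied to $\vec a$ and to the modified tuple $\vec d$ (which is over $A$). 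Because $\pi(\vec d)$ is the argument of $\beta_1$, the same unfolding on the $\calT$-side produces $s$ applied to $\pi(\vec a)$ and $\pi(\vec d)$. As $cm'' \le m'$, the old similarity yields $\pi(s^{\calS}(\vec a, \vec d)) = s^{\calT}(\pi(\vec a), \pi(\vec d))$, which is exactly the commutation equation $\pi(t^{\calS'}(\vec a)) = t^{\calT'}(\pi(\vec a))$ demanded by $\approx_{m''}$. The same computation shows that $\pi$ maps $\nb{A}{\calS'}{m''}$ bijectively onto $\nb{B}{\calT'}{m''}$ and restricts to a bijection of $A$ and $B$.

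\emph{Relation preservation.} For the third clause I would argue analogously: membership of a tuple $\vec e$ over $\nb{A}{\calS'}{m''}$ in a relation $R^{\calS'}$ is decided by the quantifier-free formula $\uf{R}{\delta}{\vec d}{\vec e}$ evaluated in $\calS$, whose atoms apply $\relSchema$- and equality symbols to terms of depth at most $c$ over $\vec e$, $\vec d$, and the constants. Expressing $\vec e$ through old terms of depth $\le cm''$ over $A$, every such atom becomes an old $\relSchema$-atom or equality over $\nb{A}{\calS}{m'}$, on which $\pi$ is truth-preserving since $c(m''+1) \le m'$; the constants lie in every neighborhood and are fixed by $\pi$. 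Hence the whole Boolean combination has the same value on both sides, so $\pi$ preserves $R^{\calS'}$ on $\nb{A}{\calS'}{m''}$. This establishes all three clauses of $A \approx_{m''}^{\pi, \calS', \calT'} B$, and iterating the single-step claim $\ell$ times, starting from $A \approx_{m}^{\pi, \calS, \calT} B$ and ending at $A \approx_0^{\pi, \updateState{P}{\alpha}{\calS}, \updateState{P}{\beta}{\calT}} B$, finishes the proof; the choice of $m$ guarantees the depth stays nonnegative throughout.

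The step I expect to be the main obstacle is the careful bookkeeping of the neighborhood depth. The update terms compose multiplicatively rather than additively, so a naive additive budget fails; one must track that each step blows the depth up by a factor of roughly $c$, pick $m$ to absorb $\ell$ such factors at once, and simultaneously ensure that the modified tuple $\vec d$ and the constants never escape the neighborhood of $A$ — which is precisely why the modifications are required to act only on $A$ and $B$. Arranging the quantifiers correctly (a single $m$ fixed in advance for all states, all subsets, and all sequences of length $\le \ell$) is the delicate part of the argument.
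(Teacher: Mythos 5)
The paper itself does not prove this lemma: it is imported (slightly rephrased) from \cite{GeladeMS12, ZeumeS15reach}, and the only in-paper remark is that the uniformity of $m$ --- its independence of the states $\calS$ and $\calT$ --- follows from the proof given in \cite{ZeumeS15reach}. So there is no internal proof to compare against; your argument has to be measured against the standard one from those works, and it is essentially that argument: induction on the length of the modification sequence, with a single-step claim in which the similarity depth shrinks multiplicatively because update terms of bounded depth compose, and a final choice of $m$ as the $\ell$-fold iterate of $j \mapsto c(j+1)$. Your bookkeeping ($c(m''+1) \le m'$, so that both the unfolded terms and the atoms applied to them stay inside $\nb{A}{\calS}{m'}$), your use of the $\pi$-respecting hypothesis to keep the modified tuple $\vec d$ and its image $\pi(\vec d)$ over $A$ and $B$, and the fact that your $m$ depends only on $\prog$ and $\ell$ (which is exactly the rephrased, uniform version the paper states) are all correct.

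One detail you gloss over: in \DynQF, update terms may contain if-then-else constructs, so unfolding a new term of depth $m''$ does not literally produce \emph{a single old term $s$ of nesting depth at most $cm''$} --- it produces a case distinction over old terms, guarded by quantifier-free conditions whose atoms apply relation symbols to old terms. This is not a gap in substance: the same relation-preservation argument you use for the third clause shows that $\pi$ preserves the truth of every guard, provided your constant $c$ is also taken to bound the term depth occurring inside the if-then-else conditions of update terms, not only inside the update formulas for relation symbols. Then both sides resolve every case distinction identically, the value on each branch is given by the same old term, and your single-step claim --- and hence the whole induction --- goes through.
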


The lemma is slightly rephrased in comparison to \cite{ZeumeS15reach}. Here, the number $m$ is independent of the states $\calS$ and $\calT$. However, this follows immediately from the proof of the lemma in \cite{ZeumeS15reach}.
  
In order to apply the Substructure Lemma, it is necessary to find similar substructures. To this end the following analogon of Corollary \ref{corollary:orderedramseyboundb} for structures with unary functions can be used. Recall that \mbox{$\klog{k}{n}$ denotes $\log(\log(\ldots (\log n) \ldots))$} with $k$ many $\log$'s.
\begin{lemma}\label{lemma:unaryDynQF_large_similar}
  Let $\schema$ be a $k$-ary schema whose function symbols are of arity at most 1; and let $m \in \N$ be an arbitrary number. Then there is a function \mbox{$g \in \Omega(\klog{k-1}{n})$} such that for every $\schema$-structure $\state$ with domain $S$ and every linear order $\prec$ on $S$, there is a subset $S' \subseteq S$ of size $g(|S|)$ such that all $\prec$-ordered $k$-tuples over $S'$ are $m$-similar.
\end{lemma}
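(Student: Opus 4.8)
The plan is to reduce the statement to the purely relational ordered Ramsey bound of Corollary~\ref{corollary:orderedramseyboundb} by encoding all the functional information about $m$-neighborhoods that is relevant for $m$-similarity into a new relational structure whose maximal arity is still at most $k$. The decisive point is that all function symbols are unary, so every term $t \in \Terms{\funSchema}{m}$ reads exactly one variable. Consequently the $m$-neighborhood of a set $A = \{a_1, \ldots, a_k\}$ consists precisely of the elements $t^\state(a_i)$ for depth-at-most-$m$ unary terms $t$ and indices $i$, together with the (boundedly many) values of constant terms. First I would fix the finite set $T$ of all such terms; its cardinality depends only on $\schema$ and $m$.

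Next I would build a relational schema $\schema^*$ over $S$ as follows. For every relation symbol $R \in \relSchema$ of arity $r$, every choice of terms $t_1, \ldots, t_r \in T$, and every assignment of their (single) variables to positions among $x_1, \ldots, x_k$, introduce a relation symbol whose interpretation in the new structure $\state^*$ records whether $\state \models R(t_1(x_{j_1}), \ldots, t_r(x_{j_r}))$ holds of the corresponding tuple; likewise introduce, for each pair $t, t' \in T$, a relation recording each term-equality $t(x_i) \eqh t'(x_j)$ (and the analogous atoms involving constant terms). Because each term reads a single variable, every such atom mentions at most $r \leq k$ distinct positions, so each new relation has arity at most $k$ and hence $\schema^*$ has maximal arity at most $k$; if desired one may add a dummy $k$-ary relation so that $\state^*$ can be treated as a genuinely $k$-ary structure.

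The key step is to prove that the $\schema^*$-atomic type of a $\prec$-ordered $k$-tuple $\vec a$ in $\state^*$ determines its $m$-similarity type in $\state$. Given two $\prec$-ordered $k$-tuples $\vec a$ and $\vec b$ of the same $\schema^*$-type, I would define $\pi$ on $\nb{A}{\state}{m}$ by $\pi(t^\state(a_i)) \df t^\state(b_i)$, fixing constant values. The equality relations of $\schema^*$ guarantee that $\pi$ is well-defined and bijective: if two neighborhood elements coincide at $\vec a$ then the matching equality atom forces the images to coincide at $\vec b$, and conversely. The restriction of $\pi$ to $\{a_1, \ldots, a_k\}$ is the bijection $a_i \mapsto b_i$ (witnessed by the trivial term, using that $\prec$-orderedness makes all $a_i$, and all $b_i$, pairwise distinct), the commutation identity $\pi(t^\state(a_i)) = t^\state(\pi(a_i))$ holds by construction for all $t \in \Terms{\funSchema}{m}$, and the relational atoms of $\schema^*$ ensure that $\pi$ preserves $\relSchema$ on the neighborhoods. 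Thus $\vec a$ and $\vec b$ are $m$-similar via this $\pi$.

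Finally I would apply Corollary~\ref{corollary:orderedramseyboundb} to $\state^*$ with the given order $\prec$, obtaining a function $g \in \Omega(\klog{k-1}{n})$ and a $\prec$-ordered $\schema^*$-clique $S' \subseteq S$ of size $g(|S|)$; by definition all $\prec$-ordered $k$-tuples over $S'$ share the same $\schema^*$-type, and by the previous step they are therefore pairwise $m$-similar, which is exactly what is claimed. I expect the main obstacle to be the third paragraph: carefully checking that the relational type genuinely captures $m$-similarity (especially the well-definedness and bijectivity of $\pi$) while simultaneously keeping the arity of $\schema^*$ bounded by $k$. This arity control is precisely where unary-ness of the functions is essential, since a single term can never pull more than one variable into any atom; with binary functions the same encoding would blow up the arity and the bound would degrade.
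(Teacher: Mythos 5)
Your proposal is correct and takes essentially the same route as the paper's proof: the paper also reduces to Corollary~\ref{corollary:orderedramseyboundb} by building a purely relational structure over the same domain whose $k$-ary atomic types capture $m$-similarity (it packages the encoding as one $k$-ary relation $R_\gamma$ per $m$-similarity type $\gamma$, i.e.\ per quantifier-free type of the $m$-neighborhood vector, whereas you use one relation per term-atom and term-equality pattern --- the same information, packaged differently). In both arguments the essential use of unariness is identical: every term reads a single variable, so the neighborhood data of a $k$-tuple decomposes component-wise and the auxiliary relational schema stays $k$-ary.
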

\begin{proof}
  The idea is to construct, from the structure $\calS$, a purely relational structure $\calT$ such that the type of a tuple $\vec a$ in $\calT$ encodes the type of the whole $m$-neighborhood of $\vec a$ in $\calS$. Then Corollary \ref{corollary:orderedramseyboundb} is applied to the structure $\calT$ in order to obtain $S'$.

  For the construction of $\calT$ we need some notions from the proof of \mbox{Theorem 5.4} in \cite{ZeumeS15reach}. Let $t_1, \ldots, t_\ell$ be the lexicographic enumeration of $\Terms{\schema}{m}$ with respect to some fixed order of the function symbols.   Let the \textit{$m$-neighborhood vector} $\nbv{c}{\state}{m}$ of an element $c$ in $\calS$ be the tuple $(c, t_1(c), \ldots, t_l(c))$. For a tuple $\vec c = (c_1, \ldots, c_p)$, the $m$-neighborhood vector $\nbv{\vec c}{\state}{m}$ of $\vec c$ is the tuple $(\nbv{c_1}{\state}{m}, \ldots, \nbv{c_p}{\state}{m})$.  

  The \emph{$m$-similarity type} of a $k$-ary tuple $\vec a$ is the (quantifier-free) $\schema$-type of the $m$-neighborhood tuple $\nbv{\vec a}{\calS}{m}$ of $\vec a$. Observe that for fixed $m$, $k$ and $\schema$ there are only finitely many similarity types. Denote the set of all such similarity types by $\Gamma$. Further observe that two tuples $\vec a$ and $\vec b$ with the same $m$-similarity type are $m$-similar. This is certified by the bijection that maps $\nbv{\vec a}{\calS}{m}$ to  $\nbv{\vec b}{\calS}{m}$ component-wise.

  For the construction of $\calT$ we assume, without loss of generality, that the schema of $\calS$ contains the equality symbol $=$. The structure $\calT$ is over the same domain as $\calS$ and uses the schema $\schema_\Gamma$ which contains a $k$-ary relation $R_\gamma$ for every $k$-ary similarity type $\gamma \in \Gamma$. A relation $R^\calT_\gamma$ contains all tuples $\vec a$ whose similarity type in $\calS$ is $\gamma$.
  
  Then, by Corollary \ref{corollary:orderedramseyboundb}, $\calT$ contains an $\prec$-ordered $\schema$-clique $S'$ of size $\Omega(\klog{k-1}{|S|})$.  We show that all $\prec$-ordered $k$-tuples over $S'$ are $m$-similar in the structure $\calS$. Therefore let $\vec a$ and $\vec b$ be two such tuples. By definition of $S'$ they have the same type in $\calT$ and therefore, by definition of $\calT$, their neighborhood vectors $\nbv{\vec a}{\calS}{m}$ and  $\nbv{\vec b}{\calS}{m}$ have the same type in $\calS$. Hence, by the observation from above, the tuples $\vec a$ and $\vec b$ are $m$-similar.
\end{proof}

\begin{proofof}{Theorem \ref{theorem:unaryDynQF_lower_bound} and Theorem \ref{theorem:unaryDynQF_lower_bound_EAFO}}
  The proofs are along the same lines as the proofs of \mbox{Theorem \ref{theorem:DynProp_lower_bound}} and Theorem \ref{theorem:DynProp_lower_bound_EAFO}. The only difference is that here we use Lemma \ref{lemma:unaryDynQF_large_similar} in order to obtain the set $A'$. The contradiction is then obtained by using the Substructure Lemma for $\DynQF$ and the modification sequences ($\alpha$) and ($\beta$) from \mbox{Theorem \ref{theorem:DynProp_lower_bound}} and Theorem \ref{theorem:DynProp_lower_bound_EAFO}, respectively.
\end{proofof}

  \subsection{Discussion of Binary Auxiliary Functions}

A natural question is whether the lower bounds transfer to $k$-ary auxiliary functions with $k \geq 2$. We conjecture that they do, but we will argue that the techniques used so far are not sufficient for proving lower bounds for binary auxiliary functions.

The fundamental difference between unary and binary auxiliary functions is that, on the one hand, unary functions can access elements that depend either on the tuple that has been modified in the input structure or on the auxiliary tuple under consideration but not on both. On the other hand binary functions can access elements that depend on both tuples.

A consequence is that binary $\DynQF$ can maintain every boolean graph property when the domain is large with respect to the actually used domain. We make this more precise. In the following we assume that all domains $\domain$ are a disjoint union of a modifiable domain $\domain^+$ and a non-modifiable \mbox{domain $\domain^-$}, and that modifications may only involve tuples over $\domain^+$. Auxiliary data, however, may use the full domain. A dynamic complexity class $\calC$ \emph{profits from padding} if every boolean graph property can be maintained whenever the non-modifiable domain is sufficiently large in comparison to the modifiable domain\footnote{Note that this type of padding differs from the padding technique used by Patnaik and Immerman for maintaining a \PTIME-complete problem in $\DynFO$ \cite{PatnaikI97}.}.

Above we have seen that $\DynProp$ with unary auxiliary functions does not profit from padding. 

\begin{theorem}\label{theorem:padding}
  Binary \DynQF profits from padding.
\end{theorem}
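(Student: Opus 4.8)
The plan is to exploit the defining feature of binary functions highlighted just before the statement: a binary function can access an element that depends on \emph{both} the modified tuple and the stored auxiliary data simultaneously. I would therefore maintain the whole current graph in compressed form, as a single ``code'' element of the large domain, and recompute this code after every modification by a short quantifier-free update term. Concretely, write $\domain=\domain^+\disjointunion\domain^-$ with $|\domain^+|=n$; since modifications only touch $\domain^+$, the relevant graph $G$ always lives on $\domain^+$, and there are at most $N\df 2^{n^2}$ such (directed) graphs. ``Sufficiently large'' will mean $|\domain|\ge N+n^2$. First I would fix, as part of the arbitrary initialization, an injective encoding $\mtext{enc}$ assigning a distinct code in $\domain$ to every graph on $\domain^+$, together with an injective pairing assigning to every $(u,v)\in\domain^+\times\domain^+$ a distinct ``pair-code'' disjoint from the codes.

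Next I would set up the auxiliary data. The program keeps a nullary auxiliary function (constant) $c$ holding $\mtext{enc}(G)$ for the current graph $G$; a fixed binary function $\mtext{pair}$ realizing the pairing; fixed binary functions $\mtext{stepIns}$ and $\mtext{stepDel}$ satisfying $\mtext{stepIns}(\mtext{enc}(G),\mtext{pair}(u,v))=\mtext{enc}(G+(u,v))$ and the analogous identity for deletions; and a fixed unary relation $\mtext{Answer}$ containing exactly the codes of graphs satisfying the property $\query$. On $\ins_E(u,v)$ the update term for $c$ is $\mtext{stepIns}(c,\mtext{pair}(u,v))$ and the update formula for the boolean query symbol $Q$ is $\mtext{Answer}(\mtext{stepIns}(c,\mtext{pair}(u,v)))$, with the symmetric terms for $\del_E$ using $\mtext{stepDel}$. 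These are quantifier-free, of nesting depth two, and use only binary functions. Crucially, $\mtext{pair}$, $\mtext{stepIns}$, $\mtext{stepDel}$ and $\mtext{Answer}$ do not depend on the evolving graph, while $c$ is initialized to $\mtext{enc}(G_0)$, a value determined solely by the input database; hence a single fixed program (same symbols and update rules for all domain sizes) suffices, and all the semantic content is supplied by the arbitrary initialization mapping, which is exactly why the theorem is stated for arbitrary initialization.

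Correctness then follows by a straightforward induction on the modification sequence: if $c$ holds $\mtext{enc}(G)$ before a modification, the chosen term recomputes $\mtext{enc}$ of the modified graph (recall that under the semantics both the new $c$ and the new $Q$ are evaluated in the old state, so both read the old code consistently), and $Q$ records membership in $\query$ via $\mtext{Answer}$. The main obstacle, and the conceptual heart of the argument, is that updating the code must combine three pieces of information --- the stored code and the two endpoints $u,v$ --- which no \emph{unary} function can do; the resolution is to first collapse $(u,v)$ into one element with the binary function $\mtext{pair}$ and then feed it, together with $c$, to the binary function $\mtext{stepIns}$. The only price is a domain large enough to host the exponentially many codes, i.e.\ precisely the padding hypothesis. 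This also explains cleanly why the technique is genuinely about binariness and matches the earlier observation that $\DynProp$ with unary auxiliary functions does \emph{not} profit from padding.
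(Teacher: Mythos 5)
Your proof is correct, and its core is the same as the paper's: use the arbitrary initialization to precompute, on the large domain, a code element for every possible graph on $\domain^+$, a fixed unary relation marking the codes of graphs satisfying $\query$, and a constant pointing to the code of the current graph, updated by fixed binary functions evaluated in the old state. The only genuine difference is how you split the inherently ternary step (old code plus two endpoints $\mapsto$ new code) into binary functions. The paper first proves the ternary case and then \emph{curries}: a binary function maps $(u, c_G)$ to an intermediate element $c_{G,u,\ins}$, and a second binary function maps $(v, c_{G,u,\ins})$ to $c_{G+(u,v)}$; this requires the padding domain to also host the intermediate elements, roughly $(2n+1)\cdot 2^{n^2}$ elements in total. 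You instead \emph{pair the endpoints first}: a graph-independent binary pairing collapses $(u,v)$ into one of $n^2$ pair-codes, and a single step function maps (code, pair-code) to the new code. Your variant is slightly more economical (about $2^{n^2}+n^2$ elements suffice) and keeps the auxiliary gadget for the modification independent of the evolving graph, whereas the paper's currying presentation makes explicit a generic arity-reduction trick (ternary to binary via intermediate elements) that transfers to higher-arity functions. Both routes establish the theorem equally well, and your explicit remark that all update terms are evaluated in the old state, so the new $c$ and new $Q$ read the old code consistently, is a point the paper leaves implicit.
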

\begin{proof}
  First we show that ternary \DynQF profits from padding. Let $\query$ be an arbitrary boolean graph property. In the following we construct a ternary $\DynQF$ program $\prog$ which maintains $\query$ if $2^{|\domain^+|^2} = |\domain^-|$. The idea is to identify $\domain^-$ with the set of all graphs over $\domain^+$, that is $\domain^-$ contains an \mbox{element $c_G$} for every graph $G$ over $\domain^+$. A unary relation $R_\query$ stores those elements of $\domain^-$ that correspond to graphs with the property $\query$. Finally the program maintains a pointer $p$ to the element in $\domain^-$ corresponding to the current graph over $\domain^+$. The pointer is updated upon edge modification by using ternary functions $f_\ins$ and $f_\del$ initialized by the initialization mapping in a suitable way. 

  The program $\prog$ is over schema $\schema = \{Q, p, f_\ins, f_\del, R_\query\}$ where $p$ is a constant, $f_\ins$ and $f_\del$ are ternary function symbols, $R_\query$ is a unary relation symbol and $Q$ is the designated query symbol.
  
  We present the initialization mapping of $\prog$  first. The initial state $\state$ for a graph $H$ is defined as follows. The functions $f_\ins$ and $f_\del$ are independent of $H$ and defined via
  \begin{align*}
    f^\state_\ins(a,b,c_G) &= c_{G+(a,b)} \\
    f^\state_\del(a,b,c_G) &= c_{G-(a,b)}
  \end{align*}
  for $a, b \in \domain^+$ and $c_G \in \domain^-$. For all other arguments the value of the functions is arbitrary. Here $G+(a,b)$ and $G-(a,b)$ denote the graphs obtained by adding the edge $(a,b)$ to $G$ and removing the edge $(a,b)$ from $G$, respectively. The relation $R_\query^\state$ contains all $c_G$ with $G \in \query$. Finally the constant $p^\state$ points to $c_H$. 
    
  It remains to exhibit the update formulas. After a modification, the \mbox{pointer $p$} is moved to the node corresponding to the modified graph, and the query bit is updated accordingly:
    \begin{align*}
      \ut{p}{\ins}{u,v}{} &= f_\ins(u,v,p) \quad \quad \quad &\ut{Q}{\ins}{u,v}{} &= R_\query(f_\ins(u,v,p)) \\
      \ut{p}{\del}{u,v}{} &= f_\del(u,v,p) &\ut{Q}{\del}{u,v}{} &= R_\query(f_\del(u,v,p))
    \end{align*}

  Now we sketch how to modify this construction for binary $\DynQF$. The binary $\DynQF$ program maintains $\query$ on an extended non-modifiable domain that contains 
    \begin{itemize}
      \item an element $c_G$ for every graph $G$ over $\domain^+$, and
      \item elements $c_{G,a, \ins}$ and $c_{G,a, \del}$ for every graph $G$ over $\domain^+$ and every $a \in \domain^+$.
    \end{itemize}
  The intuition is that when an edge $(a,b)$ is inserted into the graph $G$ then the pointer $p$ is moved from $c_G$ to the element $c_{G+(a,b)}$ using the intermediate element $c_{G,a, \ins}$. 

  For insertion modifications the binary $\DynQF$ program maintaining $\query$ uses two binary functions $f_\ins$ and $s_\ins$ that are initialized as
  \begin{align*}
    f^\state_\ins(a,c_G) &= c_{G,a, \ins}\\
    s^\state_\ins(b,c_{G,a,\ins}) &= c_{G+(a,b)}
  \end{align*}
  for $a, b \in \domain^+$ and $c_G, c_{G,a, \ins} \in \domain^-$. For all other arguments the value of the functions is arbitrary. 

  When an insertion occurs, the pointer and the query bit are updated via 
  \begin{align*}
    \ut{p}{\ins}{u,v}{} &= s_\ins(v, f_\ins(u,p))\\
    \ut{Q}{\ins}{u,v}{} &= R_\query(s_\ins(v, f_\ins(u,p)))
  \end{align*}
  
  The update formulas and terms for deletions are analogous.   

\end{proof}

Hence the ability to profit from padding distinguishes binary $\DynQF$ and $\DynProp$ extended by unary functions. Although the proof of the preceding theorem requires the non-modifiable domain to be of exponential size with respect to the modifiable domain, the construction also explains why the lower bound technique from the previous sections cannot be immediately applied to binary $\DynQF$. In the lower bound construction only tuples over the set $A$ are modified, while tuples containing elements from $C = \allsubsets{a}{k}$ are not modified. Thus, by treating  $C$ as a non-modifiable domain, it can be used to store information as in the proof above. As the modification sequences used in the lower bounds are of length $k^2$, finding similar substructures in structures with binary auxiliary functions becomes much harder.

  \section{Conclusion and Future Work}
    In this work we exhibited a precise dynamic descriptive complexity characterization of the $k$-clique query when only insertions are allowed. The characterization implies an arity hierarchy for graph queries for \DynProp under insertions. Further we exhibited a very simple $\QFO[\exists^*\forall^*]$-property which is not maintainable in $\DynProp$. We also discussed the limit of our proof methods.

While proving lower bounds for full $\DynFO$ --- a major long-term goals in dynamic descriptive complexity --- might be really hard to achieve, we believe that the following goals are suitable for both developing new lower bound methods and for further improving the current methods.

\begin{goal}
  Prove general quantifier-free lower bounds for insertions and deletions for the reachability query and the $k$-clique query.
\end{goal}

It is known that both queries cannot be maintained in  binary \mbox{$\DynProp$ \cite{ZeumeS15reach}}. We conjecture that neither the $3$-clique query nor the reachability query can be maintained in $\DynProp$ under deletions.

\begin{goal}
  Find a general framework for proving quantifier-free lower bounds.\end{goal}

\begin{goal}
  Find a   query that cannot be maintained in binary $\DynQF$.
\end{goal}

In this and previous work on lower bounds in the dynamic descriptive complexity setting, the theorems of Ramsey and Higman played an important role for lower bound proofs. Therefore it appears to be promissing to study the applicability of other combinatorical tools in this context.

 \section*{References}   
 \bibliographystyle{plain}
 \bibliography{bibliography}
\end{document}